\documentclass[11pt]{article}

\usepackage[top=1in,bottom=1in,left=1in,right=1in]{geometry}
\usepackage[T1]{fontenc}
\usepackage{fullpage}
\usepackage{amsthm}

\usepackage{amssymb}
\usepackage{amsmath}
\usepackage{breqn}

\usepackage{enumitem}
\usepackage{subfigure}

\usepackage{algorithm}
\usepackage{algpseudocode}

\usepackage{xcolor}

\usepackage{float}
\usepackage{authblk}

\definecolor{DarkGreen}{rgb}{0,0.6,0}

\usepackage[colorlinks=true]{hyperref}

\newtheorem{theorem}{Theorem}
\newtheorem{lemma}{Lemma}[section]
\newtheorem{corollary}[lemma]{Corollary}
\newtheorem{invariant}[lemma]{Invariant}

\newtheorem{fact}[lemma]{Fact}

\theoremstyle{remark}
\newtheorem{remark}{Remark}

\algnewcommand{\AND}{\textbf{ and }}
\algnewcommand{\OR}{\textbf{ or }}

\allowdisplaybreaks

%\newcommand{\bproof}{\noindent{\bf Proof: }}
%\newcommand{\bcproof}{\noindent{\bf Proof of Claim: }}
%\newcommand{\eproof}{\hfill $\Box $\\}
%\newcommand{\enproof}{\hfill $\Box $}
%\newcommand{\ecproof}{\hfill $\diamondsuit$\\}
%\newcommand{\eprooftwo}{\hfill $\diamondsuit\ \Box$\\}
%\newcommand{\xecproof}{\hfill $\diamondsuit$}
%\newcommand{\elproof}{\hfill $\triangle$\\}
%\long\def\xclaim #1 {\bigskip\noindent{\bf Claim } {\it #1}\bigskip}

\newcommand{\poly}{\operatorname{poly}}
\newcommand{\polylog}{\operatorname{polylog}}
\newcommand{\rb}[2]{\raisebox{#1 mm}[0mm][0mm]{#2}}
\newcommand{\istrut}[2][0]{\rule[- #1 mm]{0mm}{#1 mm}\rule{0mm}{#2 mm}}

\algnewcommand\comment[1]{\hfill$\triangleright$ #1}

\newenvironment{enumerate2}
{\begin{enumerate}\setlength{\itemsep}{-0.05cm}\vspace{-0.1cm}}
{\vspace{-0.05cm}\end{enumerate}}

\def\twofigs #1{\hbox to \textwidth{#1}}

\newcommand{\EQ}{\mathsf{Eq}}
\newcommand{\SetD}{\mathsf{SetDisj}}
\newcommand{\SetI}{\mathsf{SetInt}}
\newcommand{\ExistsEQ}{\exists\mathsf{Eq}}
\newcommand{\perr}{p_{\mathrm{err}}}

\newcommand{\zero}[1]{\makebox[0mm][l]{$\displaystyle #1$}}

\newcommand{\SetDisjointness}{\textsf{SetDisjointness}}
\newcommand{\SetIntersection}{\textsf{SetIntersection}}
\newcommand{\EqualityTesting}{\textsf{EqualityTesting}}
\newcommand{\ExistsEqual}{\textsf{ExistsEqual}}
\newcommand{\CONGEST}{\textsf{CONGEST}}
\newcommand{\LOCAL}{\textsf{LOCAL}}

\newcommand{\dist}{\operatorname{dist}}

\newcommand{\M}{\mathcal{M}}
\newcommand{\hM}{\widehat{\mathcal{M}}}
\newcommand{\D}{\mathcal{D}}
\newcommand{\hD}{\widehat{\mathcal{D}}}
\newcommand{\tD}{\widetilde{\mathcal{D}}}
\newcommand{\X}{\mathcal{X}}
\newcommand{\Y}{\mathcal{Y}}
\newcommand{\W}{\mathcal{W}}
\newcommand{\Z}{\mathcal{Z}}
\newcommand{\N}{\mathcal{N}}
\newcommand{\E}{\mathrm{E}}
\newcommand{\supp}{\mathrm{supp}}
\newcommand{\ignore}[1]{}

\newcommand{\ID}{\mathsf{ID}}

\newcommand{\mi}[2]{\mathrm{I}(#1\;;\;#2)}
\newcommand{\h}[1]{\mathrm{H}(#1)}
\newcommand{\const}{16}
\newcommand{\res}{22}

% ACCENTED NAMES

\newcommand{\Szemeredi}{Szemer\'{e}di}

\newcommand{\Komlos}{Koml\'{o}s}
\newcommand{\Saglam}{Sa\u{g}lam}
\newcommand{\Hastad}{H\r{a}stad}

\begin{document}

\title{The Communication Complexity of Set Intersection\\ and Multiple Equality Testing\thanks{An extended abstract of this paper~\cite{HuangPZZ20} was presented at the 
31st ACM-SIAM Symposium on 
Discrete Algorithms (SODA 2020).  Compared to the extended abstract,
this manuscript contains a detailed and complete proof of the lower bound, as well as new upper bounds not appearing in~\cite{HuangPZZ20}
(Sections~\ref{sect:ub-et-2} and \ref{sect:ub-et-3}, pages 28--38). 
This work was supported by NSF grants CCF-1514383, CCF-1637546, and CCF-1815316.  Authors' emails: \texttt {\{hdawei, pettie\}@umich.edu}, \texttt {\{zhangyix16, zhijun-z16\}@mails.tsinghua.edu.cn}.}}

\author[1]{Dawei Huang}
\author[1]{Seth Pettie}
\author[2]{Yixiang Zhang}
\author[2]{Zhijun Zhang}

\affil[1]{University of Michigan}
\affil[2]{IIIS, Tsinghua University}

\date{}

\maketitle

\begin{abstract}
In this paper we explore fundamental problems in randomized communication complexity such as computing Set Intersection on sets of size $k$ 
and Equality Testing between vectors of length $k$.  
\Saglam{} and Tardos~\cite{SaglamT13} and Brody et al.~\cite{BrodyCKWY16} 
showed that for these types of problems, 
one can achieve optimal communication volume of $O(k)$ bits, 
with a randomized protocol that takes $O(\log^* k)$ rounds.  
They also proved~\cite{SaglamT13,BrodyCKWY16} 
that this is one point along the optimal round-communication tradeoff curve.

Aside from rounds and communication volume, 
there is a \emph{third} parameter of interest, namely the \emph{error probability} $\perr$, which we write $2^{-E}$. 
It is straightforward to show that protocols for Set Intersection
or Equality Testing need to send 
at least $\Omega(k + E)$ bits, regardless of the number of rounds.  
Is it possible to simultaneously achieve optimality in 
all three parameters, namely 
$O(k + E)$ communication and $O(\log^* k)$ rounds?

In this paper we prove that there is no universally optimal algorithm,
and complement the existing round-communication tradeoffs~\cite{SaglamT13,BrodyCKWY16}
with a new tradeoff between rounds, communication, 
and probability of error.  In particular:
\begin{itemize}
    \item Any protocol for solving Multiple Equality Testing in $r$ rounds with failure probability $\perr = 2^{-E}$ 
    has communication volume $\Omega(Ek^{1/r})$.  
    
    \item We present several algorithms for Multiple Equality Testing (and its variants) that match or nearly match our lower bound and 
    the lower bound of ~\cite{SaglamT13,BrodyCKWY16}.

    \item Lower bounds on Equality Testing extend to Set Intersection, for every $r, k,$ and $\perr$ (which is trivial); in the reverse direction, we prove upper bounds on Equality Testing for $r, k, \perr$ imply similar upper bounds on Set Intersection with parameters $r+1, k,$ and $\perr$.
\end{itemize}

Our original motivation for considering $\perr$ as an independent
parameter came from the problem of enumerating triangles in distributed ($\CONGEST$) networks having maximum degree $\Delta$.
We prove that this problem can be solved in 
$O(\Delta/\log n + \log\log \Delta)$ time with high 
probability $1-1/\poly(n)$.
This beats the trivial (deterministic)
$O(\Delta)$-time algorithm and is superior to the $\tilde{O}(n^{1/3})$ algorithm of~\cite{ChangPZ19,ChangS19} when $\Delta=\tilde{O}(n^{1/3})$. 
\end{abstract}

\newpage

\section{Introduction} \label{sect:intro}

Communication Complexity was defined by Yao~\cite{Yao79} in 1979 and 
has become an indispensible tool for proving lower bounds in
models of computation in which the notions of 
\emph{parties} and \emph{communication} are not direct.
See, e.g., books and monographs~\cite{Roughgarden16,RaoY19,KushilevitzN97} 
and surveys~\cite{ChattopadhyayP10,Lovasz89} on the subject.
In this paper we consider some of the most fundamental and well-studied 
problems in this model, such as $\SetDisjointness$, $\SetIntersection$,
$\ExistsEqual$, and $\EqualityTesting$.  Let us briefly define these problems
formally since the terminology is not completely standard.

\begin{description}
\item[$\SetDisjointness$ and $\SetIntersection$.]
In the $\SetDisjointness$ problem 
Alice and Bob receive sets $A\subset U$ and $B\subset U$ where
$|A|, |B| {} \leq k$ and must determine whether 
$A\cap B = \emptyset$.
Define $\SetD(k,r,\perr)$ to be the minimum 
communication complexity of an $r$-round randomized protocol 
for this problem that errs with 
probability at most $\perr$.  
We can assume that
$|U|{}=O(k^2/\perr)$ without loss of generality.\footnote{Before the first round of communication, pick a pairwise independent 
$h : U\mapsto [O(k^2 / \perr)]$ 
and check whether $h(A)\cap h(B) = \emptyset$ with error probability $\perr/2$.
Thus, having $\SetD$ depend additionally on $|U|$ is somewhat redundant, at least when $|U|$ is large.}
The input to the $\SetIntersection$ problem is the same, 
except that the parties 
must \emph{report the entire set $A\cap B$}.
Define $\SetI(k,r,\perr)$ to be the minimum 
communication complexity of an $r$-round protocol 
for $\SetIntersection$.

\item[$\EqualityTesting$ and $\ExistsEqual$.] 
In the $\EqualityTesting$ problem 
Alice and Bob hold vectors $\mathbf{x}\in U^k$ and $\mathbf{y}\in U^k$ 
and must determine, for each index $i\in [k]$, whether $x_i=y_i$ or $x_i\neq y_i$.
A potentially easier version of the problem,
$\ExistsEqual$, is to determine if there \emph{exists} at least one 
index $i\in [k]$ for which $x_i=y_i$.  
Define $\EQ(k,r,\perr)$ to be the randomized communication complexity 
of any $r$-round protocol for 
$\EqualityTesting$ that errs 
with probability $\perr$, 
and $\ExistsEQ(k,r,\perr)$ the corresponding
complexity of $\ExistsEqual$. 
Once again, we can assume that $|U|{}=O(k/\perr)$ 
without loss of generality.
\end{description}

The deterministic communication complexity of these problems is well understood~\cite{KushilevitzN97}.  
(The optimal protocol is for Alice to send her entire input to Bob.)
Although the randomized communication complexity of these problems has been studied 
extensively~\cite{HastadW07,Razborov92,KalyanasundaramS92,FederKNN95,DasguptaKS12,Nikishkin13,BuhrmanGMW13,SaglamT13,BrodyCKWY16}, 
most prior work has focused on the relationship 
between \emph{round complexity}
and \emph{communication volume}, 
and has usually not treated $\perr$ as a parameter independent of $k$.

\paragraph{History.}
\Hastad{} and Wigderson~\cite{HastadW07} gave 
an $O(\log k)$-round 
protocol for $\SetDisjointness$ 
in which Alice and Bob communicate $O(k)$ bits, which
matched an $\Omega(k)$ lower bound
of Kalyanasundaram and Schnitger~\cite{KalyanasundaramS92}; see also~\cite{Razborov92,BuhrmanGMW13,DasguptaKS12}.
Feder et al.~\cite{FederKNN95} proved that $\EqualityTesting$ can be solved with $O(k)$ communication 
by an $O(\sqrt{k})$-round protocol that errs with probability
$\exp(-\sqrt{k})$.
Nikishkin~\cite{Nikishkin13} later improved their round complexity 
and error probability to $\log k$ and $\exp(-k/\polylog(k))$, respectively. 
Improving~\cite{HastadW07}, \Saglam{} and Tardos~\cite{SaglamT13}
gave an $r$-round protocol for $\SetDisjointness$
that uses $O(k\log^{(r)} k)$ communication, where
$\log^{(r)}$ is the $r$-fold iterated logarithm function.  
For $r = \log^* k$ the error probability of this protocol is $\exp(-\sqrt{k})$, coincidentally matching~\cite{FederKNN95}.  
In independent work, Brody et al.~\cite{BrodyCKWY16} gave $r$-round and $O(r)$-round
protocols for $\ExistsEqual$ and $\SetIntersection$, respectively, 
that use $O(k\log^{(r)} k)$ communication and err with probability $1/\poly(k)$.

\Saglam{} and Tardos~\cite{SaglamT13} were the first to show that this $O(k\log^{(r)}k)$ round vs.~communication tradeoff is optimal, using a combinatorial round elimination technique. 
In particular, this lower bound applies to any $\ExistsEqual$ protocol even with constant error probability.
Independently, Brody et al.~\cite{BrodyCKWY14,BrodyCKWY16}
established the same lower bound tradeoff for $\ExistsEqual$,
assuming the error probability is at most $1/\poly(k)$.
Brody et al.~\cite{BrodyCKWY16} also introduced a \emph{randomized} reduction 
from $\SetIntersection$ to $\EqualityTesting$, which errs with probability 
$\exp(-\tilde{O}(\sqrt{k}))$, i.e., it cannot be applied when the desired total error
probability $\perr$ is sufficiently small.

\begin{table}
    \centering
    \begin{tabular}{l|l|l|l|l}
{\bf Problem}             & {\bf Commun.} & {\bf Rounds}        & {\bf Error Probability} & {\bf Notes}\\\hline\hline
\EqualityTesting    & $O(k)$        & $O(\sqrt{k})$ & $\exp(-\sqrt{k})$ & \cite{FederKNN95}\\\hline
\EqualityTesting    & $O(k)$        & $\log k$ & $\exp(-k/\polylog(k))$ & \cite{Nikishkin13}\\\hline
\SetDisjointness    & $O(k)$        & $O(\log k)$   &   Constant    & \cite{HastadW07}\\\hline
\SetDisjointness    & $O(k\log^{(r)} k)$        & $r$   & $\ge \exp(-\sqrt{k})$ & \cite{SaglamT13}\\\hline
\ExistsEqual        &                           & $r$   &       & \\\cline{1-1}\cline{3-3}
\SetIntersection    & \rb{2.5}{$O(k\log^{(r)} k)$}        & $O(r)$ & \rb{2.5}{$1/\poly(k)$}  & \rb{2.5}{\cite{BrodyCKWY16}}\\\hline
\ExistsEqual        &&&&\\
\SetDisjointness   & \rb{2.5}{$O(k + Ek^{1/r})$}     & \rb{2.5}{$r + \log^*(k/E)$}    &    & \\\cline{1-3}
\EqualityTesting   & $O(k + Ek^{1/r}\cdot \log r$  &               & \ \ \ $2^{-E}$   & {\bf new} \\  
\ \ \ \ \ and        & \hfill $ +\, Er\log r)$   & \rb{2.5}{$r + \log^*(k/E) \;\; [+1]$}  &                     & \\\cline{2-3}
{[\SetIntersection]}  & $O(k+E)$  &  $\log k$ &   & \\\cline{1-5}
\multicolumn{5}{l}{\bf Lower Bounds\istrut{7}}\\\hline
\SetDisjointness    & $\Omega(\sqrt{k})$    & $\infty$ &   Constant    & \cite{BabaiFS86}\\\hline
\SetDisjointness    & $\Omega(k)$           & $\infty$ &   Constant    & \cite{KalyanasundaramS92}\\\hline
\ExistsEqual    & $\Omega(k\log^{(r)} k)$ & $r$     &  Constant     & \cite{SaglamT13}\\\hline
\ExistsEqual    & $\Omega(k\log^{(r)} k)$ & $r$     &  $1/\poly(k)$ & \cite{BrodyCKWY16}\\\hline
\ExistsEqual    & $\Omega(Ek^{1/r})$  &   $r$     &   $2^{-E}$ & {\bf new}\\\hline\hline
    \end{tabular}
    \caption{Upper and Lower bounds on $\SetDisjointness, \SetIntersection, \EqualityTesting,$ and $\ExistsEqual$.  Via trivial reductions, lower bounds on $\ExistsEqual$ extend to all four problems, and upper bounds on $\SetIntersection$ extend to all four problems.  
    From Theorem~\ref{thm:Equivalence-Int-Eq},
    the upper bounds on $\SetIntersection$ and $\SetDisjointness$ follow from those of 
    $\EqualityTesting$
    and $\ExistsEqual$, respectively, 
    $+1$ round of communication.
    The log-star function is defined as 
    $\log^*(x) = \min\{i : \log^{(i)}(x) \leq 1\}$, e.g.,
    $\log^*(k/E)=0$ if $E\ge k$.
    }
    \label{tab:history}
\end{table}

\subsection{Contributions}

First, we observe that a simple \emph{deterministic} reduction
shows that, up to one round of communication, 
$\SetIntersection$ is equivalent to $\EqualityTesting$ for any $\perr$,
and $\SetDisjointness$ is equivalent to $\ExistsEqual$ for any $\perr$.  
Theorem~\ref{thm:Equivalence-Int-Eq} is proved in Appendix~\ref{sect:reductions}; 
it is inspired by the randomized reduction of Brody et al.~\cite{BrodyCKWY16}.

\begin{theorem}\label{thm:Equivalence-Int-Eq}
For any parameters $k\ge 1,r\ge 1,$ and $\perr = 2^{-E}$, 
it holds that
\begin{align*}
\EQ(k,r,\perr) &\leq \SetI(k,r,\perr), 
& \SetI(k,r+1,\perr) &\leq \EQ(k,r,\perr) + \zeta,\\
\ExistsEQ(k,r,\perr) &\leq \SetD(k,r,\perr), 
&\SetD(k,r+1,\perr) &\leq \ExistsEQ(k,r,\perr) +\zeta,
\end{align*}
where $\zeta = O(k + \log E)$.
\end{theorem}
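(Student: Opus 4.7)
The inequalities $\EQ(k,r,\perr)\le \SetI(k,r,\perr)$ and $\ExistsEQ(k,r,\perr)\le \SetD(k,r,\perr)$ are zero-cost, deterministic instance reductions. Given an $\EqualityTesting$ input $\mathbf{x},\mathbf{y}\in U^k$, Alice and Bob locally form size-$k$ sets over the universe $[k]\times U$ by setting $A=\{(i,x_i):i\in[k]\}$ and $B=\{(i,y_i):i\in[k]\}$. Since $(i,v)\in A\cap B$ iff $x_i=y_i=v$, any $r$-round $\SetIntersection$ protocol recovers the full coordinatewise equality vector with no extra rounds, communication, or error; the same encoding gives $A\cap B=\emptyset$ iff no coordinate agrees, handling $\ExistsEqual$/$\SetDisjointness$.

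\paragraph{Hard directions.}
For $\SetI(k,r+1,\perr)\le \EQ(k,r,\perr)+O(k+\log E)$ (and analogously for $\SetD$/$\ExistsEQ$), the plan is to prepend a single ``alignment'' round to the equality testing protocol. Adapting and sharpening the randomized reduction of Brody et al.~\cite{BrodyCKWY16}, in round~$1$ Alice sends Bob an $O(k+\log E)$-bit description $\sigma$ that assigns each of her (at most) $k$ elements to a distinct slot in $[k]$. Alice then forms $\mathbf{x}=(x_1,\dots,x_k)$ by listing $A$ in $\sigma$-order; Bob uses $\sigma$ to form $\mathbf{y}=(y_1,\dots,y_k)$ such that $y_i$ is the element of $B$ meant to be compared against $x_i$ (or a distinguished sentinel). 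By construction, $x_i=y_i$ iff the element in Alice's slot $i$ lies in $A\cap B$. Running the $\EQ$ protocol on $(\mathbf{x},\mathbf{y})$ in the remaining $r$ rounds therefore recovers $A\cap B$ using $\EQ(k,r,\perr)$ bits; because round~$1$ is deterministic, the combined error stays at $\perr$. The analogous argument for $\SetD\to\ExistsEQ$ uses the identical alignment.

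\paragraph{Main obstacle.}
The delicate step is designing $\sigma$ so that it is simultaneously (i) describable in $O(k+\log E)$ bits, even though $|U|$ can be as large as $O(k^2/\perr)=k^{O(1)}\cdot 2^{E}$---so $\sigma$ certainly cannot list $A$ explicitly; (ii) fully deterministic, so it contributes no error on top of the $\perr$ inherited from the $\EQ$ black box (a critical improvement over the Brody et al.\ randomized alignment, whose intrinsic error was $\exp(-\tilde O(\sqrt{k}))$ and thus incompatible with arbitrarily small $\perr=2^{-E}$); and (iii) rich enough that Bob can set up $\mathbf{y}$ consistently with Alice's $\mathbf{x}$ so that coordinatewise equality exactly encodes set membership in $A\cap B$. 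I expect the $O(k)$ bits to pay for an ``occupancy pattern'' plus deterministic tie-breaking, while the additive $\log E$ bits pay for parameters---e.g.\ an alphabet size or an encoding of $E$ itself---that depend only doubly-logarithmically on $\perr$. This alignment construction is, I believe, the main technical content of the appendix; everything else is bookkeeping showing that an $r$-round $\EQ$ protocol composes with the alignment into an $(r+1)$-round $\SetI$ protocol of the claimed error and communication.
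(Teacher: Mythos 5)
Your easy directions are identical to the paper's. For the hard directions, your high-level plan — have Alice send a succinct injective "alignment" $\sigma:U\to[k]$ that is perfect on $A$, then run $\EQ$ on the aligned vectors — is indeed the paper's strategy, and you correctly identify why a \emph{deterministic} reduction is needed (the randomized reduction of Brody et al.\ has intrinsic error $\exp(-\tilde O(\sqrt k))$, incompatible with small $\perr$). But there is a concrete gap in the middle of your argument: you assert that Bob can set ``$y_i$ = the element of $B$ meant to be compared against $x_i$,'' giving a single equality test per slot, and conclude $x_i=y_i$ iff Alice's slot-$i$ element is in $A\cap B$. That fails because $\sigma$ is perfect only on $A$, not on $B$: several elements of $B$ may land in the same slot $i$. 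If $b,b'\in B$ both map to $i$, with $b\in A$ (so $x_i=b$), Bob has no way of knowing to set $y_i=b$ rather than $b'$, and ``deterministic tie-breaking'' cannot rescue this — discarding $b$ from consideration silently drops an element of $A\cap B$. So the claimed iff is false as stated.

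The paper's resolution is not tie-breaking but \emph{variable-arity tests}, and it is exactly where the extra $2k$ bits and the "Bob speaks first" convention come from. After Alice sends the $O(k+\log E)$-bit encoding of a perfect hash $h:U\to[k]$ for $A$ (an FKS/Schmidt--Siegel 2-level scheme; $|U|=O(k^2/\perr)$ gives $\log\log|U|=O(\log E + \log\log k)=O(k+\log E)$), Bob computes $B_j=B\cap h^{-1}(j)$ and transmits the occupancy vector $|B_0|,\dots,|B_{k-1}|$ (at most $2k$ bits) piggybacked on his first $\EQ$ message. Both parties now agree to run $\EQ$ on $\sum_j|B_j|\le k$ coordinates, allocating $|B_j|$ of them to slot $j$: the tests for slot $j$ compare Alice's unique element $a_j$ (or a sentinel) against \emph{each} of $B_j$'s elements. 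Only then is ``equal pair $\Leftrightarrow$ element of $A\cap B$'' literally correct. Making Bob the first speaker is what lets the occupancy information ride along without costing an extra round, which your write-up also does not flag. So the approach is on the right track, but the collision-handling step is missing and would need to be added for the proof to go through.
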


Second, we prove that in any of the four problems, 
it is impossible to simultaneously achieve communication volume 
$O(k + E)$ in $O(\log^* k)$ rounds for all $k, \perr=2^{-E}$.
Specifically, any $r$-round protocol needs $\Omega(Ek^{1/r})$ communication.
Whereas the implication of \cite{SaglamT13,BrodyCKWY16} is that optimal $O(k)$
communication is only possible with $\Omega(\log^* k)$ rounds, the implication of our
work is that optimal communication $O(k+E)$ is 
only possible with $\Omega(\log k)$ rounds, whenever $E\ge k$.

We complement our lower bounds with matching or nearly matching upper bounds.
First, we show that in any $\EqualityTesting$/$\ExistsEqual$ instance with 
$E < k$, one can, with probability $1-2^{-\Theta(E)}$, 
reduce the effective number of coordinates to $E$ using $O(k)$ communication and $\log^*(k/E)$ rounds.
Thus, we can simplify the following discussion by assuming that $E\ge k$.

We give \emph{four} distinct protocols, the first of which
solves $\EqualityTesting$ with $O(rEk^{1/r})$ communication, 
which is optimal whenever $r=O(1)$.
The remaining three protocols attempt to get rid of this extraneous
$r$ factor in different situations.  Our second protocol shows
that it \emph{is} possible to achieve $O(Ek^{1/r})$ complexity, 
but for the slightly simpler problem of $\ExistsEqual$. 
Our third protocol shows that with $O(r)$ rounds (instead of $r$ rounds)
it is possible to achieve $O(Ek^{1/r})$ communication.  In particular,
absolutely optimal communication $O(E)$ is possible with $\log k = O(r)$ communication.

Our first three protocols show that the optimal round-communication-error
tradeoff for $\EqualityTesting$ can be achieved whenever $r=O(1)$ or $r=\Omega(\log k)$,
or for any $r$ in the case of $\ExistsEqual$.
The remaining problem ($\EqualityTesting$ in $r$ rounds, $r$ between $\omega(1)$ and $o(\log k)$) 
seems to be quite difficult.  Our fourth protocol solves $\EqualityTesting$ with
$O(Ek^{1/r}\log r + Er\log r)$ communication, which for $r \in [1,\log k/\log\log k]$ is dominated
by the first term and therefore within a $\log r \leq \log\log k$ factor of optimal.
A close study of our second and fourth protocols reveals a key distinction between 
$\EqualityTesting$ from $\ExistsEqual$, which is only relevant
when the probability of error is quite small (e.g., $E\ge k$).  It is \emph{plausible} that 
$\EqualityTesting$ is asymptotically harder than $\ExistsEqual$ for many values of $r$,
and for similar reasons, that $\SetIntersection$ is asymptotically harder than $\SetDisjointness$.

\medskip

Our original interest in $\SetIntersection$ came from distributed subgraph detection in $\CONGEST$\footnote{In the $\CONGEST$ model there is a graph $G=(V,E)$ whose vertices are identified with processors and whose edges represent bidirectional communication links.  Each vertex $v$ does not know $G$, and is only initially aware of an $O(\log n)$-bit $\ID(v)$, 
$\deg(v)$, and global parameters $n \geq {}|V|$ and $\Delta \geq \max_{u\in V}\deg(u)$.  Communication proceeds in synchronized rounds; in each round, each processor can send a (different) $O(\log n)$-bit message to each of its neighbors.} networks, which has garnered significant interest in recent years~\cite{ChangS19,ChangPZ19,IzumiG17,AbboudCKL17,DruckerKO14,KorhonenR2018,FischerGKO18,CzumajK18,GonenO18}.
Izumi and LeGall~\cite{IzumiG17} proved that triangle enumeration\footnote{Every triangle (3-cycle) in $G$ must be reported by \emph{some} vertex.}
requires $\Omega(n^{1/3}/\log n)$ rounds in the $\CONGEST$ model, 
and further showed that \emph{local} triangle enumeration\footnote{Every triangle in $G$ must be reported by at least one of the three constituent vertices.  Izumi and LeGall~\cite{IzumiG17} only stated the $\Omega(n/\log n)$ lower bound but it can also 
be expressed in terms of $\Delta$.} requires $\Omega(\Delta/\log n)$ rounds in $\CONGEST$, which can be as large as $\Omega(n/\log n)$.

The most natural way to solve (local) triangle enumeration is, 
for every edge $\{u,v\} \in E(G)$, 
to have $u$ and $v$ run a 
two-party $\SetIntersection$ protocol
in which they compute $N(u)\cap N(v)$,
where $N(u) = \{\ID(x) \mid \{u,x\}\in E(G)\}$ and 
$\ID(x) \in \{0,1\}^{O(\log n)}$ is $x$'s unique identifier.
Any $r$-round protocol with communication volume
$O(\Delta)$ can be simulated in $\CONGEST$
in $O(\Delta/\log n + r)$ rounds since the message size is $O(\log n)$ bits.  
However, to guarantee a \emph{global} probability of 
success at least $1-1/\poly(n)$, the failure probability
of each $\SetIntersection$ instance must be $\perr = 2^{-E}$, $E=\Theta(\log n)$, 
which is independent of $\Delta$.
Our communication complexity 
lower bound suggests that to achieve this error probability, 
we would need $\Omega((\Delta + E\Delta^{1/r})/\log n + r)$ 
$\CONGEST$ rounds,
i.e., with $r=\log\Delta$ 
we should not be able to do better
than 
$O(\Delta/\log n + \log\Delta)$.
We prove that (local) triangle enumeration can 
actually be solved \emph{exponentially} faster,
in $O(\Delta/\log n + \log\log\Delta)$ $\CONGEST$ rounds,
without \emph{necessarily} solving every 
$\SetIntersection$ instance.

\paragraph{Organization.}
The proof of Theorem~\ref{thm:Equivalence-Int-Eq}
on the near-equivalence of $\SetIntersection/\SetDisjointness$ 
and $\EqualityTesting/\ExistsEqual$ appears in Appendix~\ref{sect:reductions}.
Section~\ref{sect:prelim} reviews concepts from information theory and communication complexity. 
In Section~\ref{sect:lb} we present new lower bounds
for both $\EqualityTesting$ and $\ExistsEqual$ that incorporate 
rounds, communication, and error probability.
Section~\ref{sect:ub} presents nearly matching upper bounds for $\EqualityTesting$ and $\ExistsEqual$,
and Section~\ref{sect:triangle} applies them to the 
distributed triangle enumeration problem.
We conclude with some open problems in Section~\ref{sect:conclusion}.

\section{Preliminaries} \label{sect:prelim}

\subsection{Notational Conventions}
The set of positive integers at most $t$ is denoted $[t]$.
Random variables are typically written as capital letters ($X,Y,M$, etc.) and the values they take on are lower case ($x,y,m$, etc.).
The letters $p, q, \mu, \D$ are reserved for 
probability mass functions (\emph{p.m.f.}). 
E.g., $\D(x)$ denotes the probability that $X = x$ whenever 
$X\sim \D$.
The \emph{support} $\supp(\D)$ of a distribution $\D$
is the  set of all $x$ for which $\D(x)>0$.
If $\X \subseteq \supp(\D)$, $\D(\X) = \sum_{x \in \X}\D(x)$. 

Many of our random variables are vectors.
If $x$ is a $k$-dimensional vector and $I\subseteq [k]$,
$x_I$ is the projection of $x$ onto the coordinates in $I$
and $x_i$ is short for $x_{\{i\}}$. 
Similarly, if $\D$ is the p.m.f. of a $k$-dimensional 
random variable, $\D_I$ is the marginal distribution
of $\D$ on the index set $I\subseteq [k]$.

Throughout the paper, 
$\log$ and $\exp$ are the 
base-2 logarithm and exponential functions,
and $\log^{(r)}$ and $\exp^{(r)}$ their 
$r$-fold iterated versions:
\[
\log^{(0)}(x) = \exp^{(0)}(x) = x, \quad
\log^{(r)}(x) = \log(\log^{(r-1)}(x)), \quad
\exp^{(r)}(x) = \exp(\exp^{(r-1)}(x)).
\]
The log-star function is defined to be
$\log^*(x) = \min\{r \mid \log^{(r)}(x) \le 1\}$.
In particular, $\log^*(x)=0$ if $x\leq 1$.

\subsection{Information Theory}

The most fundamental concept in information theory is \emph{Shannon entropy}. The Shannon entropy of a discrete random variable $X$ is defined as
\[
\h{X} = -\sum_{x \in \supp(X)} \Pr[X=x] \log \Pr[X=x].
\]
Since there may be cases in which different distributions are defined for the ``same'' random variable, 
we use $\h{p}$ in place of $\h{X}$ if $X$ is drawn from a p.m.f. $p$.  We also write 
$\h{\alpha}$, $\alpha\in (0,1)$, to be the entropy of a Bernoulli random variable with success probability $\alpha$.
In general, we freely use a random variable and its p.m.f.~interchangeably.

The \emph{joint entropy} $\h{X,Y}$ of two random variables $X$ and $Y$ is simply
\[
\h{X,Y} = -\sum_{x \in \supp(X)} \sum_{y \in \supp(Y)} \Pr[X=x \land Y=y] \log \Pr[X=x \land Y=y].
\]
This notion can be easily extended to cases of more than two random variables. Here, we state a well known fact about joint entropy.

\begin{fact}\label{fac:joint-entropy}
For any random variables $X_1,X_2,\ldots,X_n$, their joint entropy is at most the sum of their individual entropies, i.e.,
$\h{X_1,X_2,\ldots,X_n} \le \sum_{i=1}^n \h{X_i}$.
\end{fact}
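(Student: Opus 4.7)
The plan is to reduce subadditivity to the nonnegativity of the Kullback--Leibler divergence. First I would let $p(x_1,\ldots,x_n)$ denote the joint p.m.f.~of $(X_1,\ldots,X_n)$ and $q(x_1,\ldots,x_n) = \prod_{i=1}^n p_i(x_i)$ the product of its marginals $p_i$. Expanding the definitions of Shannon entropy and joint entropy and rearranging should give the standard identity
\[
\sum_{i=1}^n \h{X_i} - \h{X_1,X_2,\ldots,X_n} \;=\; \sum_{x_1,\ldots,x_n} p(x_1,\ldots,x_n)\log\frac{p(x_1,\ldots,x_n)}{q(x_1,\ldots,x_n)} \;=\; D(p\,\|\,q),
\]
interpreting the ``entropy gap'' as the relative entropy against the independent reference distribution. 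It thus suffices to show $D(p\,\|\,q)\ge 0$.

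Next I would verify $D(p\,\|\,q)\ge 0$ by Jensen's inequality applied to the concave function $\log$: since $q$ is a probability distribution,
\[
-D(p\,\|\,q) \;=\; \sum_x p(x)\log\frac{q(x)}{p(x)} \;\le\; \log \sum_x p(x)\cdot \frac{q(x)}{p(x)} \;=\; \log \sum_x q(x) \;=\; \log 1 \;=\; 0,
\]
which yields the claim. A small bookkeeping point is to handle boundary cases: terms with $p(x)=0$ contribute $0$ under the usual convention $0\log 0 = 0$, and $p_i(x_i)=0$ forces $p(x_1,\ldots,x_n)=0$ since the marginal dominates the joint, so no division by zero occurs in the sum.

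The main obstacle is essentially nothing --- this is a standard textbook fact, with two natural proof routes. An alternative would be to proceed by induction on $n$, using the base case $\h{X,Y}\le\h{X}+\h{Y}$ together with the chain rule $\h{X_1,\ldots,X_n} = \sum_i \h{X_i \mid X_1,\ldots,X_{i-1}}$ and the ``conditioning reduces entropy'' inequality $\h{X_i\mid X_1,\ldots,X_{i-1}}\le \h{X_i}$. However, since the preliminaries so far only define (joint) Shannon entropy and have not yet introduced conditional entropy, I would favor the self-contained KL-divergence argument above, which invokes only the definitions already in place plus Jensen's inequality.
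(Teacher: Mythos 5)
Your proof is correct. The paper itself offers no proof of this fact --- it is stated as a standard, well-known property of joint entropy --- so there is no argument of the authors' to compare against; your self-contained derivation via nonnegativity of the Kullback--Leibler divergence (Gibbs' inequality, proved with Jensen's inequality applied to the concave logarithm) is a perfectly valid way to establish it, and your handling of the zero-probability boundary cases is right: $p(x)>0$ forces every marginal $p_i(x_i)>0$, so no problematic terms arise. One tiny bookkeeping point: after Jensen you get $\log\sum_{x:\,p(x)>0} q(x)$, and this restricted sum may be strictly less than $1$ rather than equal to it; since it is at most $1$ the conclusion $-D(p\,\|\,q)\le \log 1 = 0$ still follows, so the argument is unaffected. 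Your alternative route via the chain rule and ``conditioning reduces entropy'' would also work (the paper does define conditional entropy immediately after this fact), but the KL argument you chose is equally standard and self-contained.
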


The \emph{conditional entropy} of $Y$ conditioned on another random variable $X$, denoted $\h{Y \mid X}$, measures the expected amount of extra information required to fully describe $Y$ if $X$ is known. It is defined to be
\begin{align*}
\h{Y \mid X}
& = \h{X,Y} - \h{X} \\
& = -\sum_{x \in \supp(X)} \Pr[X=x] \sum_{y \in \supp(Y)} \Pr[Y=y \mid X=x] \log \Pr[Y=y \mid X=x] \:\ge\: 0,
\end{align*}
which can be viewed as a weighted sum of entropies of a number of conditional distributions.

Finally, the \emph{mutual information} $\mi{X}{Y}$ between two random variables $X$ and $Y$ quantifies the amount of information that is revealed about one random variable through knowing the other one:
\begin{align*}
\mi{X}{Y}
& = \h{X} - \h{X \mid Y} \\
& = \h{X} + \sum_{y \in \supp(Y)} \Pr[Y=y] \sum_{x \in \supp(X)} \Pr[X=x \mid Y=y] \log \Pr[X=x \mid Y=y].
\end{align*}

\subsection{Communication Complexity}

Let $f(x, y)$ be a function over domain $\X \times \Y$, and consider any two-party communication protocol $Q(x, y)$ 
that computes $f(x, y)$, where one party holds $x$ and the other holds $y$. 
The \emph{transcript} of $Q$ on $(x, y)$ is defined to be the concatenation of all messages exchanged by the two parties, 
in order, as they execute on input $(x, y)$. 
The \emph{communication cost} of $Q$ is the maximum transcript length produced by $Q$ over all possible inputs.

Let $Q_d$ be a deterministic protocol for $f$ and suppose $\mu$ 
is a distribution over $\X \times \Y$. 
The \emph{distributional error probability of $Q_d$ with respect to $\mu$} is the probability $\Pr_{(x, y) \sim \mu}[Q_d(x, y) \neq f(x, y)]$. For any $0 < \epsilon < 1$, the \emph{$(\mu, \epsilon)$-distributional deterministic communication complexity} of the function $f$ is the minimum communication cost of any protocol $Q_d$ that has distributional error probability at 
most $\epsilon$ with respect to the distribution $\mu$. 

A randomized protocol $Q_r(x, y, w)$ also takes a public random string $w \sim \W$ as input. The error probability of $Q_r$ is calculated as $\max_{(x, y) \in \X \times \Y}\Pr_{w \sim \W}[Q_r(x, y, w) \neq f(x, y)]$. The \emph{$\epsilon$-randomized communication complexity} of $f$ is the minimum communication cost of $Q_r$ over all protocols $Q_r$ with error probability at most $\epsilon$.

Yao's \emph{minimax principle}~\cite{Yao77} is a common starting point for lower bound proofs in randomized communication complexity. The easy direction of Yao's minimax principle states that the communication cost of the best deterministic protocol specific to any particular distribution is at most the communication cost of any randomized protocol on its worst case input. 

\begin{lemma}[Yao's minimax principle~\cite{Yao77}]
Let $f:\X \times \Y \mapsto \Z$ be the function to be computed.
Let $D_{\mu, \epsilon}(f)$ be the $(\mu, \epsilon)$-distributional deterministic communication complexity of $f$, and let $R_{\epsilon}(f)$ be the $\epsilon$-randomized communication complexity of $f$. Then for any $0 < \epsilon < 1/2$, 
\[
\max_\mu D_{\mu, \epsilon}(f) \leq R_{\epsilon}(f).
\]
\end{lemma}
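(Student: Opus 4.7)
The plan is to fix an optimal randomized protocol $Q_r$ attaining communication cost $R_\epsilon(f)$ with error probability at most $\epsilon$, and then, for an arbitrary distribution $\mu$ on $\X \times \Y$, extract via an averaging argument over the public coin string $w$ a single deterministic protocol whose distributional error under $\mu$ is at most $\epsilon$ and whose communication cost is no larger than that of $Q_r$.

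First I would unpack the worst-case definition of randomized error: for \emph{every} input $(x,y)\in \X\times \Y$,
\[
\Pr_{w\sim \W}[Q_r(x,y,w)\neq f(x,y)] \;\le\; \epsilon.
\]
Since this bound is pointwise in $(x,y)$, taking expectation over $(x,y)\sim\mu$ preserves it, yielding
\[
\mathop{\mathrm{E}}_{(x,y)\sim\mu,\; w\sim\W}\bigl[\mathbf{1}[Q_r(x,y,w)\neq f(x,y)]\bigr] \;\le\; \epsilon.
\]
By Fubini's theorem I exchange the order of expectation to get
\[
\mathop{\mathrm{E}}_{w\sim\W}\Bigl[\Pr_{(x,y)\sim\mu}[Q_r(x,y,w)\neq f(x,y)]\Bigr] \;\le\; \epsilon.
\]

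Second, I would apply a standard averaging argument: since the average (over $w$) of the inner quantity is at most $\epsilon$, there must exist a specific $w^\ast\in\supp(\W)$ for which
\[
\Pr_{(x,y)\sim\mu}[Q_r(x,y,w^\ast)\neq f(x,y)] \;\le\; \epsilon.
\]
Fixing this $w^\ast$ converts $Q_r(\cdot,\cdot,w^\ast)$ into a deterministic protocol $Q_d$. Its transcript on any input $(x,y)$ is a transcript of $Q_r$ on $(x,y,w^\ast)$, so its communication cost is at most $R_\epsilon(f)$, and its distributional error under $\mu$ is at most $\epsilon$. By definition of $D_{\mu,\epsilon}(f)$, this gives $D_{\mu,\epsilon}(f)\le R_\epsilon(f)$. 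Since $\mu$ was arbitrary, taking the maximum over $\mu$ yields the claim.

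There is essentially no obstacle to overcome here: this is the easy direction of Yao's minimax principle, and the entire content is the pointwise-to-expectation step followed by the existence of a good fixing of the public randomness. (The hard direction, which asserts the reverse inequality and relies on linear-programming duality or von Neumann's minimax theorem, is not needed for the lower-bound applications in this paper.)
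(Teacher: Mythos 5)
Your proof is correct: the pointwise error bound, the exchange of expectations, and the averaging argument fixing a good string $w^\ast$ constitute exactly the standard proof of the easy direction of Yao's minimax principle. The paper itself states this lemma without proof (citing Yao), so there is nothing to compare against beyond noting that your argument is the canonical one and fills that gap correctly.
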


Therefore, to show a lower bound on the $\epsilon$-randomized communication complexity of a function $f$, it suffices to find a hard distribution $\mu$ on the input set and prove a lower bound for the communication cost of any deterministic protocol that has distributional error probability at most $\epsilon$ with respect to $\mu$.

\section{Lower Bounds on $\ExistsEqual$ and $\EqualityTesting$} \label{sect:lb}

In this section we prove lower bounds on $\EqualityTesting$ and $\ExistsEqual$.  Theorem~\ref{thm:lb-et} obviously follows directly from Theorem~\ref{thm:lb-ee}, 
but we prove them in that order nonetheless because Theorem~\ref{thm:lb-et} 
is a bit simpler.

\begin{theorem}\label{thm:lb-et}
Any $r$-round randomized protocol for $\EqualityTesting$ on vectors of length $k$
that errs with probability $\perr = 2^{-E}$ 
requires at least $\Omega(Ek^{1/r})$ bits of communication.
\end{theorem}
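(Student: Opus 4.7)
My plan is to apply Yao's minimax principle with a product hard distribution, then proceed by induction on $r$ via a round-elimination lemma. I take $\mu$ to be the product distribution on $U^k\times U^k$ in which each pair $(x_i,y_i)$ is drawn independently from a common marginal $\nu$ with $\Pr_{\nu}[x_i=y_i]=1/2$, choosing $|U|\geq k\cdot 2^E$ so that accidental collisions in the unequal case are negligible; thus the output $\mathbf{z}=([x_i=y_i])_{i=1}^k$ is essentially uniform on $\{0,1\}^k$, and every ``unequal'' coordinate must be certified against a large pool of candidates. By Yao's principle it suffices to lower-bound the communication of every deterministic protocol with distributional error at most $\perr$ under $\mu$.

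I would proceed by induction on $r$ with the hypothesis that any such $r$-round protocol uses $\Omega(Ek^{1/r})$ bits. For the base case $r=1$, a rectangle/fooling-set argument on the single-coordinate problem shows that each preimage of a message value must have density $O(2^{-E})$ in $U$, giving the per-coordinate information bound $I(M;x_i)\geq \Omega(E)$; the chain-rule inequality $I(M;\mathbf{x})\geq \sum_i I(M;x_i)$ (valid because $\mathbf{x}$ is a product under $\mu$) then yields $|M|\geq \Omega(kE)=\Omega(Ek^{1/1})$. For the inductive step, suppose we have an $r$-round protocol $\Pi$ with total communication $c$ and first-round length $c_1$. I would prove a round-elimination lemma: by fixing an appropriate ``typical'' value of $M_1$ and restricting attention to a subset $T\subseteq[k]$ of size $k'=\Theta(k^{(r-1)/r})$, the remaining $r-1$ rounds of $\Pi$ form a valid $(r-1)$-round \EqualityTesting{} protocol on coordinates $T$ whose error is only slightly above $\perr$. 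The inductive hypothesis then gives $c-c_1\geq \Omega(E\cdot(k')^{1/(r-1)}) = \Omega(Ek^{1/r})$, so $c=\Omega(Ek^{1/r})$ as desired.

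The main technical obstacle is the round-elimination lemma, and in particular the error bookkeeping required to iterate it $r-1$ times without blowing up the error beyond $\perr$. Since $I(M_1;\mathbf{x})\leq c_1$ and $\mathbf{x}$ is a product, the chain rule gives $\sum_i I(M_1;x_i)\leq c_1$, so for a random subset $T$ of size $k'=\Theta(k^{(r-1)/r})$ one has $\sum_{i\in T} I(M_1;x_i)\leq c_1 k'/k$ in expectation, a quantity that is small whenever $c_1$ is well below the target bound. Pinsker's inequality plus an averaging argument then produce a specific fixing $(m_1,T)$ under which the conditional distribution $\mu\mid M_1=m_1$, projected to $T$, is within total-variation distance $o(\perr)$ of the original product distribution on $T$. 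The delicate point is that each round of elimination must contribute only an \emph{additive} error increment, well below $\perr$, so that after the $r-1$ iterations the accumulated error is still small enough for the base case to apply with parameter $E'=\Theta(E)$; keeping this error budget under control in the regime $E\gg k$, where $\perr$ is vanishingly small, is the crux of the argument and ties the inductive bookkeeping together with the choice of $k'$ at each level of the recursion.
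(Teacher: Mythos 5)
Your plan is structurally quite different from the paper's, and unfortunately the difference is exactly where it breaks. The paper does \emph{not} use round elimination: it runs a single induction forward over the $r$ rounds, maintaining an invariant (their Invariant~3.5) about the entropy of the surviving distribution on a shrinking index set $I_j$ of size $k^{1-j/r}$, and the crucial technical device is that it controls the \emph{pointwise multiplicative ratio} $\hD^{(j)}(x)/\D^{(j-1)}(x)\le 2^{\gamma_j+1}$ (via Lemma~3.4, a Markov-type bound on $\log\frac{q(x)}{p(x)}$, and Invariant~3.5(2)). These ratio bounds compose multiplicatively, so the final-round error $\approx 1/t$ translates back to an initial-distribution error that is still $\gg 2^{-E}$.

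The concrete gap in your proposal is in the round-elimination error bookkeeping. You want to fix a typical $m_1$ and a subset $T$ so that the conditional distribution on $T$ is within total-variation distance $o(\perr)$ of the original, using Pinsker. But Pinsker gives you TV distance $\lesssim\sqrt{\tfrac12\sum_{i\in T}I(M_1;x_i)}\lesssim\sqrt{c_1 k'/k}$, and for this to be $o(2^{-E})$ you would need $c_1 k'/k = o(2^{-2E})$, which is $<1$ the moment $E\ge 1$ — so the bound is vacuous for any protocol that actually communicates. This is not a minor accounting subtlety you can patch; any additive/TV-based error budget is doomed in the regime $\perr=2^{-E}$ with $E$ possibly $\ge k$, which is precisely the regime the theorem is about. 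The paper sidesteps this by never asking for the hybrid distribution to be statistically close to the original; instead, it only needs a bounded likelihood ratio (a much weaker, multiplicatively composable requirement), together with lower-bounding the probability of reaching a bad transcript via Lemmas~3.7--3.8.

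Two secondary concerns. First, your choice of hard distribution with $\Pr[x_i=y_i]=1/2$ is not what the paper uses and is awkward: if $x_i,y_i$ are genuinely independent with marginal $\nu'$, then $\Pr[x_i=y_i]=\sum_a\nu'(a)^2$, so forcing this to be $1/2$ collapses the effective universe to about two elements; if instead $(x_i,y_i)$ is a correlated pair, then Alice's and Bob's inputs are no longer independent, which complicates conditioning on $M_1$. The paper uses independent uniform inputs on $[t]^k$ with $t=2^{cE}$, making equality a rare (probability $2^{-cE}$) event — rare enough to be plausible given the final conditioning, but common enough (relative to $2^{-E}$) that answering ``not equal'' everywhere is still a mistake. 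Second, your choice $|T|=k^{(r-1)/r}$ and per-round reduction are in the right spirit (the paper uses $|I_j|=k^{1-j/r}$), but as above the induction cannot close without replacing the TV-distance argument by a ratio-based one.
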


\begin{theorem}\label{thm:lb-ee}
Any $r$-round randomized protocol for $\ExistsEqual$ on vectors of length $k$
that errs with probability $\perr = 2^{-E}$
requires at least $\Omega(Ek^{1/r})$ bits of communication.
\end{theorem}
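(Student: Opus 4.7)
My plan is to prove Theorem~\ref{thm:lb-ee} via Yao's minimax principle combined with a round-elimination argument, in the spirit of~\cite{SaglamT13} but refined to track the error parameter $E$. First, I would fix a hard product distribution $\mu$ on inputs $(x, y) \in U^k \times U^k$: each coordinate is drawn independently, and coordinate $i$ is an ``equality'' coordinate (so that $x_i = y_i$ is drawn uniformly from $U$) with some small probability $p$, and otherwise $x_i$ and $y_i$ are independent uniform samples from $U$. Taking $|U|$ of order $k$ and $p$ of order $1/k$ makes the \ExistsEqual{} answer roughly balanced, so a protocol with error probability $\perr = 2^{-E}$ under $\mu$ must distinguish the two answers on almost all inputs. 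By Yao's minimax principle, it then suffices to lower bound the worst-case cost of any deterministic protocol with distributional error at most $2^{-E}$ under $\mu$.

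Next, I would induct on the round count $r$ to prove a bound of the form $\Omega(E k^{1/r})$ for every such deterministic protocol. For the base case $r = 1$, a one-round protocol consists of a single message from Alice, and a direct fooling-set or information-theoretic argument shows it must have length $\Omega(E k)$: intuitively, Bob must be able to certify a single ``equal'' coordinate among $k$ otherwise independent ones with error at most $2^{-E}$, which requires $\Omega(E)$ bits of evidence per coordinate. For the inductive step, I would take an $r$-round protocol of total cost $C$ and analyze Alice's first message $M_1$ of length $B_1$. The goal is to show that after conditioning on a typical value of $M_1$, there exists a large subset $I \subseteq [k]$ with $|I| \approx k^{(r-1)/r}$ on which the conditional distribution of $(x_I, y_I)$ is still close enough to the original product distribution that the remaining $(r-1)$-round protocol solves \ExistsEqual{} on $|I|$ coordinates with error at most $O(2^{-E})$. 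By the inductive hypothesis, this residual protocol already costs $\Omega(E |I|^{1/(r-1)}) = \Omega(E k^{1/r})$ bits, so either $B_1$ alone is $\Omega(E k^{1/r})$ or the later rounds force that cost; in either case $C = \Omega(E k^{1/r})$.

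The main technical obstacle is the round-elimination step itself, namely controlling how Alice's first message correlates the coordinates while still leaving a large ``clean'' subset on which the conditional input distribution is close to the product distribution. The natural tool is to bound $\mi{X}{M_1} \le B_1$ and then use the chain rule to find many coordinates $i$ with small individual information $\mi{X_i}{M_1}$; a rounding or Markov-type argument then converts low information into near-independence on a subset. The delicate part is the accounting: one can only afford to inflate $\perr$ by a constant factor per elimination round, so that after $r$ peelings the error budget $E$ is preserved up to a constant, and the geometric schedule $k \to k^{(r-1)/r} \to \cdots \to k^{1/r}$ recovers the stated $k^{1/r}$ factor. Making this trade-off between subset size, information loss, and error inflation tight across all $r$ rounds is the heart of the proof and the step I expect to be the most intricate.
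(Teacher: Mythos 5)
Your proposed hard distribution has a concrete flaw that already kills the base case. If $|U|=O(k)$, then Alice can deterministically send her entire vector in one message of length $k\log|U|=O(k\log k)$, after which Bob answers with zero error. This contradicts the claimed $\Omega(Ek)$ one-round bound as soon as $E\gg\log k$, which is precisely the regime where Theorem~\ref{thm:lb-ee} has content. To make the lower bound possible at all, the universe must be exponentially large in $E$; the paper takes $t=|U|=2^{cE}$ with $c=1/2$. Related to this, the paper uses a genuinely product distribution ($X$ and $Y$ both uniform on $[t]^k$, independent, so the answer is overwhelmingly ``no''), while you use a correlated distribution with forced-equal coordinates tuned to make the answer balanced. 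The product structure matters: the paper's hybrid construction modifies Alice's and Bob's distributions separately (the $\hD^{(j)}$ distributions) and relies on $X\perp Y$ to relate the error probabilities under the modified and original distributions coordinate by coordinate.

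Beyond the distribution, your overall strategy (round elimination / induction on $r$, ``in the spirit of Saglam--Tardos'') is genuinely different from the paper's. The paper does not induct on round count: it runs forward through all $r$ rounds maintaining a single invariant (Invariant~\ref{inv:Ij}) on a shrinking index set $I_j$ of size $k^{1-j/r}$, and then transfers the error backward in one pass (Lemma~\ref{lem:err-prob-reduction}, Lemma~\ref{lem:real-err-prob-ee}). The authors explicitly describe their argument as stylistically different from round elimination. The step you identify as ``the most intricate'' is indeed the crux, and it is exactly where a naive round-elimination argument breaks: after conditioning on $M_1$ and restricting to $I$, the conditional distribution is \emph{not} an instance of your $\mu$ on $|I|$ coordinates, so the inductive hypothesis does not apply as stated; one must carry a much richer invariant about the distorted distributions (the paper's pointwise ratio bounds and entropy bounds, Lemmas~\ref{lem:entropy-reduction-support}, \ref{lem:entropy-to-ratio}, \ref{lem:good-message}, \ref{lem:good-index}). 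Your error accounting of ``a constant factor per round'' is also an oversimplification: the Markov step to a good message $m_j$ costs only a constant factor, but the reweighting from $\D^{(j-1)}$ to $\hD^{(j)}$ inflates by $2^{\gamma_j}$ where $\gamma_j$ scales with the message length and can be as large as $\Theta(E/r)$; the proof works because $\sum_j\gamma_j=O(E)$, not because each factor is constant. Finally, for $\ExistsEqual$ specifically, the paper's last step (Lemma~\ref{lem:real-err-prob-ee}) needs an extra Markov argument over full transcripts to control the receiver's ability to answer based on non-$I_r$ coordinates, which has no analogue in your outline. So the plan captures the right high-level intuition, but as written the distribution is wrong and the key round-elimination step is left unaddressed, whereas the paper sidesteps induction entirely.
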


Without any constraint on the number of rounds, 
$\EqualityTesting$ trivially requires $\Omega(k)$ communication.
$\ExistsEqual$ also requires $\Omega(k)$ communication,
through a small modification to the $\SetDisjointness$ lower bounds~\cite{KalyanasundaramS92,Razborov92}.
Even when $k=1$, we need at least
$\Omega(E)$ communication to solve 
$\EqualityTesting/\ExistsEqual$ with error probability $2^{-E}$~\cite{KushilevitzN97}.
Thus, we can assume that $E = \Omega(k^{1-1/r})$, 
$k^{1/r} = \Omega(1)$, and hence $r=O(\log k)$. 
For example, some calculations later 
in our proof hold when $r \le (\log k)/6$.
When proving Theorem~\ref{thm:lb-ee}, we will further assume $E = \Omega(\log k)$ when $r=1$,
which is reasonable because of {\Saglam} and Tardos' 
$\Omega(k\log^{(r)}k) = \Omega(k\log k)$ lower bound~\cite{SaglamT13}.

\subsection{Structure of the Proof}\label{sect:desc-lb}

We consider deterministic strategies for $\ExistsEqual/\EqualityTesting$
when Alice and Bob 
pick their input vectors independently 
from the uniform distribution on $[t]^k$, 
where $t = 2^{cE}$ and $c=1/2$.
Although the probability of seeing a collision 
in any particular coordinate is small, it is still
much larger than the tolerable error probability (since $c<1$),
so it is incorrect to declare ``not equal in every coordinate''
without performing any communication.

We suppose, for the purpose of obtaining a contradiction, that there is a protocol for $\EqualityTesting$ 
with error probability $2^{-E}$ and communication complexity 
$c'Ek^{1/r}$, where $c' = c/100$.  
The length of the $j$th message 
is $l_j$, which could depend 
on the parameters ($E,r,k$, etc.) and possibly in some complicated way on the transcript of the protocol before round $j$.\footnote{In the context of $\ExistsEqual/\EqualityTesting$, it is natural to think about uniform-length messages, $l_j = c'Ek^{1/r}/r$,
or lengths that decay according to some convergent series, 
e.g., $l_j \propto c'Ek^{1/r}/2^j$ or $l_j \propto c'Ek^{1/r}/j^2$.}

Our proof must necessarily consider transcripts of the protocol that are extremely unlikely (occurring with probability close to $2^{-E}$) and also maintain a high level of uncertainty about \emph{which} coordinates of Alice's and Bob's vectors might be equal.  Consider the first message.  
Alice picks her input vector $x\in [t]^k$, 
which dictates the first message $m_1$.  Suppose, for simplicity,
that it betrays exactly $l_1/k < c'Ek^{1/r - 1}$ bits of information 
per coordinate of $x$.  Before Bob can respond with a message $m_2$ he must commit to his input, say $y$.  Most values of $y$ result in ``good'' outcomes: nearly all non-equal coordinates get detected immediately and the effective size of the problem is dramatically reduced.  We are not interested in these values of $y$, only very ``bad'' values. 
Let $I_1$ be the first $k^{1-1/r}$ coordinates (or, more generally, $k^{1-1/r}$ coordinates that $m_1$ revealed below-average information about).  With probability about
$(2^{-c'Ek^{1/r-1}})^{|I_1|} = 2^{-c'E}$, Bob picks an input
$y$ that is \emph{completely consistent} with Alice's on $I_1$,
i.e., as far as he can tell $y_i = x_i$ for every $i\in I_1$.
Rather than sample $y$ uniformly from $[t]^k$, we sample 
it from a ``hybrid'' distribution: $y_{I_1}$ is sampled from
the same distribution that $m_1$ revealed about $x_{I_1}$ 
(forcing the above event to happen with probability 1), and 
$y_{[k]\backslash I_1}$ is sampled from Bob's former 
distribution (in this case, the uniform distribution on $[t]^{k-{}|I_1|}$), conditioned on the value of $y_{I_1}$.

This process continues round by round.  Bob's message $m_2$
betrays at most 
$l_2/|I_1| {} < c'Ek^{2/r-1}$ bits of information on each coordinate of $y_{I_1}$, and there must be an index 
set $I_2 \subset I_1$ with $|I_2|{}=k^{1-2/r}$ such that, 
with probability around $2^{-c'E}$,
it is completely consistent that
$x_{I_2} = y_{I_2}$.
Alice resamples her input so that this
(rare) event occurs with probability 1, 
generates $m_3$, and continues.

At the end of this process $|I_r|{}=k^{1-r/r}=1$, and yet Alice and Bob have revealed less than the full $cE$ bits of entropy about $x_{I_r}$ and $y_{I_r}$.  Regardless of whether they report ``equal'' or ``not equal'' (on $I_r$), they are wrong with probability greater than $2^{-E}$.  Are we done?  Absolutely not!
The problem is that this strange process for sampling a possible transcript of the protocol might itself only find transcripts that occur with probability $\ll 2^{-E}$, making any conclusions we make about its (probability of) correctness moot.
Generally speaking, 
we need to show that Alice's and Bob's actions are consistent 
with events that occur with probability $\gg 2^{-E}$.

Let us first make every step of the above process a bit more formal.  It is helpful to think about Alice's and Bob's inputs not being \emph{fixed} vectors selected at time zero, 
but simply distributions over vectors that change as messages progressively reveal more information about them.
\begin{itemize}
    \item Before the $j$th round of communication, 
    the sender of the $j$th message's input is drawn 
    from a discrete distribution
    $\hD^{(j-1)}$ over $[t]^k$.  The receiver of the $j$th message's input is drawn from the distribution $\D^{(j-1)}$. 
    For example, when $j=1$, if Alice speaks first then
    her initial distribution, $\hD^{(0)}$, and Bob's initial distribution, $\D^{(0)}$,
    are both uniform over $[t]^k$.
    \item Before the $j$th round of communication both parties are aware of an index set $I_{j-1}$ such that, informally,
    (i) the distributions $\D^{(j-1)}_{I_{j-1}}$
    and $\hD^{(j-1)}_{I_{j-1}}$ are very similar, and in particular, it is consistent that their inputs are identical on $I_{j-1}$, and
    (ii) the messages transmitted so far reveal ``average'' or below-average information about these coordinates.
    For example, $I_0 = [k]$ and it is consistent with the empty transcript that Alice's and Bob's inputs are identical on every coordinate.  
    
    \item The $j$th message is a random variable $M_j \in \{0,1\}^{l_j}$. 
    In order to pick an $m_j$ according to the right distribution, the sender picks an input $x \sim \hD^{(j-1)}$
    which, together with the history $m_1,\ldots,m_{j-1}$,
    determines $m_j$.  The sender transmits $m_j$ to the receiver and promptly forgets $x$.  
    The sender's new distribution (i.e., $\hD^{(j-1)}$, conditioned on $M_j = m_j$) 
    is called $\D^{(j)}$.
    
    \item The distribution $\D^{(j)}$ may reveal information about the coordinates $I_{j-1}$ in an irregular fashion.  We find a subset $I_j \subset I_{j-1}$ of coordinates, $|I_j|{}=k^{1-j/r}$, for which the amount of information revealed by $\D^{(j)}_{I_j}$ is at most average.  The receiver of $m_j$ changes his input distribution to $\hD^{(j)}$, which is defined so that it
    basically agrees with $\D^{(j)}_{I_j}$ and
    the marginal distribution $\hD^{(j)}_{[k]\backslash I_j}$,
    conditioned on the value selected by $\D^{(j)}_{I_j}$,
    is identical to $\D^{(j-1)}_{[k]\backslash I_{j}}$.
    
    \item The reason $\D^{(j)}_{I_j}$ and $\hD^{(j)}_{I_j}$ are not \emph{identical} is due to two filtering steps.  To generate $\hD^{(j)}$, we remove points from the support that have tiny (but non-zero) probability, which may be too close 
    to the error probability.  Intuitively these rare events necessarily represent a small fraction of the probability mass.  Second, we remove points from the support if the ratio of their probability occurring under $\D^{(j)}$ over $\D^{(j-1)}$ is too high.  Intuitively, we want to conclude
    that if there is a high probability of an error occurring under $\D^{(j)}$ then the probability is also high under $\D^{(j-1)}$ (and by unrolling this further, under $\D^{(0)}$).
    This argument only works if the ratios are what we would expect, given how much information is being revealed about these coordinates by $m_j$.
    As a result of these two filtering steps, $\D^{(j)}_{I_j}(x_{I_j})$ and $\hD^{(j)}_{I_j}(x_{I_j})$ differ by at most a constant factor, 
    for any particular vector $x_{I_j} \in [t]^{|I_j|}$. 
\end{itemize}

\subsection{A Lower Bound on {\EqualityTesting}}\label{sect:lb-et}

We begin with two general lemmas about discrete probability distributions that play an important role in our proof.

Roughly speaking, Lemma~\ref{lem:entropy-reduction-support} captures and generalizes the following intuition: 
Suppose $p$ is a \emph{high entropy} distribution on some universe $U$ and $q$ is obtained from $p$ by conditioning on an event $\X \subseteq U$ such that 
$p(\X)$ is large, say some constant like $1/4$. 
If $p$'s entropy is close to $\log|U|$, then $q$'s entropy should not be 
much smaller than that of $p$.
As our proof goes on round by round, we will constantly throw away part of the input distribution's support to meet certain conditions. It is Lemma~\ref{lem:entropy-reduction-support} that guarantees that the input distributions continue to have relatively high entropy.

Lemma~\ref{lem:entropy-to-ratio} comes into play because the error probability will be calculated backward in a round-by-round manner. Suppose the old distribution ($p$) has no extremely low probability 
point and the new distribution ($q$) has almost full entropy. Lemma~\ref{lem:entropy-to-ratio} provides us with a useful tool to transfer a lower bound on the probability of any event w.r.t.~$q$ to 
a lower bound on the same event w.r.t.~$p$.  It can be seen as a version of Markov's inequality for Kullback-Leibler divergences.

\begin{lemma}\label{lem:entropy-reduction-support}
Let $p$ and $q$ be distributions defined on a universe of size $2^s$. 
Suppose both of the following properties are satisfied:
\begin{enumerate2}
\item The entropy of $p$ is $\h{p} \ge s - g$, where $g \in [0,s)$;
\item There exists $\alpha \in (0,1)$ such that $q(x) \le p(x)/\alpha$ holds for every value $x \in \supp(q)$.
\end{enumerate2}
The entropy of $q$ is lower bounded by:
\[
\h{q} \ge s - g/\alpha - \h{\alpha}/\alpha.
\]
\end{lemma}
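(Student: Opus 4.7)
The plan is to exploit condition~2 to exhibit $p$ as an explicit mixture involving $q$, then apply the grouping identity for entropy (plus concavity) to push a lower bound on $\h{p}$ through to a lower bound on $\h{q}$.

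First I would set $r(x) = (p(x) - \alpha q(x))/(1-\alpha)$ for every $x$ in the universe. By condition~2, $p(x) - \alpha q(x) \ge 0$ for $x \in \supp(q)$, and this is trivially nonnegative for $x \notin \supp(q)$. Since $p$ and $q$ sum to $1$, so does $r$, so $r$ is a bona fide probability distribution on the universe, and by construction
\[
p \;=\; \alpha \, q \,+\, (1-\alpha)\, r.
\]

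Second, I would introduce a Bernoulli indicator $B$ with $\Pr[B=1]=\alpha$, and let $X$ be sampled from $q$ if $B=1$ and from $r$ if $B=0$. Then the marginal of $X$ is exactly $p$. Expanding the joint entropy $\h{X,B}$ two ways gives
\[
\h{X} + \h{B \mid X} \;=\; \h{B} + \h{X \mid B} \;=\; \h{\alpha} + \alpha\,\h{q} + (1-\alpha)\,\h{r}.
\]
Dropping the nonnegative term $\h{B\mid X}$ on the left yields
\[
\h{p} \;\le\; \h{\alpha} + \alpha\,\h{q} + (1-\alpha)\,\h{r}.
\]

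Third, I would use the fact that $r$ is a distribution on a universe of size $2^s$, hence $\h{r} \le s$. Combining this with hypothesis~1, $\h{p} \ge s - g$, gives
\[
s - g \;\le\; \h{\alpha} + \alpha\,\h{q} + (1-\alpha)\, s,
\]
and rearranging (dividing by $\alpha > 0$) delivers exactly $\h{q} \ge s - g/\alpha - \h{\alpha}/\alpha$, as desired.

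There is no real obstacle: the only thing to verify carefully is that the residual $r$ is well-defined as a distribution, which is precisely what hypothesis~2 was designed to guarantee. The rest is the standard concavity/grouping argument for Shannon entropy, combined with the trivial bound $\h{r}\le \log|\text{universe}|$.
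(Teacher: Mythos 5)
Your proof is correct, and it is essentially the paper's proof viewed through a cleaner lens. The paper works pointwise: it applies the subadditivity inequality $x\log x^{-1}+y\log y^{-1}\ge (x+y)\log(x+y)^{-1}$ with $x=\alpha q(x)$ and $y=p(x)-\alpha q(x)$, sums over the universe, and then bounds $\sum_x\bigl(p(x)-\alpha q(x)\bigr)\log\frac{1}{p(x)-\alpha q(x)}$ by concavity. If you unpack your steps, you see that your mixture identity $\h{p}\le \h{\alpha}+\alpha\h{q}+(1-\alpha)\h{r}$ (i.e.\ $\h{X}\le\h{X,B}$) is exactly that pointwise inequality summed, and your bound $\h{r}\le s$ is exactly what the paper's concavity step accomplishes. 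So you have the same decomposition $p=\alpha q+(1-\alpha)r$ and the same two ingredients; you have simply named the residual $r$ as a distribution and invoked the grouping rule for entropy of a mixture rather than manipulating the sums by hand. This makes the argument more transparent (each inequality has a clear information-theoretic meaning) at the cost of requiring the reader to recall the chain-rule/grouping identity, while the paper's version is self-contained but more opaque. Both are fine; yours is arguably the more pedagogically appealing write-up of the same argument.
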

\begin{proof}
Let $\X$ be the whole universe. From our assumptions, the entropy of $q$ can be lower bounded as follows.
\begin{align*}
\h{q}
&= \sum_{x \in \X} q(x) \log \frac{1}{q(x)} & \mbox{Defn. of $\h{q}$.}\\
&= \frac{1}{\alpha}\sum_{x \in \X} \alpha q(x) \log \frac{1}{\alpha q(x)} + \log\alpha & \sum_{x \in \X} q(x) = 1.\\
&\ge \frac{1}{\alpha}\sum_{x \in \X} \left[p(x) \log \frac{1}{p(x)} - (p(x) - \alpha q(x)) \log \frac{1}{p(x) - \alpha q(x)}\right] + \log\alpha \\
\intertext{The previous step follows from Assumption 2
and the fact that $x\log x^{-1} + y\log y^{-1} \ge (x+y)\log (x+y)^{-1}$ for any $x,y \ge 0$. Continuing,}
&\ge  \frac{1}{\alpha}\left[s - g - \sum_{x \in \X} (p(x) - \alpha q(x)) \log \frac{1}{p(x) - \alpha q(x)}\right] + \log\alpha & \mbox{Assumption 1.}\\
&\ge \frac{1}{\alpha}\left[s - g - (1-\alpha)\log \frac{2^s}{1-\alpha}\right] + \log\alpha & \mbox{Concavity of logarithm.}\\
&=  s - \frac{g}{\alpha} + \frac{1-\alpha}{\alpha} \log (1-\alpha) + \log\alpha
\; = \; s - \frac{g}{\alpha} - \frac{\h{\alpha}}{\alpha}.
\end{align*}
\end{proof}

\begin{lemma}\label{lem:entropy-to-ratio}
Let $p$ and $q$ be distributions defined on a universe of size $2^s$. Suppose both of the following properties are satisfied:
\begin{enumerate2}
\item The entropy of $q$ is $\h{q} \ge s-g_1$, where $g_1 \in [0,s)$;
\item There exists $g_2 \ge 0$ such that $p(x) \ge 2^{-s-g_2}$ holds for every value $x \in \supp(q)$.
\end{enumerate2}
Then, for any $\alpha \in (0,1)$,
\[
\Pr_{x \sim q}\left[\frac{q(x)}{p(x)} > 2^{g_1/\alpha+g_2-(1-\alpha)\log(1-\alpha)/\alpha}\right] \le \alpha.
\]
\end{lemma}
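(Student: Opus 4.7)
The plan is to argue by contradiction. Set $T = 2^{g_1/\alpha + g_2 - (1-\alpha)\log(1-\alpha)/\alpha}$ and suppose $\beta := \Pr_{x \sim q}[q(x)/p(x) > T] > \alpha$. Let $S = \{x \in \supp(q) : q(x)/p(x) > T\}$, so $q(S) = \beta$. The strategy is to upper-bound $\h{q}$ by splitting the entropy sum over $S$ and $S^c$, combine the resulting bound with assumption~1, and obtain an inequality on $\log T$ that, via a monotonicity argument, will contradict the definition of $T$.

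For $x \in S$, assumption~2 gives $q(x) > T \cdot p(x) \ge T \cdot 2^{-s-g_2}$, so $\log(1/q(x)) < s + g_2 - \log T$, and the $S$-portion of the entropy is at most $\beta(s + g_2 - \log T)$. For $x \notin S$, I would apply Jensen's inequality to the concave function $\log$, with the normalized distribution $q/(1-\beta)$ on $S^c$ and using $|S^c| \le 2^s$, to bound the $S^c$-portion by $(1-\beta)\log(|S^c|/(1-\beta)) \le (1-\beta)(s - \log(1-\beta))$. Adding the two pieces, invoking $\h{q} \ge s - g_1$, cancelling the $s$ terms, and dividing by $\beta > 0$ yields
\[
\log T \;\le\; \frac{g_1}{\beta} + g_2 - \frac{(1-\beta)\log(1-\beta)}{\beta} \;=:\; f(\beta).
\]

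The last step is to verify that $f$ is strictly decreasing on $(0,1)$, so that $\beta > \alpha$ would force $f(\beta) < f(\alpha) = \log T$, contradicting $\log T \le f(\beta)$. The term $g_1/\beta$ is non-increasing, and for $h(\beta) := -(1-\beta)\log(1-\beta)/\beta$ a direct differentiation shows $h'(\beta)$ has the same sign as $\log_2(1-\beta) + \beta/\ln 2$, which is negative on $(0,1)$ by the elementary inequality $\ln(1-\beta) < -\beta$ (the strict case of $\ln(1+x) \le x$ at $x=-\beta$). Hence $f$ is strictly decreasing, closing the contradiction.

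The main technical point I expect to be the monotonicity step, since the closed form for $f$ is not visibly decreasing and the whole contradiction would collapse if $f$ had a local minimum in $(\alpha,1)$; reducing to $\ln(1-\beta) < -\beta$ is what makes this clean. Minor boundary cases ($g_1 = 0$, or $\beta$ close to $1$) are handled uniformly by using $|S^c| \le 2^s$ rather than the exact cardinality in the Jensen step, and by the convention $0\log 0 = 0$ in the trivial case $\beta = 1$.
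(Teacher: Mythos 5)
Your proposal is correct and follows essentially the same route as the paper's proof: argue by contradiction, split the entropy of $q$ over the set where the ratio exceeds the threshold and its complement, bound the first part via Assumption~2 and the threshold, bound the second via concavity of the logarithm, and close the contradiction using the monotonicity of $(1-\alpha)\log(1-\alpha)/\alpha$ (which the paper asserts and you verify by differentiation). The only cosmetic difference is that you keep $\log T$ symbolic and compare $f(\beta)$ with $f(\alpha)$, whereas the paper substitutes the threshold's definition and compares directly against $s-g_1$.
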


\begin{remark}
Recall the Kullback-Leibler divergence (also known as relative entropy)
is defined to be $D_{\mathrm{KL}}(q \| p) = \sum_x q(x)\log\frac{q(x)}{p(x)}$, where $\supp(q)\subseteq \supp(p)$. I.e., it is the expected value of 
$\log\frac{q(x)}{p(x)}$ when $x\sim q$.  This lemma bounds the probability that $\log\frac{q(x)}{p(x)}$ deviates too far from 
its expectation.  It is syntactically similar to Markov's inequality,
but note that Markov's inequality is inapplicable as $\log\frac{q(x)}{p(x)}$ is generally not non-negative.
\end{remark}

\begin{proof}[Proof of Lemma~\ref{lem:entropy-to-ratio}]
Let $\X_0 = \{x \in \supp(q) \mid q(x)/p(x) \le 2^{g_1/\alpha+g_2-(1-\alpha)\log(1-\alpha)/\alpha}\}$ 
and $\X_1 = \supp(q) \setminus \X_0$. 
Suppose, for the purpose of obtaining a contradiction, 
that the conclusion of the lemma is false, i.e., 
$q(\X_1) = \alpha_0$, for some $\alpha_0 > \alpha$. 
Notice that for each value $x \in \X_1$, Assumption 2 implies that
\begin{equation}\label{eqn:qp}
q(x) > p(x)\cdot 2^{g_1/\alpha+g_2-(1-\alpha)\log(1-\alpha)/\alpha}  \ge 2^{-s+g_1/\alpha-(1-\alpha)\log(1-\alpha)/\alpha}.
\end{equation}
Then we can upper bound the entropy of $q$ as follows.
\begin{align*}
\h{q}
&=  \sum_{x \in \X_0} q(x)\log\frac{1}{q(x)} + \sum_{x \in \X_1} q(x)\log\frac{1}{q(x)} & \mbox{Defn. of $\h{q}$.}\\
&< \sum_{x \in \X_0} q(x)\log\frac{1}{q(x)} + \alpha_0\left[s-\frac{g_1}{\alpha}+\frac{1-\alpha}{\alpha}\log(1-\alpha)\right] & \mbox{Eqn.~(\ref{eqn:qp}).}\\
&\le (1-\alpha_0)\log\frac{2^s}{1-\alpha_0} + \alpha_0\left[s-\frac{g_1}{\alpha}+\frac{1-\alpha}{\alpha}\log(1-\alpha)\right] & \mbox{Concavity of logarithm.}\\
&= s - \frac{\alpha_0}{\alpha} \cdot g_1 + \alpha_0\left[\frac{1-\alpha}{\alpha}\log(1-\alpha) - \frac{1-\alpha_0}{\alpha_0}\log(1-\alpha_0)\right] \\
&< s - g_1,
\end{align*}
where the last step follows from the monotonicity of $(1-\alpha)\log(1-\alpha)/\alpha$. This contradicts Assumption 1.
\end{proof}

We are now ready to begin the proof of Theorem~\ref{thm:lb-et} proper. 
Fix a round $j$ and a particular history ($m_1,\ldots,m_{j-1}$) up to round $j-1$.  We let $\mu_j(m_j)$ denote the probability 
that the $j$th message is $m_j$, if the input to the sender 
is drawn from $\hD^{(j-1)}$.
Define $\D^{(j)}[m_j]$ to be the new input distribution 
of the sender after he commits to $m_j$.
When $m_j$ is clear from context, it is denoted $\D^{(j)}$.
(The process for deriving $\hD^{(j)}$ from $\D^{(j)}$ and $\D^{(j-1)}$
on the receiver's end will be explained in detail later.)

We will prove by induction that the following Invariant~\ref{inv:Ij}
holds for each $j\in [0,r]$, where the particular values
of $I_j$, $\D^{(j)}$, $\hD^{(j)}$, and $l_1,\ldots,l_j$
depend on the transcript $m_1,\ldots,m_j$ that is sampled.
In the base case, 
Invariant~\ref{inv:Ij} clearly holds when 
$j=0,I_0=[k]$, and both $\hD^{(0)},\D^{(0)}$ are the uniform
distribution over $[t]^k$.

\begin{invariant}\label{inv:Ij}
After round $j\in [0,r]$ the partial transcript is $m_1,\ldots,m_j$,
which determines the values 
$\{l_{j'}, \hD^{(j')}, \D^{(j')}, I_{j'}\}_{j'\leq j}$. 
The index set $I_j \subseteq [k]$ satisfies all of the following:
\begin{enumerate2}
\item $|I_j|{}= k^{1-j/r}$.
\item Each value $x_{I_j} \in [t]^{|I_j|}$ satisfies $\hD^{(j)}_{I_j}(x_{I_j}) \le 4\D^{(j)}_{I_j}(x_{I_j})$.
\item Each nonempty subset $I' \subseteq I_j$ satisfies
\[
\h{\hD^{(j)}_{I'}} \ge \left(cE - \sum_{u=1}^j \frac{\const^{j-u+1}l_u}{k^{1-(u-1)/r}} - \res^j\right) {}|I'|.
\]
\end{enumerate2}
\end{invariant}

In accordance with our informal discussion in Section~\ref{sect:desc-lb}, $I_j$ is a subset of indices on which both parties have learned little information about each other from the partial transcript $m_1,\ldots,m_j$.  
Invariant~\ref{inv:Ij}(2) ensures that the two parties draw their 
inputs after the $j$th round from similar distributions.
Invariant~\ref{inv:Ij}(3) is the most important property. 
It says that the information revealed by $\hD^{(j)}$ about $I'$ 
is roughly what one would expect, given the message lengths $l_1,\ldots,l_j$.
Note that the $u$th message conveys information about $|I_{u-1}|{}=k^{1-(u-1)/r}$ indices so the average information-per-index should be $l_u/k^{1-(u-1)/r}$.
The factor $\const^{j-u+1}$ and the extra 
term $\res^j$ come from Lemma~\ref{lem:entropy-reduction-support},
which throws away part of the input distribution in each round,
progressively distorting the distributions in minor ways.

To begin our induction, at round $j$ we find
a large fraction of possible messages $m_j$ 
that reveal little information about the sender's
input, projected onto $I_{j-1}$.
This is possible because the length of the message 
$l_j = {}|m_j|$ reflects an upper bound on the expected 
information gain.
This idea is formalized in the following Lemma~\ref{lem:good-message}.

\begin{lemma}\label{lem:good-message}
Fix $j\in [1,r]$ and suppose Invariant~\ref{inv:Ij} holds for $j-1$. Then there exists a subset of messages $\M'_j$ 
with $\mu_j(\M'_j) \ge 1/2$ such that each message 
$m_j \in \M'_j$ satisfies
\[
\h{\D^{(j)}_{I_{j-1}}[m_j]} \ge \left(cE-2\sum_{u=1}^j \frac{\const^{j-u}l_u}{k^{1-(u-1)/r}} - 2 \cdot \res^{j-1}\right){}|I_{j-1}|.
\]
\end{lemma}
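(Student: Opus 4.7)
My plan is to derive the lemma via an averaging/Markov argument on the entropy ``deficit'' of $\D^{(j)}_{I_{j-1}}[m_j]$ relative to its maximum possible value.

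First, I would apply the inductive hypothesis, Invariant~\ref{inv:Ij}(3) at round $j-1$ with $I' = I_{j-1}$, to obtain
\[
\h{\hD^{(j-1)}_{I_{j-1}}} \;\ge\; \left(cE - \sum_{u=1}^{j-1} \frac{\const^{j-u}l_u}{k^{1-(u-1)/r}} - \res^{j-1}\right){}|I_{j-1}|.
\]
Next, let $X \sim \hD^{(j-1)}$ denote the sender's input distribution before round $j$, so $M_j$ is a deterministic function of $X$ (together with the already-fixed history). Because $\h{M_j}\le l_j$ (the message has length $l_j$ conditioned on the history), the standard chain-rule inequality gives
\[
\h{X_{I_{j-1}}\mid M_j} \;\ge\; \h{X_{I_{j-1}}} - \h{M_j} \;\ge\; \h{\hD^{(j-1)}_{I_{j-1}}} - l_j.
\]
By construction, $\h{X_{I_{j-1}}\mid M_j}$ is exactly $\E_{m_j\sim \mu_j}[\h{\D^{(j)}_{I_{j-1}}[m_j]}]$, since conditioning on the event $M_j=m_j$ turns $\hD^{(j-1)}$ into $\D^{(j)}[m_j]$.

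Now I would define the non-negative random variable
\[
Y(m_j) \;=\; |I_{j-1}|\cdot cE \;-\; \h{\D^{(j)}_{I_{j-1}}[m_j]},
\]
which is non-negative because $\D^{(j)}_{I_{j-1}}[m_j]$ is supported on $[t]^{|I_{j-1}|}$ and $\log t = cE$. Combining the two displays above yields
\[
\E_{m_j\sim\mu_j}[Y(m_j)] \;\le\; |I_{j-1}|\left(\sum_{u=1}^{j-1}\frac{\const^{j-u}l_u}{k^{1-(u-1)/r}} + \res^{j-1}\right) + l_j.
\]
Applying Markov's inequality, with probability at least $1/2$ over $m_j\sim\mu_j$ we have $Y(m_j)\le 2\E[Y]$; define $\M'_j$ as the set of such $m_j$. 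Writing out the resulting bound on $\h{\D^{(j)}_{I_{j-1}}[m_j]}$ and observing that $2l_j = 2|I_{j-1}|\cdot\const^{0}l_j/k^{1-(j-1)/r}$ absorbs cleanly into the sum as the $u=j$ term, I recover exactly the lemma's inequality.

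There is no serious obstacle: the only judgement call is choosing to apply Markov to the deficit (rather than trying to control the entropy directly), which is forced by the fact that entropy is bounded above rather than below. The bookkeeping step — matching the $u=j$ term of the target sum against the stray $+2l_j$ coming out of Markov — is mechanical once the $|I_{j-1}|=k^{1-(j-1)/r}$ identity is invoked.
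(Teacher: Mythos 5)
Your proposal is correct and is essentially the paper's argument: both rest on the identity $\h{\hD^{(j-1)}_{I_{j-1}}} = \mi{\hD^{(j-1)}_{I_{j-1}}}{M_j} + \E_{m_j\sim\mu_j}\h{\D^{(j)}_{I_{j-1}}[m_j]}$ together with $\mi{\cdot}{M_j}\le l_j$, Invariant~\ref{inv:Ij}(3) at $j-1$, the trivial bound $\h{\D^{(j)}_{I_{j-1}}[m_j]}\le cE\,|I_{j-1}|$, and the observation that the stray $2l_j$ is exactly the $u=j$ term of the target sum since $|I_{j-1}|=k^{1-(j-1)/r}$. The only cosmetic difference is that you run Markov's inequality on the entropy deficit directly, while the paper phrases the same averaging step as a proof by contradiction (assuming the bad messages have mass $\alpha>1/2$ and deriving an upper bound on $\h{\hD^{(j-1)}_{I_{j-1}}}$ that violates the invariant).
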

\begin{proof}
Let $\M'_j$ contain all messages $m_j$ satisfying the above 
inequality and $\overline{\M'_j}$ be its complement. Suppose, for the purpose of obtaining a contradiction, that the conclusion of the lemma is not true, i.e., $\mu_j(\overline{\M'_j}) = \alpha > 1/2$. 
Then the entropy of $\hD^{(j-1)}_{I_{j-1}}$ can be upper bounded as follows.
\begin{align*}
\lefteqn{\h{\hD^{(j-1)}_{I_{j-1}}}}\\
&= \mi{\hD^{(j-1)}_{I_{j-1}}}{M_j} + \sum_{m_j \in (\M'_j \cup \overline{\M'_j})} \mu_j(m_j) \h{\D^{(j)}_{I_{j-1}}[m_j]} 
& \mbox{Defn.~of $I(\cdot,\cdot)$.}\\
&\le  \h{M_j} + \sum_{m_j \in (\M'_j \cup \overline{\M'_j})} \mu_j(m_j) \h{\D^{(j)}_{I_{j-1}}[m_j]} & \mi{X}{\cdot} \leq \h{X}.\\
&\le  l_j + \sum_{m_j \in \M'_j} \mu_j(m_j) \h{\D^{(j)}_{I_{j-1}}[m_j]} + \sum_{m_j \in \overline{\M'_j}} \mu_j(m_j) \h{\D^{(j)}_{I_{j-1}}[m_j]} & \mbox{$H(M_j) \leq {}|M_j|{} = l_j$.}\\
&< l_j + (1-\alpha) cE|I_{j-1}|{} + \alpha\left(cE-2\sum_{u=1}^j \frac{\const^{j-u}l_u}{k^{1-(u-1)/r}} - 2 \cdot \res^{j-1}\right)|I_{j-1}| & \mbox{Defn. of $\overline{\M'_j}$.}\\
&= l_j + \left(cE-2\alpha\sum_{u=1}^j \frac{\const^{j-u}l_u}{k^{1-(u-1)/r}} - 2\alpha \cdot \res^{j-1}\right)|I_{j-1}| \\
&< \left(cE-\sum_{u=1}^{j-1} \frac{\const^{j-u}l_u}{k^{1-(u-1)/r}} - \res^{j-1}\right)|I_{j-1}|, & \mbox{Because $\alpha > 1/2$.}
\end{align*}
This contradicts Invariant~\ref{inv:Ij}(3) at index $j-1$.
\end{proof}

After the $j$th message $m_j$ is sent, 
the next step is to identify a set of coordinates
$I_j$ such that $\D^{(j)}$ still reveals little information
about $I_j$ \emph{and every subset of $I_j$}, since we need this
property to hold for $I_{j+1},\ldots,I_r$ in the future, all
of which are subsets of $I_j$.  We also want $I_j$ not to contain
many low probability points w.r.t.~$\D^{(j-1)}$, since this
may stop us from applying Lemma~\ref{lem:entropy-to-ratio} later on.
These two constraints are captured by parts (2) and (1),
respectively, of Lemma~\ref{lem:good-index}.

\begin{lemma}\label{lem:good-index}
Fix $j\in[1,r]$ and suppose Invariant~\ref{inv:Ij} holds for $j-1$. Then there exists a subset of messages $\M_j \subseteq \M'_j$ 
(from Lemma~\ref{lem:good-message}) with $\mu_j(\M_j) \ge 1/4$ such that for each message $m_j \in \M_j$, there exists a subset $I_j \subseteq I_{j-1}$ of size $|I_j|{}=k^{1-j/r}$ satisfying both of the following properties:
\begin{enumerate2}
\item $\Pr_{x_{I_j} \sim \D^{(j)}_{I_j}}\left[\D^{(j-1)}_{I_j}(x_{I_j}) < (4t)^{-{}|I_j|}/32\right] \le 1/2$;
\item Each nonempty subset $I' \subseteq I_j$ satisfies
\[
\h{\D^{(j)}_{I'}} \ge \left(cE - 4\sum_{u=1}^j \frac{\const^{j-u}l_u}{k^{1-(u-1)/r}} - 4 \cdot \res^{j-1}\right)|I'|.
\]
\end{enumerate2}
\end{lemma}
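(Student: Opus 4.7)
The plan is to build $I_j$ in two phases. First, for each $m_j \in \M'_j$ I will greedily construct a \emph{stable core} $I^*(m_j) \subseteq I_{j-1}$ on which property~(2) holds for every size-$k^{1-j/r}$ subset $I_j \subseteq I^*(m_j)$. Second, a double-counting argument will show that at least a $\mu_j$-mass $1/4$ of messages $m_j$ admit some such subset also satisfying property~(1).

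Fix $m_j \in \M'_j$ and define the entropy deficit $\phi(S) := cE|S| - \h{\D^{(j)}_S[m_j]}$, so Lemma~\ref{lem:good-message} yields $\phi(I_{j-1}) \leq \Delta \cdot |I_{j-1}|$ with $\Delta := 2\sum_{u=1}^j \const^{j-u}l_u / k^{1-(u-1)/r} + 2\res^{j-1}$. Starting from $I^{(0)} := I_{j-1}$, while some coordinate $i$ has $\phi(I^{(t)}) - \phi(I^{(t)} \setminus \{i\}) > 2\Delta$, remove it. Each removal drops $\phi$ by more than $2\Delta$, so at most $|I_{j-1}|/2$ removals occur, leaving a stable set $I^*$ of size at least $|I_{j-1}|/2 \geq k^{1-j/r}$ (using $k^{1/r} \geq 2$). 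For any $I_j \subseteq I^*$ of size $k^{1-j/r}$ and any nonempty $I' = \{i_1,\ldots,i_s\} \subseteq I_j$, the chain rule together with the fact that conditioning on a superset can only reduce entropy gives
\[
\h{\D^{(j)}_{I'}} \;=\; \sum_{\ell=1}^s \h{X_{i_\ell} \mid X_{i_1,\ldots,i_{\ell-1}}} \;\geq\; \sum_{\ell=1}^s \h{X_{i_\ell} \mid X_{I^* \setminus \{i_\ell\}}} \;\geq\; (cE - 2\Delta)|I'|,
\]
which matches property~(2) exactly.

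For property~(1), fix any size-$k^{1-j/r}$ subset $I \subseteq I_{j-1}$ and let $L_I := \{x : \D^{(j-1)}_I(x) < (4t)^{-|I|}/32\}$. The pointwise definition gives $\D^{(j-1)}_I(L_I) < t^{|I|} \cdot (4t)^{-|I|}/32 = 4^{-|I|}/32$. Using the identity $\hD^{(j-1)}_I = \sum_{m_j} \mu_j(m_j) \D^{(j)}_I[m_j]$ (since $\D^{(j)}[m_j]$ is $\hD^{(j-1)}$ conditioned on the sender outputting $m_j$) together with the factor-$4$ ratio bound from Invariant~\ref{inv:Ij}(2) at round $j-1$ (marginalized to $I$),
\[
\sum_{m_j} \mu_j(m_j)\, \D^{(j)}_I[m_j](L_I) \;=\; \hD^{(j-1)}_I(L_I) \;\leq\; 4\, \D^{(j-1)}_I(L_I) \;\leq\; 4^{-|I|}/8.
\]
Markov's inequality gives $\mu_j\bigl(\{m_j : \D^{(j)}_I[m_j](L_I) > 1/2\}\bigr) \leq 4^{-|I|}/4$. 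Summing over all size-$k^{1-j/r}$ subsets $I \subseteq I_{j-1}$, the total $\mu_j$-mass of ``bad'' $(m_j, I)$-pairs is at most $\binom{|I_{j-1}|}{k^{1-j/r}} \cdot 4^{-k^{1-j/r}}/4$. Now set $\M_j$ to be those $m_j \in \M'_j$ with at least one size-$k^{1-j/r}$ subset of $I^*(m_j)$ satisfying property~(1); for $m_j \in \M'_j \setminus \M_j$ every one of the at least $\binom{|I^*(m_j)|}{k^{1-j/r}} \geq \binom{|I_{j-1}|/2}{k^{1-j/r}}$ candidate subsets is bad, so
\[
\mu_j(\M'_j \setminus \M_j) \;\leq\; \frac{\binom{|I_{j-1}|}{k^{1-j/r}}}{\binom{|I_{j-1}|/2}{k^{1-j/r}}} \cdot \frac{4^{-k^{1-j/r}}}{4} \;\leq\; 4^{k^{1-j/r}} \cdot \frac{4^{-k^{1-j/r}}}{4} \;=\; \frac{1}{4},
\]
where the binomial ratio telescopes as $\prod_i (|I_{j-1}|-i)/(|I_{j-1}|/2-i) \leq 4^{k^{1-j/r}}$ provided $k^{1-j/r} \leq |I_{j-1}|/4$ (equivalently $k^{1/r} \geq 4$, which holds under the standing assumption $r \leq (\log k)/6$). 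Thus $\mu_j(\M_j) \geq 1/2 - 1/4 = 1/4$.

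The delicate step is precisely the binomial cancellation above. A naive union bound over all $\binom{|I_{j-1}|}{k^{1-j/r}}$ possible choices of $I_j$ would blow up exponentially in $|I_j|$ and overwhelm the $4^{-|I_j|}$ savings from Markov's inequality. The greedy construction is calibrated so that $|I^*(m_j)| \geq |I_{j-1}|/2$---much larger than the bare minimum $k^{1-j/r}$---which supplies enough candidate subsets of $I^*(m_j)$ to absorb the union bound in the final step, and it is this interplay between greedy stability and double counting that makes the two properties compatible.
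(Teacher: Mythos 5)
Your proposal is correct, and the second half (the Markov/double-counting argument to extract $\M_j$ with $\mu_j(\M_j)\ge 1/4$) is essentially the same calculation the paper performs, just phrased as a direct bound on the bad mass rather than as a two-sided bound on a summation $Z$. The genuine divergence is in the first half. The paper extracts $J_0\subseteq I_{j-1}$ by taking a \emph{maximal disjoint collection} $J_1,\ldots,J_w$ of subsets that violate property~(2), setting $J_0=I_{j-1}\setminus\bigcup_v J_v$, and using Fact~\ref{fac:joint-entropy} (subadditivity of joint entropy) together with the lower bound from Lemma~\ref{lem:good-message} to force $|J_0|\ge|I_{j-1}|/2$. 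You instead greedily remove single coordinates whose marginal conditional entropy $\h{X_i\mid X_{I^{(t)}\setminus\{i\}}}$ is below $cE-2\Delta$, bound the number of removals by observing that each removal drops the nonnegative potential $\phi(S)=cE|S|-\h{\D^{(j)}_S}$ by more than $2\Delta$ while $\phi(I_{j-1})\le\Delta|I_{j-1}|$, and then recover property~(2) for \emph{every} $I'\subseteq I^*$ via the chain rule plus the fact that conditioning on a superset can only reduce entropy. Both routes land on a stable set of size exceeding $|I_{j-1}|/2$ with the same entropy guarantee. Your version trades the set-valued maximality argument for a single-coordinate greedy criterion, which is arguably more constructive and makes the monotonicity-of-conditioning step do the work that subadditivity does for the paper; the paper's version avoids explicitly invoking the chain rule and is perhaps shorter to state. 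The arithmetic in the tail (the binomial ratio $\binom{|I_{j-1}|}{k^{1-j/r}}/\binom{|I_{j-1}|/2}{k^{1-j/r}}\le 4^{k^{1-j/r}}$ under $k^{1/r}\ge 4$, absorbed by the $4^{-|I_j|}$ from Markov) coincides exactly with the paper's final contradiction against the upper bound on $Z$.
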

\begin{proof}
We first prove that for each message $m_j \in \M'_j$ (from Lemma~\ref{lem:good-message}), there exists a subset $J_0 \subseteq I_{j-1}$ of size $|J_0|{} \geq {}|I_{j-1}|/2$ such that each nonempty subset 
$I' \subseteq J_0$ satisfies part 
(2) of the lemma. 
Suppose $J_1,J_2,\ldots,J_w$ are disjoint subsets of $I_{j-1}$, each of which \emph{violates} the inequality of part (2), 
whereas none of the subsets of $J_0 = I_{j-1} \setminus (\bigcup_{v=1}^w J_v)$ do. 
Then we can upper bound the entropy of $\D^{(j)}_{I_{j-1}}$ as follows.
\begin{align*}
\h{\D^{(j)}_{I_{j-1}}}
&\le \sum_{v=0}^w \h{\D^{(j)}_{J_v}} & \mbox{Fact~\ref{fac:joint-entropy}.}\\
&< cE|J_0| {} + \sum_{v=1}^w \left(cE - 4\sum_{u=1}^j\frac{\const^{j-u}l_u}{k^{1-(u-1)/r}} - 4 \cdot \res^{j-1}\right)|J_v| & \mbox{Defn. of $J_v$.}\\
&= cE|I_{j-1}| {} - 4|I_{j-1} \setminus J_0| \left(\sum_{u=1}^j\frac{\const^{j-u}l_u}{k^{1-(u-1)/r}} + \res^{j-1}\right).
\intertext{On the other hand, from Lemma~\ref{lem:good-message}, having $m_j \in \M'_j$ guarantees that}
\h{\D^{(j)}_{I_{j-1}}} 
&\ge \left(cE - 2\sum_{u=1}^j \frac{\const^{j-u}l_u}{k^{1-(u-1)/r}} - 2 \cdot \res^{j-1}\right)|I_{j-1}|.
\end{align*}
The two inequalities above are only consistent if
$|I_{j-1} \setminus J_0| {}\le{} |I_{j-1}|/2$, or equivalently 
$|J_0| {}\ge{} |I_{j-1}|/2$. 
Thus, $J_0$ exists with the right cardinality, as claimed.

Now suppose, for the purpose of obtaining a contradiction,
that the lemma is false.  For every $m_j \in \M'_j$ there is
a corresponding index set $J_0$ whose subsets satisfy part (2)
of the lemma.  If the lemma is false, 
that means there is a subset $\M''_j \subseteq \M'_j$
of ``bad'' messages with $\mu_j(\M''_j) > 1/4$
such that, for each $m_j\in \M''_j$, none of the $\binom{|J_0|}{|I_j|}$ choices for $I_j \subseteq J_0$ satisfy part (1) of the lemma.
(Remember that $J_0$ depends on $m_j$ but the lower bound on $|J_0|{}\geq{}|I_{j-1}|/2$ is independent of $m_j$.)
Consider the following summation:
\[
Z = \sum_{\substack{I_j \subseteq I_{j-1}\:: \\ |I_j|{}=k^{1-j/r}}} \sum_{\substack{x_{I_j} \in [t]^{|I_j|}\:: \\ \D^{(j-1)}_{I_j}(x_{I_j}) \,<\, (4t)^{-{}|I_j|}/32}} \D^{(j-1)}_{I_j}(x_{I_j}).
\]
We can easily upper bound $Z$ as follows.
\[
Z < \binom{|I_{j-1}|}{|I_j|} \cdot t^{|I_j|} \cdot \frac{(4t)^{-{}|I_j|}}{32} = \binom{|I_{j-1}|}{|I_j|} 2^{-2|I_j|{}-5}.
\]
Invariant~\ref{inv:Ij}(2) relates $\D^{(j-1)}$ and $\hD^{(j-1)}$,
which lets us lower bound $Z$.
\begin{align*}
Z
&\ge \frac{1}{4}\sum_{\substack{I_j \subseteq I_{j-1}\::\\ |I_j|{}=k^{1-j/r}}} \sum_{\substack{x_{I_j} \in [t]^{|I_j|}\:: \\ \D^{(j-1)}_{I_j}(x_{I_j}) \,<\, (4t)^{-{}|I_j|}/32}} \hD^{(j-1)}_{I_j}(x_{I_j}) & \mbox{Invariant~\ref{inv:Ij}(2).}\\
\intertext{By definition, $\hD^{(j-1)}$ is a convex combination
of the $\D^{(j)}[m_j]$ distributions, weighted according to $\mu_j(\cdot)$.
Hence, the expression above is lower bounded by}
&\ge \frac{1}{4}\sum_{\substack{I_j \subseteq I_{j-1}\:: \\ |I_j|{}=k^{1-j/r}}} \sum_{\substack{x_{I_j} \in [t]^{|I_j|}\:: \\ \D^{(j-1)}_{I_j}(x_{I_j}) \,<\, (4t)^{-{}|I_j|}/32}} \sum_{m_j \in \M''_j} \mu_j(m_j) \cdot \D^{(j)}_{I_j}[m_j](x_{I_j}) \\
&\ge \frac{1}{4}\sum_{m_j \in \M''_j} \mu_j(m_j) \sum_{\substack{I_j \subseteq J_0\:: \\ |I_j|{}=k^{1-j/r}}} \sum_{\substack{x_{I_j} \in [t]^{|I_j|}\:: \\ \D^{(j-1)}_{I_j}(x_{I_j}) \,<\, (4t)^{-{}|I_j|}/32}} \D^{(j)}_{I_j}[m_j](x_{I_j}) & \mbox{Rearrange sums.}\\
\intertext{By definition, for every $m_j\in \M''_j$ and 
every choice of $I_j\subseteq J_0$, part (1) of the lemma is violated.
Continuing with the inequalities,}
&> \frac{1}{4}\sum_{m_j \in \M''_j} \mu_j(m_j) \cdot \binom{|J_0|}{|I_j|} \cdot \frac{1}{2} \\
&> \frac{1}{32}\binom{|I_{j-1}|/2}{|I_j|}. & \mbox{Because $\mu_j(\M''_j) > 1/4$.}
\end{align*}
This contradicts the upper bound on $Z$ whenever $k^{1/r}$ is at least some sufficiently large constant.
\end{proof}

The receiver of $m_j$ constructs a new distribution $\hD^{(j)}$
in two steps.  After fixing $I_j$, 
we construct $\tD^{(j)}$ by combining $\D^{(j-1)}$ and $\D^{(j)}$,
filtering out some points in the space whose probability mass is too low.  
We then construct $\hD^{(j)}$ from $\tD^{(j)}$ and $\D^{(j-1)}$
by filtering out points that occur under $\tD^{(j)}$ with substantially larger probability than they do under $\D^{(j-1)}$.

Formally, suppose Invariant~\ref{inv:Ij} holds for $j-1$. 
For each message $m_j \in \M_j$ (from Lemma~\ref{lem:good-index}), 
let $I_j$ be selected to satisfy both 
properties of Lemma~\ref{lem:good-index}. 
Define the probability mass of a vector $x \in [t]^k$ under $\tD^{(j)}$ as follows:
\begin{align*}
\tD^{(j)}(x) 
&=
\begin{cases}
0, & \mbox{if } \D^{(j-1)}_{I_j}(x_{I_j}) < \frac{(4t)^{-{}|I_j|}}{32}; \\
\frac{\D^{(j)}_{I_j}(x_{I_j})}{\beta_1} \cdot \frac{\D^{(j-1)}(x)}{\D^{(j-1)}_{I_j}(x_{I_j})}, & \text{otherwise}.
\end{cases}
\intertext{where $\beta_1$ is}
\beta_1 
&= \Pr_{x_{I_j} \sim \D^{(j)}_{I_j}}\left[\D^{(j-1)}_{I_j}(x_{I_j}) \ge \frac{(4t)^{-{}|I_j|}}{32}\right].
\intertext{In other words, we discard a $1-\beta_1$ fraction
of the distribution $\D^{(j)}$, but ignoring this effect,
the projection of $\tD^{(j)}$ onto $I_j$ 
has the same distribution as $\D^{(j)}$ onto $I_j$, and conditioned
on the value of $x_{I_j}$, the distribution 
$\tD^{(j)}$ (projected onto $[k]\backslash I_j$) 
is identical to $\D^{(j-1)}$.  We derive $\hD^{(j)}$ 
from $\tD^{(j)}$ with a similar transformation.}
\hD^{(j)}(x) &= 
\begin{cases}
0, & \mbox{if } \frac{\tD^{(j)}_{I_j}(x_{I_j})}{\D^{(j-1)}_{I_j}(x_{I_j})} > 2^{\gamma_j}; \\
\frac{\tD^{(j)}_{I_j}(x_{I_j})}{\beta_2} \cdot \frac{\D^{(j-1)}(x)}{\D^{(j-1)}_{I_j}(x_{I_j})}, & \text{otherwise}.
\end{cases}
\intertext{where $\beta_2$ and $\gamma_j$ are defined to be}
\beta_2 
&= \Pr_{x_{I_j} \sim \tD^{(j)}_{I_j}}\left[\frac{\tD^{(j)}_{I_j}(x_{I_j})}{\D^{(j-1)}_{I_j}(x_{I_j})} \le 2^{\gamma_j}\right],\\
\gamma_j
&= \left(\sum_{u=1}^j l_u\left(\frac{\const^{j-u+1}}{k^{1 - (u-1)/r}}\right) + (16 \cdot \res^{j-1}+6)\right)|I_j| {} + 6\\
&\le \sum_{u=1}^j l_u\left(\frac{\const}{k^{1/r}}\right)^{j-u+1} + \res^j\cdot |I_j| {} + 6.
\end{align*}

The proofs of Lemmas~\ref{lem:entropy-reduction} and \ref{lem:err-prob-reduction} use 
several simple observations about $\tD^{(j)}$ and $\hD^{(j)}$:
\begin{enumerate2}
\item Lemma~\ref{lem:good-index}(1) states that $\beta_1 \ge 1/2$.
Lemma~\ref{lem:good-index}(2) lower bounds the entropy of $\D^{(j)}_{I_j}$.
We apply Lemma~\ref{lem:entropy-reduction-support} to 
$\D^{(j)}_{I_j}$ and $\tD^{(j)}_{I_j}$ 
(taking the roles of $p$ and $q$, respectively)
with parameter $\alpha=1/2\leq \beta_1$, and obtain the following lower bound on the entropy
of $\tD^{(j)}_{I_j}$.
\[
\h{\tD^{(j)}_{I_j}} \ge \left(cE - 8\sum_{u=1}^j \frac{\const^{j-u}l_u}{k^{1-(u-1)/r}} - 8 \cdot \res^{j-1} - 2\right) |I_j|.
\]
\item We can then apply Lemma~\ref{lem:entropy-to-ratio} to $\D^{(j-1)}_{I_j}$ and $\tD^{(j)}_{I_j}$ (taking the roles of $p$ and $q$, respectively) with parameters
\begin{align*}
g_1 &= \left(8\sum_{u=1}^j \frac{\const^{j-u}l_u}{k^{1-(u-1)/r}} + (8 \cdot \res^{j-1} + 2)\right)|I_j|,\\
g_2 &= 2|I_j| {} + 5,\\
\mbox{ and } \alpha &=1/2.
\end{align*}
Since $g_1/\alpha + g_2 - (1-\alpha)\log(1-\alpha)/\alpha = \gamma_j$,
we conclude that $\beta_2 \geq 1-\alpha = 1/2$.
Thus, for each value $x_{I_j} \in \supp(\hD^{(j)}_{I_j})$, 
\begin{equation}\label{eqn:timesfour}
\hD^{(j)}_{I_j}(x_{I_j}) = \frac{\tD^{(j)}_{I_j}(x_{I_j})}{\beta_2} = \frac{\D^{(j)}_{I_j}(x_{I_j})}{\beta_1\beta_2} \le 4\D^{(j)}_{I_j}(x_{I_j}).
\end{equation}
\end{enumerate2}

Lemma~\ref{lem:entropy-reduction} completes the inductive step 
by lower bounding the entropy of $\hD^{(j)}_{I'}$ for every nonempty subset $I' \subseteq I_j$. To put it another way, it ensures that the coordinates in $I_j$ remain almost completely unknown to both parties.

\begin{lemma}\label{lem:entropy-reduction}
Fix $j\in [1,r]$ and suppose Invariant~\ref{inv:Ij} holds for $j-1$. Then, for each message $m_j \in \M_j$ (from Lemma~\ref{lem:good-index}), Invariant~\ref{inv:Ij} also holds for $j$.
\end{lemma}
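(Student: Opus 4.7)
The goal is to verify all three conditions of Invariant~\ref{inv:Ij} at step $j$, assuming it holds at step $j-1$ and that $m_j \in \M_j$, so that $I_j$ and the distribution $\hD^{(j)}$ have already been fixed. Part~(1), $|I_j|{}=k^{1-j/r}$, is enforced directly by the cardinality guarantee of Lemma~\ref{lem:good-index}. Part~(2), the ratio bound $\hD^{(j)}_{I_j}(x_{I_j}) \le 4\D^{(j)}_{I_j}(x_{I_j})$, is precisely equation~(\ref{eqn:timesfour}); it has already been established by the two-step application of Lemmas~\ref{lem:entropy-reduction-support} and~\ref{lem:entropy-to-ratio} to the intermediate distribution $\tD^{(j)}$. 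So the real work of the lemma is part~(3), the entropy lower bound for \emph{every} nonempty $I' \subseteq I_j$.

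For part~(3), I would first observe that the ratio bound of part~(2) transfers from $I_j$ down to every subset $I' \subseteq I_j$ by marginalization: summing $\hD^{(j)}_{I_j}(x_{I_j}) \le 4\D^{(j)}_{I_j}(x_{I_j})$ over the coordinates in $I_j \setminus I'$ immediately yields $\hD^{(j)}_{I'}(x_{I'}) \le 4\D^{(j)}_{I'}(x_{I'})$ for every $x_{I'}$. This puts the pair $(p,q) = (\D^{(j)}_{I'}, \hD^{(j)}_{I'})$ into the hypothesis of Lemma~\ref{lem:entropy-reduction-support} with universe size $2^s$, $s = cE|I'|$, and $\alpha = 1/4$. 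The ``gap'' parameter $g$ needed there is exactly the shortfall in $\h{\D^{(j)}_{I'}}$ supplied by Lemma~\ref{lem:good-index}(2), namely $g = \bigl(4\sum_{u=1}^j \const^{j-u}l_u/k^{1-(u-1)/r} + 4\res^{j-1}\bigr)|I'|$. The conclusion of Lemma~\ref{lem:entropy-reduction-support} then reads
\[
\h{\hD^{(j)}_{I'}} \;\ge\; cE|I'| - 4g - 4\h{1/4} \;=\; cE|I'| - \sum_{u=1}^j \frac{\const^{j-u+1}l_u}{k^{1-(u-1)/r}}|I'| - 16\res^{j-1}|I'| - 4\h{1/4},
\]
where the rewriting of the main sum uses $\const = 16$.

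It then remains to absorb the residual slack into the $\res^j|I'|$ term of the target invariant. Since $\res = 22$, we have $\res^j - 16\res^{j-1} = 6\res^{j-1} \ge 6$, and $4\h{1/4} < 4$, so $16\res^{j-1}|I'| + 4\h{1/4} \le \res^j|I'|$ for every $|I'| \ge 1$, which delivers exactly the bound required by Invariant~\ref{inv:Ij}(3). The main ``obstacle'' here is bookkeeping rather than ideas: the content of the step is one invocation of Lemma~\ref{lem:entropy-reduction-support}, and the real check is to confirm that the constants $\const=16$ and $\res=22$ fixed in the preamble are exactly large enough to swallow the factor-of-$4$ multiplicative loss (and the additive $4\h{1/4}$) incurred each round. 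This is precisely why the invariant is stated with the geometric factors $\const^{j-u+1}$ and $\res^j$ in the first place, and the constants are tuned so that the induction closes.
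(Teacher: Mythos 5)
Your proof is correct and takes essentially the same route as the paper: parts (1) and (2) come directly from Lemma~\ref{lem:good-index} and Eqn.~(\ref{eqn:timesfour}), and part (3) is a single application of Lemma~\ref{lem:entropy-reduction-support} to the pair $(\D^{(j)}_{I'}, \hD^{(j)}_{I'})$ with $\alpha=1/4$, using Lemma~\ref{lem:good-index}(2) to supply the entropy gap $g$. Your explicit observation that the pointwise ratio bound on $I_j$ survives marginalization to every $I'\subseteq I_j$ is a detail the paper leaves implicit, and your constant-checking ($16\res^{j-1}+4 \le \res^j$ via $\res=22$, $\const=16$) matches the paper's arithmetic exactly.
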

\begin{proof}
Due to Lemma~\ref{lem:good-index} and Eqn.~(\ref{eqn:timesfour}), the first two properties of Invariant~\ref{inv:Ij} are satisfied. For each nonempty subset $I' \subseteq I_j$, the third property of Invariant~\ref{inv:Ij} can be derived from the second property of Lemma~\ref{lem:good-index} and an application of Lemma~\ref{lem:entropy-reduction-support} to $\D^{(j)}_{I'}$ and $\hD^{(j)}_{I'}$ (taking the roles of $p$ and $q$, respectively) with parameter $\alpha=1/4$ as follows.
\[
\h{\hD^{(j)}_{I'}} \ge \left(cE - 16\sum_{u=1}^j \frac{\const^{j-u}l_u}{k^{1-(u-1)/r}} - 16 \cdot \res^{j-1} - 4\right) |I'| {}
\;\ge\; 
\left(cE - \sum_{u=1}^j \frac{\const^{j-u+1}l_u}{k^{1-(u-1)/r}} - \res^j\right) |I'|.
\]
\end{proof}

Aside from maintaining Invariant~\ref{inv:Ij} round by round, another important part of our proof is to compute the error probability. Lemma~\ref{lem:err-prob-reduction} shows how the error probabilities of two consecutive rounds are related after our modification to the protocol. More importantly, it also illustrates the reason to bound the \emph{pointwise} ratio 
between $\tD^{(j)}_{I_j}$ and $\D^{(j-1)}_{I_j}$.

\begin{lemma}\label{lem:err-prob-reduction}
Fix a round $j\in[1,r]$ and suppose Invariant~\ref{inv:Ij} holds for $j-1$. Fix any specific message $m_j \in \M_j$ (from Lemma~\ref{lem:good-index}).
Define $p$ to be the probability of error, when the protocol
begins after round $j$ with the inputs drawn from
$\D^{(j)}$ and $\hD^{(j)}$, respectively. 
Then the probability of error is at least 
$2^{-\gamma_j-1}p$ 
when the inputs are instead drawn from
$\D^{(j)}$ and $\D^{(j-1)}$, respectively.
\end{lemma}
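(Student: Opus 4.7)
The plan is to reduce the lemma to a single pointwise comparison between the two receiver-side distributions, namely
\[
\hD^{(j)}(y) \le 2^{\gamma_j + 1} \cdot \D^{(j-1)}(y) \qquad \text{for every } y \in [t]^k.
\]
Once this is in hand, I would multiply by the indicator of the error event and integrate. Concretely, since the paper adopts Yao's minimax viewpoint, the remainder of the protocol is deterministic, so for each fixed $x$ there is a well-defined set $\mathcal{E}_x \subseteq [t]^k$ of receiver-inputs on which the remaining rounds err. Summing the pointwise bound over $y \in \mathcal{E}_x$ and then averaging over $x \sim \D^{(j)}$ (which is common to both scenarios and so factors out cleanly) yields $p \le 2^{\gamma_j+1} p'$, where $p'$ is the error probability when the receiver is instead drawn from $\D^{(j-1)}$. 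Rearranging gives exactly the claimed $p' \ge 2^{-\gamma_j - 1} p$.

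For the pointwise bound, I would just unpack the definition of $\hD^{(j)}$. By construction, $\hD^{(j)}$ is built from $\D^{(j-1)}$ by reweighting only the $I_j$-marginal (replacing $\D^{(j-1)}_{I_j}$ by $\tD^{(j)}_{I_j}/\beta_2$) while leaving the conditional distribution on $[k]\setminus I_j$ given $y_{I_j}$ identical to that of $\D^{(j-1)}$. Consequently, for any $y \in \supp(\hD^{(j)})$,
\[
\frac{\hD^{(j)}(y)}{\D^{(j-1)}(y)} \;=\; \frac{1}{\beta_2} \cdot \frac{\tD^{(j)}_{I_j}(y_{I_j})}{\D^{(j-1)}_{I_j}(y_{I_j})}.
\]
The threshold baked into the definition of $\hD^{(j)}$ caps the rightmost ratio at $2^{\gamma_j}$, and the second bullet immediately preceding the lemma---an application of Lemma~\ref{lem:entropy-to-ratio} with $\alpha = 1/2$---already established $\beta_2 \ge 1/2$. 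Multiplying these two bounds gives the desired $2^{\gamma_j+1}$ factor; outside $\supp(\hD^{(j)})$ the left side vanishes and the bound is vacuous.

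I do not foresee a substantive obstacle in this argument; the only nontrivial work has already been done upstream. The lemma is essentially a verification that the two filtering steps used to define $\hD^{(j)}$ were engineered precisely to enforce a uniform likelihood-ratio bound against $\D^{(j-1)}$, which is also the reason $\gamma_j$ is defined as it is: the exponents appearing in $\gamma_j$ are exactly what the two invocations of Lemma~\ref{lem:entropy-reduction-support} and Lemma~\ref{lem:entropy-to-ratio} contribute. This per-round loss of $2^{\gamma_j+1}$ is what later unrolls, across all $r$ rounds, into the total error-probability discount that will clash with the assumed $2^{-E}$ error bound and produce the final contradiction.
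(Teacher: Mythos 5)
Your proposal is correct and follows the same line of argument as the paper: both reduce the lemma to the pointwise likelihood-ratio bound $\hD^{(j)}(y) \le 2^{\gamma_j+1}\D^{(j-1)}(y)$ for $y \in \supp(\hD^{(j)})$, which is read off directly from the case-split defining $\hD^{(j)}$ (the threshold on $\tD^{(j)}_{I_j}/\D^{(j-1)}_{I_j}$) together with $\beta_2 \ge 1/2$. You spell out the error-event decomposition more explicitly than the paper, which states only the pointwise bound (Eqn.~(\ref{eqn:ratio})) and declares the proof complete, but the substance is identical.
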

\begin{proof}
From the definition of $\hD^{(j)}$, 
for each value $x \in \supp(\hD^{(j)})$, we have
\begin{equation}\label{eqn:ratio}
\frac{\hD^{(j)}(x)}{\D^{(j-1)}(x)} 
\:=\: \frac{\tD^{(j)}_{I_j}(x_{I_j})}{\beta_2\D^{(j-1)}_{I_j}(x_{I_j})} 
\:\le\: \frac{2^{\gamma_j}}{\beta_2} \;\leq\; 2^{\gamma_j+1}.
\end{equation}
This concludes the proof.
\end{proof}

Finally, with all lemmas proved above, we have reached the point to calculate the initial error probability.

\begin{lemma}\label{lem:real-err-prob-et}
Recall that $c=1/2, c'=c/100$.  Fix any $r\in [1,(\log k)/6]$
and $E \ge 100k^{1-1/r}/c$.
Suppose the initial input vectors are drawn independently and uniformly from 
$[t]^k$, where $t=2^{cE}$. 
Then the error probability of the {\EqualityTesting} protocol, $\perr$,
is greater than $2^{-E}$.
\end{lemma}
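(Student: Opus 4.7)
The plan is to iterate the sampling machinery of Lemmas~\ref{lem:good-message}, \ref{lem:good-index}, and \ref{lem:entropy-reduction} for all $r$ rounds, lower bound the collision probability on the resulting single-coordinate index set $I_r$, and then transfer that bound back to the real error $\perr$ via iterated applications of Lemma~\ref{lem:err-prob-reduction}.

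Starting from the trivial base case $j=0$ (both parties uniform on $[t]^k$, $I_0=[k]$), induction on $j$ via the three lemmas produces, at each round, a set $\mathcal{M}_j$ of good messages with $\mu_j(\mathcal{M}_j)\ge 1/4$, an index set $I_j$ of size $k^{1-j/r}$, and distributions $\hat{\mathcal{D}}^{(j)},\mathcal{D}^{(j)}$ satisfying Invariant~\ref{inv:Ij}. After $r$ rounds $|I_r|=1$, and Invariant~\ref{inv:Ij}(3) applied to $I'=I_r$ gives
\[
\h{\hat{\mathcal{D}}^{(r)}_{I_r}} \;\ge\; cE \;-\; \sum_{u=1}^r l_u\!\left(\frac{16}{k^{1/r}}\right)^{r-u+1} \;-\; 22^r.
\]
The hypothesis $r\le(\log k)/6$ yields $k^{1/r}\ge 64$, so setting $\rho=16/k^{1/r}\le 1/4$, the sum telescopes: $\sum_u l_u\rho^{r-u+1} \le \rho\sum_u l_u \le \rho\cdot c'Ek^{1/r} = 16c'E$. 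The hypothesis $E\ge 100k^{1-1/r}/c$ makes $22^r$ negligible compared to $cE$. Hence $\h{\hat{\mathcal{D}}^{(r)}_{I_r}} \ge (c-o(1))E$, close to the uniform maximum $cE=\log t$.

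Next I lower bound the final hybrid error on this one coordinate. By Invariant~\ref{inv:Ij}(2), the collision probability satisfies
\[
\sum_a \mathcal{D}^{(r)}_{I_r}(a)\,\hat{\mathcal{D}}^{(r)}_{I_r}(a) \;\ge\; \tfrac{1}{4}\sum_a \hat{\mathcal{D}}^{(r)}_{I_r}(a)^2 \;\ge\; \tfrac{1}{4}\,2^{-\h{\hat{\mathcal{D}}^{(r)}_{I_r}}} \;\ge\; 2^{-cE-2}.
\]
Because the entropy is so close to $\log t$, a standard concavity argument shows that no atom of $\hat{\mathcal{D}}^{(r)}_{I_r}$ or $\mathcal{D}^{(r)}_{I_r}$ can have probability exceeding a small constant, so the collision probability is itself well below $1/2$. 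Therefore, regardless of whether the protocol commits to ``equal'' or ``not equal'' on $I_r$, its error is at least the smaller of the collision and non-collision probabilities, which is the collision probability. Thus the hybrid per-transcript error satisfies $p_r\ge 2^{-cE-2}$ for every good transcript.

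Finally I iterate Lemma~\ref{lem:err-prob-reduction} backwards from $j=r$ down to $j=1$. Each application loses a factor $2^{-\gamma_j-1}$, and marginalising back from $\mathcal{D}^{(j)}$ to $\hat{\mathcal{D}}^{(j-1)}$ while restricting to $m_j\in\mathcal{M}_j$ (the regime where the lemma applies) costs an extra $\mu_j(\mathcal{M}_j)\ge 1/4 = 2^{-2}$ per round. Unrolling,
\[
\perr \;\ge\; 2^{-\sum_{j=1}^r(\gamma_j+3)}\cdot p_r \;\ge\; 2^{-\sum_j\gamma_j \,-\, 3r \,-\, cE \,-\, 2}.
\]
A second geometric-series computation bounds $\sum_j\gamma_j$: reordering the $l_u$-containing double sum gives $\sum_u l_u\sum_{j\ge u}\rho^{j-u+1}\le 2\rho\sum_u l_u\le 32c'E$; the sum $\sum_j 22^j\cdot k^{1-j/r}$ is $O(k^{1-1/r})$ since $22<k^{1/r}$, which is $O(E)$ with a small prefactor via the hypothesis on $E$; and the $6r$ additive constants are $O(\log k)$. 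With $c=1/2$ and $c'=c/100$, the total exponent is strictly less than $E$, giving $\perr>2^{-E}$. The main obstacle is precisely this accounting: verifying that the combined losses $cE$ (from the final collision), $\sum_j\gamma_j$, $3r$, and $O(1)$ leave strictly positive slack below $E$, which is exactly how the constants $c,c',16,22$ and the two hypotheses on $r,E$ have been calibrated in the preceding lemmas.
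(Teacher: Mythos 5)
Your proposal is correct and follows essentially the same route as the paper's proof: the same induction through Lemmas~\ref{lem:good-message}, \ref{lem:good-index}, and \ref{lem:entropy-reduction} to maintain Invariant~\ref{inv:Ij}, the same entropy lower bound on the final singleton coordinate, an error lower bound of $\Omega(1/t)$ at round $r$, and the same backward transfer losing a factor $2^{\gamma_j+3}$ per round (Lemma~\ref{lem:err-prob-reduction} plus $\mu_j(\M_j)\ge 1/4$) with the same calibration of the constants $c,c',\const,\res$. The one local difference is your endgame: you phrase the last step as a single global ``equal/not equal'' commitment, whereas the decision on $I_r$ is a function of the receiver's own value; your collision-probability argument still goes through per value because the no-heavy-atom bound on $\D^{(r)}_{I_r}$ (available from Lemma~\ref{lem:good-index}(2) at $I'=I_r$) gives $\min\{\D^{(r)}_{I_r}(a),\,1-\D^{(r)}_{I_r}(a)\}=\Omega(\D^{(r)}_{I_r}(a))$ for every $a$, which is in substance the paper's $\X_0/\X_1$ convexity computation, so the constant-factor slack absorbs the difference.
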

\begin{proof}
First suppose Invariant~\ref{inv:Ij} holds for $r$ and consider the situation after the final round, where the inputs are drawn from $\D^{(r)}$ and $\hD^{(r)}$, respectively. Notice that $I_r$ is a singleton set, so the entropy of $\hD^{(r)}_{I_r}$ can be lower bounded as follows.
\begin{align*}
\h{\hD^{(r)}_{I_r}}
&\ge cE - \sum_{u=1}^r \frac{\const^{r-u+1}l_u}{k^{1-(u-1)/r}} - \res^r & \mbox{Invariant~\ref{inv:Ij}(3).}\\
&= cE - \frac{\const}{k^{1/r}} \sum_{u=1}^r l_u\left(\frac{\const}{k^{1/r}}\right)^{r-u} - \res^r \\
&\ge cE - \frac{\const}{k^{1/r}} \sum_{u=1}^r l_u - \res k^{1-1/r} & \mbox{$k^{1/r} \ge 2^6$ due to $r\le (\log k)/6$.}\\
&\ge cE - \const c'E - \res k^{1-1/r} \:>\:\frac{cE}{2}. & \mbox{Because $\sum_{u=1}^r l_u \le c'Ek^{1/r}$.}
\end{align*}

From the lower bound on the entropy of $\hD^{(r)}_{I_r}$, 
we can easily show that there exists no value $x_{I_r}$ such that $\hD^{(r)}_{I_r}(x_{I_r}) = \alpha > 3/4$. 
If there \emph{were} such a value, then 
the entropy of 
$\hD^{(r)}_{I_r}$ can also be upper bounded as
\[
\h{\hD^{(r)}_{I_r}} \;\le\; \alpha \log \frac{1}{\alpha} + (1-\alpha) \log \frac{t}{1-\alpha} \;<\;\frac{cE}{4} + \alpha \log \frac{1}{\alpha} + (1-\alpha) \log \frac{1}{1-\alpha} \;<\; \frac{cE}{2},
\]
contradicting the lower bound on $\h{\hD^{(r)}_{I_r}}$.

After all $r$ rounds of communication, 
the receiver of the last message has to make 
the decision on $I_r$ depending only on his own input on $I_r$. 
Let $\X_0 \subseteq [t]$ be the subset of values $x_{I_r}$ such that the protocol outputs ``not equal'' on $I_r$ upon seeing the input $x_{I_r}$ after $r$ rounds of communication, $\X_1 = [t] \setminus \X_0$, and $\beta = \hD^{(r)}_{I_r}(\X_0)$. Then, the final error probability is at least
\begin{align*}
\lefteqn{\sum_{x_{I_r} \in \X_0} \hD^{(r)}_{I_r}(x_{I_r}) \D^{(r)}_{I_r}(x_{I_r}) + \sum_{x_{I_r} \in \X_1} \hD^{(r)}_{I_r}(x_{I_r}) \left(1 - \D^{(r)}_{I_r}(x_{I_r})\right)} \\
&= \sum_{x_{I_r} \in \X_0} \hD^{(r)}_{I_r}(x_{I_r}) \D^{(r)}_{I_r}(x_{I_r}) + \sum_{x_{I_r} \in \X_1} \hD^{(r)}_{I_r}(x_{I_r}) \sum_{x'_{I_r} \neq x_{I_r}} \D^{(r)}_{I_r}(x'_{I_r}) \\
&\ge \frac{1}{4}\sum_{x_{I_r} \in \X_0} \hD^{(r)}_{I_r}(x_{I_r})^2 + \frac{1}{4}\sum_{x_{I_r} \in \X_1} \hD^{(r)}_{I_r}(x_{I_r}) \sum_{x'_{I_r} \neq x_{I_r}} \hD^{(r)}_{I_r}(x'_{I_r}) & \mbox{Invariant~\ref{inv:Ij}(2).}\\
&= \frac{1}{4}\sum_{x_{I_r} \in \X_0} \hD^{(r)}_{I_r}(x_{I_r})^2 + \frac{1}{4}\sum_{x_{I_r} \in \X_1} \hD^{(r)}_{I_r}(x_{I_r}) \left(1 - \hD^{(r)}_{I_r}(x_{I_r})\right) & \\
&\ge \frac{1}{4}\sum_{x_{I_r} \in \X_0} \hD^{(r)}_{I_r}(x_{I_r})^2 + \frac{1}{16}\sum_{x_{I_r} \in \X_1} \hD^{(r)}_{I_r}(x_{I_r}) & \mbox{Because $\hD^{(r)}_{I_r}(x_{I_r}) \le 3/4$.}\\
&\ge \frac{\beta^2}{4t} + \frac{1-\beta}{16} \:\ge\: \frac{1}{4t}. & \mbox{Convexity of $x^2$.}
\end{align*}

This result also meets the simple intuition that when the inputs to the two parties are almost uniformly random and no communication is allowed, the best strategy would be guessing ``not equal'' regardless of the actual input.

Finally, we are ready to transfer the error probability back round by round. From Lemma~\ref{lem:good-index} through Lemma~\ref{lem:err-prob-reduction}, 
the error probability w.r.t.~$\D^{(j)}$ and $\hD^{(j)}$ 
differs from the error probability w.r.t.~$\D^{(j-1)}$ and $\hD^{(j-1)}$ by at most a $4 \cdot 2^{\gamma_j+1} = 2^{\gamma_j+3}$ factor.
In particular, Lemma~\ref{lem:good-index} and Lemma~\ref{lem:entropy-reduction} say that the 
$j$th message $m_j$ satisfies Invariant~\ref{inv:Ij} at index $j$
with probability at least $1/4$,
provided Invariant~\ref{inv:Ij} holds for $j-1$,
and Lemma~\ref{lem:err-prob-reduction} says 
the error probabilities under the two measures differ by a $2^{\gamma_j+1}$ factor for any such $m_j$. 
Repeating this for each $j\in [1,r]$, 
we conclude that the initial error probability $\perr$ 
is lower bounded by
\begin{align*}
\perr &\ge \zero{\frac{1}{4t}\cdot \exp\left(-3r-\sum_{j=1}^r\gamma_j\right) 
= \exp\left(-cE-2-3r-\sum_{j=1}^r\gamma_j\right) > 2^{-E},}
\intertext{since}
\lefteqn{cE + 2 + 3r + \sum_{j=1}^r \gamma_j}\\
&\le cE + 2 + 3r + 6r + \sum_{j=1}^r \sum_{u=1}^j l_u \left(\frac{\const}{k^{1/r}}\right)^{j-u+1} \zero{+ \sum_{j=1}^r \res^j |I_j|}\\
&\le cE + 11r + \sum_{u=1}^r \frac{\const l_u}{k^{1/r}} \sum_{j=u}^r \left(\frac{\const}{k^{1/r}}\right)^{j-u} \zero{+ \res k^{1-1/r} \sum_{j=1}^r \left(\frac{\res}{k^{1/r}}\right)^{j-1}}\hspace*{3cm} & \mbox{Rearrange sums.}\\
&\le cE + 11r + \frac{32}{k^{1/r}} \sum_{u=1}^r l_u + 44k^{1-1/r} & \mbox{$k^{1/r} \ge 2^6$ since $r\le (\log k)/6$.}\\
&\le cE + \frac{11cE}{100} + \frac{32cE}{100} + \frac{44cE}{100} \:<\: E. & \mbox{Because $\sum_{u=1}^r l_u \le c'Ek^{1/r}$.}
\end{align*}
\end{proof}

\begin{proof}[Proof of Theorem~\ref{thm:lb-et}]
Lemma~\ref{lem:real-err-prob-et} actually shows that given integers $k \ge 1$ and $r \le (\log k)/6$, any $r$-round deterministic protocol for {\EqualityTesting} on vectors of length $k$ that has distributional error probability $\perr = 2^{-E}$ with respect to the uniform 
input distribution on $[t]^k$, where $t=2^{cE}$,
requires at least $\Omega(Ek^{1/r})$ bits of communication. 
Notice that the additional assumption $E \ge 100k^{1-1/r}/c$ 
always makes sense since there is a trivial $\Omega(k)$
lower bound on the communication complexity of 
$\EqualityTesting$, regardless of $r$.
Thus, Theorem~\ref{thm:lb-et} follows directly 
from Yao's minimax principle.
\end{proof}

\subsection{A Lower Bound on {\ExistsEqual}}\label{sect:lb-ee}

The proof of Theorem~\ref{thm:lb-ee} is almost the same as that of Theorem~\ref{thm:lb-et}, except for the final step, namely Lemma~\ref{lem:real-err-prob-et}, in which we first compute the final error probability after all $r$ rounds of communication and then transfer it backward round by round using Lemma~\ref{lem:err-prob-reduction}.
The problem with applying the same argument to {\ExistsEqual} protocols is that the receiver of the last message may be able to announce the correct answer, even though it knows little information about the 
inputs on the single coordinate $I_r$.

In order to prove Theorem~\ref{thm:lb-ee}, first notice that Lemma~\ref{lem:good-message} through Lemma~\ref{lem:err-prob-reduction} also hold perfectly well for {\ExistsEqual} protocols as no modification is required in their proofs. Therefore, it is sufficient to prove the following Lemma~\ref{lem:real-err-prob-ee}, which is an analog of Lemma~\ref{lem:real-err-prob-et} for {\ExistsEqual}. It is based mainly on Markov's inequality.

\begin{lemma}\label{lem:real-err-prob-ee}
Recall that $c=1/2, c'=c/100$.
Consider an execution of a deterministic 
$r$-round {\ExistsEqual} protocol, $r\in [1,(\log k)/6]$,
on input vectors drawn independently and uniformly
from $[t]^k$, where $t=2^{cE}$.
Here $E \ge 100k^{1-1/r}/c$ if $r > 1$ and $E \ge (100\log k)/c$ otherwise.
Then the protocol errs with probability $\perr > 2^{-E}$.
\end{lemma}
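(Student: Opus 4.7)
I plan to follow the proof of Lemma~\ref{lem:real-err-prob-et} almost verbatim: the round-by-round setup of Invariant~\ref{inv:Ij} and Lemmas~\ref{lem:good-message}--\ref{lem:err-prob-reduction} never use anything specific to $\EqualityTesting$, so they apply immediately to $\ExistsEqual$. The only place the proof must diverge is the base-case calculation of the error after round $r$ under the hybrid distribution $(\D^{(r)}, \hD^{(r)})$, since for $\ExistsEqual$ the receiver can in principle leverage information about coordinates outside $I_r$ to declare ``yes.''

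Let $A$ be the event $\{x_{I_r}=y_{I_r}\}$ (a match on $I_r$) and $B$ the event $\{\forall i\notin I_r:\, x_i\ne y_i\}$ (no match outside $I_r$). Since $\text{ExistsEq}\supseteq A$ and $\neg\text{ExistsEq}=\neg A\cap B$, for any deterministic decision $D(y)\in\{\text{yes},\text{no}\}$ the hybrid error is at least $\Pr[A\cap B,\,D=\text{no}]+\Pr[\neg A\cap B,\,D=\text{yes}]$. Using $\Pr_x[A\cap B\mid y]\ge \Pr_x[A\mid y]-\Pr_x[\neg B\mid y]$ and its analog for $\neg A$, then taking expectations and minimizing over $D$, this is at least
\[
\mathbb{E}_{y\sim \hD^{(r)}}\bigl[\min\bigl(\Pr_x[A\mid y],\,\Pr_x[\neg A\mid y]\bigr)\bigr]-\Pr[\neg B].
\]
Since $\Pr_x[A\mid y]=\D^{(r)}_{I_r}(y_{I_r})$ depends only on $y_{I_r}$, the first term is exactly the optimal $\EqualityTesting$ error on the singleton $I_r$ under marginals $(\D^{(r)}_{I_r},\hD^{(r)}_{I_r})$, and the calculation from Lemma~\ref{lem:real-err-prob-et} (using $\h{\hD^{(r)}_{I_r}}\ge cE/2$ and the pointwise ratio bound from Invariant~\ref{inv:Ij}(2)) lower-bounds it by $\tfrac{1}{4t}$.

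For the $\Pr[\neg B]$ term I apply Markov's inequality: $\Pr[\neg B]\le \mathbb{E}[\#\{i\notin I_r:\,x_i=y_i\}]=\sum_{i\notin I_r}\sum_v \D^{(r)}_i(v)\hD^{(r)}_i(v)$. I plan to bound each per-coordinate collision by $O(1/t)$, summing to $O(k/t)$. Under the assumption $E\ge 100k^{1-1/r}/c$ (resp.\ $E\ge(100\log k)/c$ when $r=1$), we have $k/t=k\cdot 2^{-cE}\le \tfrac{1}{8t}$, so the hybrid error is at least $\tfrac{1}{4t}-\tfrac{1}{8t}=\tfrac{1}{8t}$. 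The backward transfer through rounds $r,r-1,\ldots,1$ via Lemma~\ref{lem:err-prob-reduction} is bit-for-bit identical to the final arithmetic in Lemma~\ref{lem:real-err-prob-et}: multiplying by $\prod_{j=1}^r 2^{-\gamma_j-O(1)}$ gives a total factor of $2^{-cE-O(1)}$, producing initial error $\ge 2^{-cE-O(1)}\cdot \tfrac{1}{8t}=2^{-2cE-O(1)}>2^{-E}$, and Yao's principle finishes.

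The main obstacle is the per-coordinate collision bound $\sum_v \D^{(r)}_i(v)\hD^{(r)}_i(v)=O(1/t)$ for $i\notin I_r$: after several rounds of conditioning on the transcript, neither $\D^{(r)}_i$ nor $\hD^{(r)}_i$ is uniform, and the naive bound $\max_v \hD^{(r)}_i(v)\le 1$ is far too weak. The plan is to argue round by round that the construction of $\hD^{(j)}$ from $\D^{(j-1)}$ and $\D^{(j)}$ only distorts marginals on coordinates in $I_{j-1}\setminus I_j$ by bounded multiplicative factors (captured by the $2^{\gamma_j}$ truncation defining $\hD^{(j)}$ and the $\beta_1,\beta_2\ge 1/2$ bounds), propagating from the base case, where both parties' inputs are independent uniform and the per-coordinate collision probability is exactly $1/t$. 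Everything else is exactly the $\EqualityTesting$ argument.
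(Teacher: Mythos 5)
There is a genuine gap, and it sits exactly where you flag the ``main obstacle.'' Your plan lower-bounds the hybrid error by $\mathbb{E}_{y\sim\hD^{(r)}}[\min(\Pr_x[A\mid y],\Pr_x[\neg A\mid y])]-\Pr[\neg B]$ and then needs $\Pr[\neg B]\le \frac{1}{8t}$. But the first term lives at scale $\Theta(1/t)$, while $\Pr[\neg B]$ is the probability of a spurious match on some coordinate outside $I_r$, which is already about $(k-1)/t$ under the \emph{initial uniform} distributions; your claimed inequality ``$k\cdot 2^{-cE}\le\frac{1}{8t}$'' reads $k\le 1/8$ and is false for every $k\ge 1$ (the hypotheses on $E$ only give $k\le 2^{0.02cE}$, i.e.\ $k/t\le 2^{-0.98cE}$, still far above $2^{-cE}/8$). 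The situation after conditioning is worse, and your proposed repair cannot close it: the only pointwise control the construction gives is $\hD^{(j)}(x)\le 2^{\gamma_j+1}\D^{(j-1)}(x)$ (Eqn.~(\ref{eqn:ratio})), and $\sum_j\gamma_j$ is of order $0.8cE$, so propagating ``bounded multiplicative distortion'' round by round yields collision bounds of order $2^{\Theta(E)}/t$, not $O(1/t)$. Indeed the refuted coordinates outside $I_r$ are precisely those about which the transcript has revealed a great deal, so no per-transcript, per-coordinate bound near $1/t$ is available. Any additive subtraction against the $1/t$-scale signal on $I_r$ is therefore doomed; this is not a fixable constant.

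The paper's proof avoids this structurally. On the ``no'' side it does not intersect with $B$ at all: saying \emph{no} is already wrong whenever $y_{I_r}=x_{I_r}$, contributing $\sum_{x\in\X_0}\hD^{(r)}(x)\D^{(r)}_{I_r}(x_{I_r})$. The event ``no match anywhere'' is needed only on the ``yes'' side, and there it suffices to show $\sum_{y\in\N(x)}\D^{(r)}(y)\le 1/2$ for most $x$ and most transcripts --- a \emph{constant} bound, not an $o(1/t)$ bound. This is proved on average: the quantity $Z_0$, the transcript-averaged collision mass under the hybrid pair, is unrolled back to the uniform distributions using the $2^{\gamma_j+1}$ ratio together with the fact that $\hD^{(j-1)}$ is the $\mu_j$-mixture of the $\D^{(j)}[m_j]$, giving $Z_0\le 2^{\,r+\sum_j\gamma_j}\cdot k/t\le 2^{-0.15cE}$ (here the $2^{0.83cE}$ blow-up is absorbed exactly because $k/t\le 2^{-0.98cE}$); repeated Markov arguments over rounds and over $x\sim\hD^{(r)}$ then yield a set $\X'$ of mass $\ge 1/2$ on which the collision probability is $\le 1/2$, and the final error is $\ge \alpha^2/(4t)\ge 1/(16t)$. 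Your backward-transfer step via Lemma~\ref{lem:err-prob-reduction} is fine and matches the paper, but the decision-time estimate must be reorganized along these lines (constant-level control of off-$I_r$ collisions, placed multiplicatively on the ``yes'' branch and proved by averaging over transcripts), rather than subtracted additively from the $1/(4t)$ term.
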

\begin{proof}
Similarly to the proof of Lemma~\ref{lem:real-err-prob-et}, we first consider the situation after the final round.
In the {\ExistsEqual} protocol, the receiver of the last message can make the decision depending on every coordinate of his own input.
Let $\X_0 \subseteq [t]^k$ be the subset of values $x$ such that the protocol outputs ``no'' upon seeing the input $x$ after $r$ rounds of communication, $\X_1 = [t]^k \setminus \X_0$.
Then, the final error probability is at least
\[
\sum_{x \in \X_0} \hD^{(r)}(x) \D^{(r)}_{I_r}(x_{I_r}) + \sum_{x \in \X_1} \hD^{(r)}(x) \left(1 - \sum_{y \in \N(x)} \D^{(r)}(y)\right),
\]
where $\N(x) = \{y \in [t]^k \mid \text{there exists some $i \in [k]$ such that $x_i = y_i$}\}$ is the subset of input vectors 
that agree with $x$ on at least one coordinate.

The main difficulty here is to lower bound $1 - \sum_{y \in \N(x)} \D^{(r)}(y)$, which is potentially quite small.
Consider the following summation $Z_0$ over \emph{all} transcripts $m_1,\ldots,m_r$ in which $m_j\in \M_j$ (from Lemma~\ref{lem:good-index}), where the set $\M_j$
depends on $m_1,\ldots,m_{j-1}$:
\[
Z_0 = \sum_{m_1 \in \M_1} \mu_1(m_1) \sum_{m_2 \in \M_2} \mu_2(m_2) \cdots \sum_{m_r \in \M_r} \mu_r(m_r) \sum_{x \in [t]^k} \hD^{(r)}(x) \sum_{y \in \N(x)} \D^{(r)}(y).
\]
From the proof of Lemma~\ref{lem:err-prob-reduction} (Eqn.~(\ref{eqn:ratio})), 
we can upper bound $Z_0$ as follows.
\begin{align*}
Z_0
& \le \sum_{m_1 \in \M_1} \mu_1(m_1) \cdots \sum_{m_r \in \M_r} \mu_r(m_r) \sum_{\substack{x \in [t]^k, \\ y \in \N(x)}} 2^{\gamma_r+1}\cdot \zero{\D^{(r-1)}(x) \cdot \D^{(r)}(y)}\\
\intertext{%
Notice that $\gamma_r$ and $\D^{(r-1)}$ are independent of the choice of $m_r$, hence by rearranging sums, this is equal to
}
& = \sum_{m_1 \in \M_1} \mu_1(m_1) \cdots \sum_{m_{r-1} \in \M_{r-1}} \mu_{r-1}(m_{r-1}) \sum_{\substack{x \in [t]^k, \\ y \in \N(x)}} 2^{\gamma_r+1}\cdot \D^{(r-1)}(x) \sum_{m_r \in \M_r} \mu_r(m_r)\cdot \D^{(r)}(y) \\
\intertext{%
By definition, $\hD^{(r-1)}$ is a convex combination
of the $\D^{(r)}[m_r]$ distributions, weighted according to $\mu_r(\cdot)$.
Hence, the expression above is upper bounded by
}
& \le \sum_{m_1 \in \M_1} \mu_1(m_1) \cdots \sum_{m_{r-1} \in \M_{r-1}} \zero{\mu_{r-1}(m_{r-1}) \sum_{\substack{x \in [t]^k, \\ y \in \N(x)}} 2^{\gamma_r+1} \cdot \D^{(r-1)}(x) \cdot \hD^{(r-1)}(y)} 
\intertext{%
By the symmetry of $x$ and $y$, this is equal to
}
& = \sum_{m_1 \in \M_1} \mu_1(m_1) \cdots \sum_{m_{r-1} \in \M_{r-1}} \zero{\mu_{r-1}(m_{r-1}) \sum_{\substack{x \in [t]^k, \\ y \in \N(x)}} 2^{\gamma_r+1} \cdot \hD^{(r-1)}(x)\cdot  \D^{(r-1)}(y)}
\intertext{We repeat the same argument for rounds $r-1$ down to $1$, upper bounding $Z_0$ by}
& \le \exp\left(r+\sum_{j=1}^r \gamma_j\right)\cdot \sum_{\substack{x \in [t]^k, \\ y \in N(x)}} \hD^{(0)}(x) \cdot \D^{(0)}(y) & \\
& \le \exp\left(r+\sum_{j=1}^r \gamma_j\right)\cdot \frac{k}{t}
\intertext{The last inequality above follows from a union bound
since, under the initial distributions $\hD^{(0)},\D^{(0)}$, 
each of the $k$ coordinates is equal with probability $1/t$.
Recall that $E \ge 100k^{1-1/r}/c$ when $r > 1$ and $E \ge (100\log k)/c$ otherwise. Hence, using the same argument as that in the proof of Lemma~\ref{lem:real-err-prob-et}, we can further bound this as}
& \le 2^{0.83cE} \cdot 2^{0.02cE} \cdot 2^{-cE} \:=\: 2^{-0.15cE},
\end{align*}
since
\[
r + \sum_{j=1}^r \gamma_j \le 7r + \sum_{j=1}^r \sum_{u=1}^j l_u \left(\frac{\const}{k^{1/r}}\right)^{j-u+1} + \sum_{j=1}^r \res^j |I_j| {} \le \frac{7cE}{100} + \frac{32cE}{100} + \frac{44cE}{100} = \frac{83cE}{100},
\]
and $k \le (cE/100)^{r/(r-1)} \le (cE/100)^2 \le 2^{0.02cE}$ when $r > 1$ and $k \le 2^{0.01cE}$ otherwise.

Now fix a round $j$ and a particular history $(m_1,\ldots,m_j)$ up to round $j$ such that $m_{j'} \in \M_{j'}$ holds for every $j' \le j$.
Define $Z_j$ as follows.
\[
Z_j = \sum_{m_{j+1} \in \M_{j+1}} \mu_{j+1}(m_{j+1}) \cdots \sum_{m_r \in \M_r} \mu_r(m_r) \sum_{x \in [t]^k} \hD^{(r)}(x) \sum_{y \in \N(x)} \D^{(r)}(y).
\]
By Markov's inequality, there exists a subset of messages $\hM_1 \subseteq \M_1$ with $\mu_1(\hM_1) \ge \mu_1(\M_1)/2 \ge 1/8$ such that each message $m_1 \in \hM_1$ satisfies $Z_1 \le 2Z_0/\mu_1(\M_1) \le 8Z_0$ since $\mu_1(\M_1) \ge 1/4$ from Lemma~\ref{lem:good-index}.
Similarly, conditioned on any specific $m_1 \in \hM_1$, by Markov's inequality, there exists a subset of messages $\hM_2 \subseteq \M_2$ with $\mu_2(\hM_2) \ge \mu_2(\M_2)/2 \ge 1/8$ such that each message $m_2 \in \hM_2$ satisfies $Z_2 \le 2Z_1/\mu_2(\M_2) \le 8^2Z_0$.
In general, conditioned on any specific partial transcript $m_1,\ldots,m_{j-1}$ such that $m_{j'} \in \hM_{j'}$ holds for every $j' < j$, there exists a subset of messages $\hM_j \subseteq \M_j$ with $\mu_j(\hM_j) \ge \mu_j(\M_j)/2 \ge 1/8$ such that each message $m_j \in \hM_j$ satisfies $Z_j \le 8^jZ_j$.

After repeating the same argument $r$ times, we get $\hM_1,\ldots,\hM_r$ in sequence. For any sampled transcript $m_1,\ldots,m_r$ such that $m_j \in \hM_j$ for all $j \le r$, we have
\[
Z_r \le 8^rZ_0 \le 2^{3r} \cdot 2^{-0.15cE} \le 2^{-0.12cE} \le \frac{1}{4},
\]
as $r \le cE/100$ and $cE \ge 100$.
Further, one more application of Markov's inequality shows that there exists a subset of values $\X' \subseteq [t]^k$ with $\hD^{(r)}(\X') = \alpha \ge 1/2$ such that $\sum_{y \in \N(x)} \D^{(r)}(y) \le 1/2$ holds for every $x \in \X'$.

As a result, we can then lower bound the final error probability as follows, where $\beta = \hD^{(r)}(\X_0 \cap \X')$.
\begin{align*}
\lefteqn{\sum_{x \in \X_0} \hD^{(r)}(x) \D^{(r)}_{I_r}(x_{I_r}) + \sum_{x \in \X_1} \hD^{(r)}(x) \left(1 - \sum_{y \in \N(x)} \D^{(r)}(y)\right)} \\
& \ge \sum_{x \in (\X_0 \cap \X')} \hD^{(r)}(x) \D^{(r)}_{I_r}(x_{I_r}) + \sum_{x \in (\X_1 \cap \X')} \hD^{(r)}(x) \left(1 - \sum_{y \in \N(x)} \D^{(r)}(y)\right) \\
& \ge \frac{1}{4}\sum_{x \in (\X_0 \cap \X')} \hD^{(r)}(x) \hD^{(r)}_{I_r}(x_{I_r}) + \sum_{x \in (\X_1 \cap \X')} \hD^{(r)}(x) \left(1 - \sum_{y \in \N(x)} \D^{(r)}(y)\right) & \mbox{Invariant~\ref{inv:Ij}(2).}\\
& \ge \frac{1}{4}\sum_{x \in (\X_0 \cap \X')} \hD^{(r)}(x) \hD^{(r)}_{I_r}(x_{I_r}) + \frac{1}{2}\sum_{x \in (\X_1 \cap \X')} \hD^{(r)}(x) & \mbox{Defn. of $\X'$.}\\
\intertext{In order to minimize the above expression, we can now assume without loss of generality that the partition between $\X_0 \cap \X'$ and $\X_1 \cap \X'$ depends solely on $x_{I_r}$ as only the relative magnitude of $\hD^{(r)}_{I_r}(x_{I_r})/4$ and $1/2$ matters. Continuing,}
& \ge \frac{\beta^2}{4t} + \frac{\alpha-\beta}{2} \:\ge\: \frac{\alpha^2}{4t} \:\ge\: \frac{1}{16t}. & \mbox{Convexity of $x^2$.}
\end{align*}

Finally, we are ready to transfer the error probability back in exactly the same manner as we did in the proof of Lemma~\ref{lem:real-err-prob-et}. Using a similar argument, the existence of $\hM_j$ guarantees that
\[
\perr \ge \frac{1}{16t} \cdot \exp\left(-4r-\sum_{j=1}^r \gamma_j\right) = \exp\left(-cE-4-4r-\sum_{j=1}^r \gamma_j\right) > 2^{-E},
\]
since
\[
cE+4+4r+\sum_{j=1}^r \gamma_j \le cE + \frac{14cE}{100} + \frac{32cE}{100} + \frac{44cE}{100} \:<\: E.
\]
\end{proof}

\begin{proof}[Proof of Theorem~\ref{thm:lb-ee}]
Similarly to the proof of Theorem~\ref{thm:lb-et}, Theorem~\ref{thm:lb-ee} follows from Lemma~\ref{lem:real-err-prob-ee} and a direct application of Yao's minimax principle.
\end{proof}

\section{New Protocols for $\EqualityTesting$ and $\ExistsEqual$} \label{sect:ub}

In this section, we attempt to prove that our  $\Omega(Ek^{1/r})$ lower bound is \emph{tight} for $\EqualityTesting$.  We manage to attain this bound in several situations, but fail to achieve it for every value of $E,k,r$.

First of all, the $\Omega(Ek^{1/r})$ bound is only binding when it is at least $\Omega(k)$, which is necessary even when $E$ is constant~\cite{KalyanasundaramS92,Razborov92,DasguptaKS12}.
In Theorem~\ref{thm:ub-s1} we give a $\log^*(k/E)$-round protocol that reduces the effective dimension of the problem from $k$ to at most $E$ with $O(k)$ communication,
and basically lets us proceed under the assumption that 
$E\ge k$. (Note that if $E\ge k$ initially, $\log^*(k/E)=0$.)

In Theorem~\ref{thm:ub-et-s2-1} we give a simple protocol
for $\EqualityTesting$ with communication $O(rEk^{1/r})$ when $E\ge k$.  According to Theorem~\ref{thm:lb-et} this is optimal when $r=O(1)$.  All of our remaining protocols aim to eliminate or reduce this seemingly unnecessary factor of $r$.  In Theorem~\ref{thm:ub-ee-s2} we prove that $\ExistsEqual$ can be solved with $O(Ek^{1/r})$ communication, for any 
$r$ and $E\ge k$, and Theorem~\ref{thm:ub-et-s2-2} 
shows the same communication can be attained for $\EqualityTesting$, but with $O(r)$ rounds rather than $r$.
In particular, Theorems~\ref{thm:ub-s1}, \ref{thm:ub-ee-s2}, and \ref{thm:ub-et-s2-2} imply that $\EqualityTesting/\ExistsEqual$ can be solved with 
absolutely optimal communication $O(k+E)$ 
in $\log k$ rounds, which is also round-optimal 
according to Theorems~\ref{thm:lb-et} and \ref{thm:lb-ee}.
However, Theorems~\ref{thm:lb-et},~\ref{thm:ub-et-s2-1}, and~\ref{thm:ub-et-s2-2} leave the precise complexity of
$\EqualityTesting$ open when $E\ge k$ and $r$ is between
$\omega(1)$ and $o(\log k)$.

Theorem~\ref{thm:ub-et-s3} is our most sophisticated upper bound, in many ways.  It proves that $\EqualityTesting$ can be solved using $O(Ek^{1/r}\log r + Er\log r)$ communication when $E\ge k$.  When $r \ge \log k/\log\log k$ the first term is dominant, and the protocol comes within a $\log r \leq \log\log k$ factor of Theorem~\ref{thm:lb-et}'s lower bound.
Taken together, these theorems highlight a potential complexity separation between $\ExistsEqual$ and $\EqualityTesting$ and between $\SetDisjointness$ and $\SetIntersection$ in the low error probability regime. 

Theorems~\ref{thm:ub-et-1}--\ref{thm:ub-et-3} 
follow by combining the
dimension reduction of Theorem~\ref{thm:ub-s1} with Theorems~\ref{thm:ub-et-s2-1}--\ref{thm:ub-et-s3}.

\begin{theorem}\label{thm:ub-et-1}
There exists a $(\log^*(k/E)+r)$-round randomized protocol for
$\EqualityTesting$ on vectors of length $k$ that errs with probability $\perr = 2^{-E}$, using $O(k+rEk^{1/r})$ bits of communication.
\end{theorem}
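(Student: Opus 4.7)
The plan is to prove Theorem~\ref{thm:ub-et-1} by a direct black-box combination of the dimension-reduction subroutine (Theorem~\ref{thm:ub-s1}) with the basic $r$-round $\EqualityTesting$ protocol (Theorem~\ref{thm:ub-et-s2-1}). Almost all of the work is already encapsulated in those two results; the present theorem is just a matter of gluing them together and bookkeeping the error probability and communication. Split into two cases according to whether $E \ge k$ or $E < k$.

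In the case $E \ge k$, the log-star convention gives $\log^*(k/E) = 0$, so the stated round budget is just $r$. I would directly invoke Theorem~\ref{thm:ub-et-s2-1} on the original instance with error parameter $2^{-E}$, obtaining an $r$-round protocol with $O(rEk^{1/r})$ bits of communication and failure probability $2^{-E}$. Since $E \ge k$ implies $k \le rEk^{1/r}$, the bound simplifies to $O(k + rEk^{1/r})$ as claimed.

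In the case $E < k$, I would first run the dimension reduction from Theorem~\ref{thm:ub-s1}, tuning its internal constants so that its failure probability is at most $2^{-(E+1)}$; this costs $O(k)$ communication and $\log^*(k/E)$ rounds, and (conditioned on success) leaves a residual $\EqualityTesting$ instance whose answer agrees with the original and whose dimension $k'$ satisfies $k' \le E$. I would then run the protocol of Theorem~\ref{thm:ub-et-s2-1} on this residual instance with error parameter $2^{-(E+1)}$ in $r$ rounds. Since $k' \le E \le k$, we have $(k')^{1/r} \le k^{1/r}$, so the communication of the second phase is $O(rE \cdot (k')^{1/r}) = O(rEk^{1/r})$ bits.

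Summing the two phases yields $\log^*(k/E) + r$ rounds and $O(k) + O(rEk^{1/r}) = O(k + rEk^{1/r})$ bits of communication. A union bound over the two independent sources of failure gives total error at most $2^{-(E+1)} + 2^{-(E+1)} = 2^{-E}$, matching the claim. There is no real obstacle here since the two building blocks were designed to compose in this way; the only thing worth being careful about is the double role played by $E$ as both the target dimension of the reduction and the error exponent of the inner protocol, which is exactly what makes the two bounds telescope into the claimed $O(k + rEk^{1/r})$.
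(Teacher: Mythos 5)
Your proposal is correct and follows essentially the same route as the paper: invoke Theorem~\ref{thm:ub-s1} and then Theorem~\ref{thm:ub-et-s2-1}, splitting into the cases $E \ge k$ (where $\log^*(k/E)=0$ and the first stage is skipped) and $E < k$ (where the composed cost is $O(k + rE^{1+1/r}) = O(k + rEk^{1/r})$). One small imprecision: for $\EqualityTesting$ the reduction of Theorem~\ref{thm:ub-s1} does not shrink the \emph{dimension} to $E$ (equal coordinates may survive, so the residual instance can still have length $\Theta(k)$) but rather bounds the \emph{Hamming distance} $\dist(x',y') \le E$; this does not affect your bound, since Theorem~\ref{thm:ub-et-s2-1} is parameterized by exactly this Hamming-distance bound $k_0=\min\{k,E\}$ and already contains the $O(k)$ term you absorb from the first stage.
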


\begin{theorem}\label{thm:ub-ee}
There exists a $(\log^*(k/E)+r)$-round randomized protocol for
{\ExistsEqual} on vectors of length $k$ that errs with probability $\perr = 2^{-E}$, using $O(k+Ek^{1/r})$ bits of communication.
\end{theorem}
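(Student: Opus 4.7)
The plan is to derive Theorem~\ref{thm:ub-ee} by simply composing the two ingredients already developed in earlier subsections: the dimension reduction of Theorem~\ref{thm:ub-s1} and the $r$-round $\ExistsEqual$ protocol of Theorem~\ref{thm:ub-ee-s2}, which is designed precisely for the regime $E \ge k$.

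The easy case is $E \ge k$. Here $\log^*(k/E) = 0$, so I would simply invoke Theorem~\ref{thm:ub-ee-s2} directly on the original $k$-dimensional input; it produces an $r$-round protocol with $O(Ek^{1/r})$ communication and failure probability $2^{-E}$, and since $E \ge k$ this already matches the claimed $O(k+Ek^{1/r})$ bound.

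For the harder case $E < k$, I would first apply Theorem~\ref{thm:ub-s1} with failure parameter set to $E+O(1)$. In $\log^*(k/E)$ rounds and $O(k)$ bits of communication, this reduces the problem to an $\ExistsEqual$ instance on some $k' \le E$ coordinates, failing with probability at most $2^{-E}/2$. I would then invoke Theorem~\ref{thm:ub-ee-s2} on the reduced instance with error parameter $E$ and dimension $k' \le E$. Because $k' \le E$, its cost is $O(E\cdot (k')^{1/r}) \le O(E^{1+1/r}) \le O(E k^{1/r})$, where the last inequality uses $E \le k$ and hence $E^{1/r} \le k^{1/r}$. Summing the two phases, the total communication is $O(k) + O(Ek^{1/r}) = O(k + Ek^{1/r})$, the total number of rounds is $\log^*(k/E) + r$, and a union bound over the two failure events keeps the overall error at $2^{-E}$ after adjusting constants.

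The composition itself is essentially routine, so the main obstacle is purely bookkeeping. I need to verify that the output of Theorem~\ref{thm:ub-s1} really is a legitimate $\ExistsEqual$ instance on which Theorem~\ref{thm:ub-ee-s2} can be invoked as a black box, with no extra rounds of ``glue'' communication between the two phases; and that both sub-protocols tolerate independent tuning of their failure probabilities to $2^{-E-O(1)}$ within the stated communication budgets. Both points are built into the statements of Theorems~\ref{thm:ub-s1} and~\ref{thm:ub-ee-s2}, because the error parameter $E$ appears as a free tunable parameter in each, and because the reduction step contributes only the additive $O(k)$ term that is already absorbed into $O(k+Ek^{1/r})$.
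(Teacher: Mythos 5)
Your proposal is correct and matches the paper's own argument, which proves Theorem~\ref{thm:ub-ee} exactly by composing the dimension reduction of Theorem~\ref{thm:ub-s1} with the $r$-round protocol of Theorem~\ref{thm:ub-ee-s2}; your case split on $E \ge k$ versus $E < k$ and the union-bound bookkeeping are just a more explicit writeup of the same one-line composition.
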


\begin{theorem}\label{thm:ub-et-2}
There exists a $(\log^*(k/E)+O(r))$-round randomized protocol for
{\EqualityTesting} on vectors of length $k$ that errs with probability $\perr = 2^{-E}$, using $O(k+Ek^{1/r})$ bits of communication.
\end{theorem}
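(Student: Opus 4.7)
The plan is to obtain the bound by a straightforward two-phase composition: first invoke the dimension-reduction protocol of Theorem~\ref{thm:ub-s1} to bring the effective number of coordinates down to $O(E)$, and then invoke the $O(r)$-round protocol of Theorem~\ref{thm:ub-et-s2-2} on the reduced instance. If $E \ge k$ the dimension reduction is unnecessary, and we simply apply Theorem~\ref{thm:ub-et-s2-2} directly (in this case $\log^*(k/E) = 0$ so the round count is $O(r)$, and the communication is $O(Ek^{1/r})$, which matches the claim).

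For the interesting case $E < k$, I would first run the protocol of Theorem~\ref{thm:ub-s1}, which uses $\log^*(k/E)$ rounds and $O(k)$ bits of communication to produce an index set $S \subseteq [k]$ of size $k' \le E$ such that, except with probability $2^{-\Theta(E)}$, every coordinate $i \notin S$ has already been correctly classified (as equal or not equal) and the residual $\EqualityTesting$ instance on $S$ is equivalent to the original problem restricted to $S$. Then I would run the $\EqualityTesting$ protocol of Theorem~\ref{thm:ub-et-s2-2} on $\mathbf{x}_S,\mathbf{y}_S$ with error budget $2^{-E}$. Since we now have $k' \le E$, the hypothesis $E \ge k'$ of Theorem~\ref{thm:ub-et-s2-2} is satisfied, and it consumes $O(r)$ rounds and $O(E(k')^{1/r}) \le O(Ek^{1/r})$ bits.

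Summing the two phases yields $\log^*(k/E) + O(r)$ rounds and $O(k) + O(Ek^{1/r}) = O(k + Ek^{1/r})$ bits of communication. A union bound over the two failure events gives total error probability $2^{-\Theta(E)} + 2^{-E}$, which can be driven below $2^{-E}$ by scaling the internal error parameters of the two subprotocols by suitable constants (this only affects hidden constants in the communication bound). The main thing to check is this constant-factor bookkeeping, together with the fact that $k' \le E$ implies $(k')^{1/r} \le E^{1/r}$ and, more usefully here, $(k')^{1/r} \le k^{1/r}$, so the second-phase communication is absorbed into the $O(Ek^{1/r})$ term. I do not expect any genuine obstacle: this is a black-box composition, and all the nontrivial work has been done in the statements of Theorem~\ref{thm:ub-s1} and Theorem~\ref{thm:ub-et-s2-2}.
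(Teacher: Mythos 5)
Your high-level plan — compose Theorem~\ref{thm:ub-s1} with Theorem~\ref{thm:ub-et-s2-2} — is exactly the paper's one-line proof, and the final bound you obtain is correct. However, there is a factual misstatement in your description of what Theorem~\ref{thm:ub-s1} delivers for $\EqualityTesting$. It does \emph{not} produce an index set $S$ of size $k'\le E$ with everything outside $S$ resolved; it only guarantees that the \emph{Hamming distance} between the residual vectors $(x',y')$ is at most $E$, while the vectors themselves remain of length (up to) $k$. The surviving index set $I_{\log^*(k/E)}$ may well have size $\Theta(k)$, since it contains all truly equal coordinates plus at most $E$ false positives, and the two cannot be told apart without further communication. (For $\ExistsEqual$ the cardinality bound $|I_j|\le E$ can indeed be enforced, because a too-large $I_j$ lets us answer \emph{yes} immediately; this is precisely what distinguishes the two cases in Theorem~\ref{thm:ub-s1}.) Your argument is saved only because Theorem~\ref{thm:ub-et-s2-2} is stated in terms of Hamming distance — it accepts length-$k$ vectors with $\dist(x,y)\le k_0=\min\{k,E\}$, and its $O(k+Ek_0^{1/r})$ bound carries an explicit $O(k)$ term to cover the many unresolved-but-equal coordinates. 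With $k_0=E$ this is $O(k+E^{1+1/r})\le O(k+Ek^{1/r})$, so your conclusion stands, but you should be careful: reasoning as if the instance size actually shrank to $E$ is a conceptual error that would fail if Theorem~\ref{thm:ub-et-s2-2} lacked its $O(k)$ term.
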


\begin{theorem}\label{thm:ub-et-3}
There exists a $(\log^*(k/E)+r)$-round randomized protocol for
{\EqualityTesting} on vectors of length $k$ that errs with probability $\perr = 2^{-E}$, using $O(k+Ek^{1/r}\log r + Er\log r)$ bits of communication.
\end{theorem}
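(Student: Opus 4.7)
The plan is to first invoke Theorem~\ref{thm:ub-s1}, which reduces the effective dimension from $k$ to at most $E$ in $\log^*(k/E)$ rounds using $O(k)$ bits, so that only the case $E \ge k$ requires a new argument. It then suffices to give an $r$-round protocol for $\EqualityTesting$ when $E \ge k$ using $O(Ek^{1/r}\log r + Er\log r)$ bits, which becomes the technical heart of the theorem.

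The main subroutine I would construct refines the $r$-round, $O(Ek^{1/r})$-bit $\ExistsEqual$ protocol of Theorem~\ref{thm:ub-ee-s2} into an $\EqualityTesting$ protocol via a two-phase strategy executed within the same $r$ rounds. In the first phase Alice and Bob iteratively maintain a candidate set $S_j \subseteq [k]$ of ``possibly equal'' coordinates with $S_0 = [k]$, driven by $\Theta(\log r)$ independently seeded copies of the $\ExistsEqual$-style fingerprint scheme running in parallel. In round $j$, Alice sends fingerprints of $x_{S_{j-1}}$; a coordinate is kept in $S_j$ only if it passes every one of the $\log r$ parallel filters, and the per-round contraction factor is calibrated to be $k^{1/r}$, so that $|S_r| = O(r \log r)$ with high probability. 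Each per-round message costs $O(Ek^{1/r}\log r / r)$ bits, so the total filtering cost is $O(Ek^{1/r}\log r)$. The second phase, packed into the final round, consists of Alice sending a dedicated $O(E)$-bit fingerprint of $x_i$ for each $i \in S_r$, contributing $O(E\cdot|S_r|) = O(Er\log r)$ additional bits; after this, Bob outputs the correct answer for every coordinate of $[k]$ (coordinates that never entered $S_r$ are declared unequal, and coordinates surviving the final fingerprint check are declared equal).

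The main obstacle is establishing the high-probability bound $|S_r| = O(r\log r)$ and simultaneously ensuring that no truly equal coordinate is ever dropped, all with total failure probability at most $2^{-E}$. The latter reduces to a union bound over the $O(rk\log r)$ filter events, demanding that each individual filter misclassify an equal coordinate with probability at most $2^{-\Omega(E)}$; tuning each fingerprint copy to carry $\Theta(E/\log r)$ bits of independent hashing makes this precise and is the source of the $\log r$ blow-up in the filtering cost. The former requires a concentration argument: because the $\log r$ parallel copies within a round are independent, an unequal coordinate passes all of them with probability at most $(k^{-1/r})^{\Theta(\log r)}$, so by independence across rounds the expected survivor count is small and a Chernoff- or martingale-type bound yields $|S_r| = O(r\log r)$ except with probability $2^{-\Omega(E)}$. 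I expect this concentration step to be the delicate part, since the $\log r$ parallel copies in round $j$ are coupled through the shared candidate set $S_{j-1}$; handling this coupling carefully (likely by conditioning on $S_{j-1}$ round by round and taking a union bound over the $r$ conditional concentration events) is what makes the analysis go through.
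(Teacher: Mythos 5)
Your opening move (reduce to $E \ge k$ via Theorem~\ref{thm:ub-s1}, then give an $r$-round, $O(Ek^{1/r}\log r + Er\log r)$-bit subroutine) matches the paper, but the subroutine you sketch would not work, and it misses the actual difficulty that Theorem~\ref{thm:ub-et-s3} is built to overcome.

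The gap is that your candidate set $S_j$ always contains \emph{every} truly equal coordinate, so when most of the $k$ coordinates are genuine equalities, $|S_j| = \Theta(k)$ for all $j$ and your claimed bound $|S_r| = O(r\log r)$ is simply false --- your concentration argument only controls false positives, not true positives. Concretely: (i) your second phase would then cost $O(E\,|S_r|) = O(Ek)$, which blows the budget; and (ii) your first phase, where ``Alice sends fingerprints of $x_{S_{j-1}}$,'' has no mechanism to make the per-round cost scale with the number of false positives rather than with $|S_{j-1}|$. This is precisely the obstacle the paper flags (``the adversary can effectively hide how much of its error budget it has expended''), and it is why the protocol in Section~\ref{sect:ub-et-3} never sends raw fingerprints. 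Instead it exchanges test bits via Lemma~\ref{lem:hamming-distance}, whose cost is $O(dL + d\log(K/d))$ for an \emph{a priori Hamming distance bound} $d$ rather than $O(|S_{j-1}|L)$, and the $\log r$ factor in the paper arises from invoking that lemma in parallel with geometrically varying Hamming distance tolerances $k_{j-1}^{(i)} = k_{j-1}/2^{i-1}$ (each tagged with a $\Theta(E)$-bit hash, giving the $Er\log r$ term) so that the receiver can \emph{adaptively estimate} the fraction $\beta$ of the error budget actually spent and calibrate the next round's Hamming parameter. Your $\log r$ ``parallel filters'' play an entirely different (and, for $\EqualityTesting$, insufficient) role, and none of the adaptive-Hamming-distance machinery appears in your sketch.

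A secondary issue: your per-round cost accounting does not close. You allocate $\Theta(E/\log r)$ bits per coordinate per copy over $\log r$ copies, i.e., $\Theta(E)$ bits per coordinate, and in round one there are $k$ coordinates, giving $\Theta(Ek)$ bits --- far above your stated $O(Ek^{1/r}\log r/r)$. Also, inner-product test bits never misclassify an equal coordinate, so the $2^{-\Omega(E)}$-per-filter requirement on equal coordinates is vacuous; the real union-bound target is the probability that Lemma~\ref{lem:hamming-distance} silently fails (which the paper's per-invocation $\Theta(E)$-bit hash controls) and the probability that the adversary exceeds its global error budget.
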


\medskip

\begin{remark}\label{rem:logstar}
The $\log^*(k/E)$ terms in the round complexity of Theorems~\ref{thm:ub-et-1}--\ref{thm:ub-et-3} are not absolute.  They
can each be replaced with $\max\{0,\, \log^*(k/E) - \log^*(C)\}$, 
at the cost of increasing the communication by $O(Ck)$.
\end{remark}

\begin{remark}\label{rem:SetDSetI-reductions}
By applying Theorem~\ref{thm:Equivalence-Int-Eq} to Theorems~\ref{thm:ub-et-1}--\ref{thm:ub-et-3} we obtain 
$\SetDisjointness/\SetIntersection$ protocols with the same communication complexity, but with one more round of communication.  In the case of $\SetDisjointness$ (Theorem~\ref{thm:Equivalence-Int-Eq} + Theorem~\ref{thm:ub-ee}), 
it is straightforward to skip the reduction of Theorem~\ref{thm:Equivalence-Int-Eq} 
and solve the problem directly with $O(k+Ek^{1/r})$ communication in $(r + \log^*(k/E))$ rounds.
However, we do not see how to avoid Theorem~\ref{thm:Equivalence-Int-Eq}'s 
extra round of communication 
when solving $\SetIntersection$.
I.e., the $\SetIntersection$ protocols implied
by Theorems~\ref{thm:Equivalence-Int-Eq}, \ref{thm:ub-et-1}, and \ref{thm:ub-et-3}
use $(\log^*(k/E) + r \, \underline{+\, 1})$ rounds.
\end{remark}

\subsection{Overview and Preliminaries}\label{sect:ub-prelim}

We start by giving a generic protocol for $\EqualityTesting$. The protocol uses a simple subroutine for $\ExistsEqual/\EqualityTesting$ when $k = 1$.
Suppose Alice and Bob hold $x, y \in U = \{0,1\}^l$, respectively.
Alice picks a random $w \in \{0,1\}^l$ from the shared random source
and sends Bob $\check{x} = \left<x, w\right> \bmod 2$,
where $\left<\cdot,\cdot\right>$ is the inner product operator.
Bob computes $\check{y} = \left<y, w\right> \bmod 2$
and declares ``$x=y$'' iff $\check{x}=\check{y}$. 
Clearly, Bob never errs if $x=y$;
it is straightforward to show that the probability of error is exactly $1/2$ when $x \neq y$.
We call this protocol an \emph{inner product test} and $\check{x},\check{y}$ \emph{test bits}. 
A \emph{$b$-bit inner product test} on $x$ and $y$ refers to $b$ independent inner product tests on $x$ and $y$.    

At the beginning of phase $j$, $j \geq 1$, 
Alice and Bob agree on a subset $I_{j-1}$ of coordinates
on which all previous inner product tests have passed.
In other words, they have \emph{refuted} the 
potential equality $x_i\stackrel{?}{=}y_i$ for all $i \in [k]\backslash I_{j-1}$.
Each coordinate $i \in I_{j-1}$ represents either an actual equality ($x_i = y_i$),
or a \emph{false positive} ($x_i \neq y_i$).
At the beginning of the protocol, $I_0 = [k]$.
In phase $j$, we perform $l_j$ independent inner product tests on each coordinate in $I_{j-1}$
and let $I_j \subseteq I_{j-1}$ be the remaining coordinates that pass all their respective inner product tests.
Notice that each coordinate in $I_{j-1}$ corresponding to equality will always pass all the tests and enter $I_j$, while those corresponding to inequalities will only enter $I_j$ with probability $2^{-l_j}$.
At the end of the protocol, we declare all coordinates in $I_r$ \emph{equal} and all other coordinates \emph{not equal}.

This finishes the description of the generic $\EqualityTesting$ protocol. Theorems~\ref{thm:ub-et-1}--\ref{thm:ub-et-s3} 
all build on the framework of the generic protocol, instantiating its steps in different ways.

\subsubsection{A Protocol for Exchanging Test Bits}
 
For $\EqualityTesting$, it is possible that a constant fraction 
of the coordinates are actually equalities, which makes $|I_j| {} = \Theta(k)$ for every $j$. 
The naive implementation would explicitly exchange all $l_j|I_{j-1}|$ test bits and 
use $\Omega(kE)$ bits of communication in total. 
All the test bits corresponding to equalities are ``wasted'' in a sense. 

For our application, it is important that the communication volume that Alice and Bob use to exchange their test bits in phase $j$ be proportional to the number of false positives in $I_{j-1}$, instead of the size of $I_{j-1}$.
We will use a slightly improved version of a protocol of 
Feder et al.~\cite{FederKNN95} for exchanging the test bits.

Imagine packing the test bits into vectors 
$\hat{x},\hat{y} \in B^{|I_{j-1}|}$ where $B=\{0,1\}^{l_j}$.
Lemma~\ref{lem:hamming-distance} shows that Alice can 
transmit $\hat{x}$ to Bob, at a cost that depends on an \emph{a priori}
upper bound on the Hamming distance $\dist(\hat{x},\hat{y})$, i.e.,
the number of the coordinates in $I_{j-1}$ where they differ.

\begin{lemma}[Cf.~Feder et al.~\cite{FederKNN95}.]\label{lem:hamming-distance}
Suppose Alice and Bob hold length-$K$ vectors $x,y \in B^K$, where
$B=\{0,1\}^L$.  Alice can send one $O(dL + d\log(K/d))$-bit message to Bob,
who generates a string $x'\in B^K$ such that the following holds.
If the Hamming distance $\dist(x,y) \le d$ then $x=x'$; 
if $\dist(x,y) > d$ then there is no guarantee.
\end{lemma}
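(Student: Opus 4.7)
I would recognize the statement as a disguised form of syndrome decoding for a linear block code. Identify the alphabet $B=\{0,1\}^L$ with the field $\mathbb{F}_{2^L}$ and view $x,y\in B^K$ as length-$K$ vectors over this field. Set $z=x-y$; the hypothesis $\dist(x,y)\le d$ says exactly that $z$ has Hamming weight at most $d$. If Bob can recover $z$, he outputs $x'=y+z=x$, as required.

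To enable this, I would fix, once and for all (independently of the inputs), a linear code $C\subseteq \mathbb{F}_{2^L}^K$ of redundancy $m$ and minimum distance at least $2d+1$, with an $m\times K$ parity-check matrix $H$. Alice's message is the single vector $Hx\in\mathbb{F}_{2^L}^m$, of total bit length $mL$. Bob computes $Hy$, subtracts to form $Hz=Hx-Hy$, and recovers $z$ by syndrome decoding: the restriction of $H$ to weight-$\le d$ vectors is injective, since two such vectors would differ by a codeword of weight at most $2d$, which by minimum distance must be $0$. Thus Bob reconstructs $z$ uniquely, and hence $x$, whenever $\dist(x,y)\le d$; when $\dist(x,y)>d$ no guarantee is claimed, matching the statement.

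The final ingredient is the existence of a code achieving $mL=O(dL+d\log(K/d))$. I would invoke the Gilbert--Varshamov bound over $\mathbb{F}_{2^L}$: a uniformly random linear code of redundancy $m$ has minimum distance at least $2d+1$ with positive probability as soon as
\[
2^{mL} \;>\; \binom{K-1}{2d}\,(2^L-1)^{2d},
\]
which rearranges to $mL = O(dL+d\log(K/d))$, exactly the claimed bound. Since the code is fixed in advance, the protocol is deterministic and uses a single message.

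\textbf{Main obstacle.} The construction itself is short; the only non-routine element is invoking Gilbert--Varshamov for a suitable code, which is standard. Bob's decoding step is information-theoretically well-defined but need not be computationally efficient, which is fine for a communication-complexity upper bound. If one wanted an explicit, efficiently decodable code, one could substitute a concatenation of an outer Reed--Solomon code with a suitable inner code, or the hash-and-recurse scheme of Feder et al., at the cost of only an absolute constant factor in $m$.
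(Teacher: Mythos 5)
Your proof is correct, but it takes a different route from the paper. The paper argues combinatorially and non-algebraically: it defines the graph on $B^K$ joining any two vectors at Hamming distance at most $2d$, bounds its maximum degree by $\binom{K}{2d}2^{2Ld}$, takes a proper coloring $\phi$ with that many colors (greedy suffices), and has Alice send $\phi(x)$; since all strings in the radius-$d$ ball around $y$ receive distinct colors, Bob recovers $x$ by finding the unique member of that ball with the announced color. Your version instead sends the syndrome $Hx$ of a linear code over $\mathbb{F}_{2^L}$ with minimum distance $2d+1$, whose existence you get from Gilbert--Varshamov, and Bob decodes $z=x-y$ from $Hz$ using injectivity of $H$ on weight-$\le d$ vectors. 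These are close cousins --- a syndrome map is precisely a proper coloring of the distance-$2d$ graph by cosets, so your construction is a structured special case of the paper's --- but the arguments are genuinely different in what they invoke: the paper needs only a degree bound plus greedy coloring and is completely elementary, while your coding-theoretic formulation buys linearity (Alice's message depends on $x$ alone through a fixed linear map) and a clear path to explicit, efficiently decodable instantiations (e.g., Reed--Solomon concatenation), which the greedy coloring does not provide. One small imprecision: the Gilbert--Varshamov condition should dominate the \emph{sum} $\sum_{i\le 2d}\binom{K-1}{i}(2^L-1)^i$ rather than the single top term you wrote, but this changes $mL$ only by an additive $O(\log d)$ and is absorbed into the claimed $O(dL+d\log(K/d))$ bound, so the conclusion stands.
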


\begin{proof}
Define $G = (V,E)$ to be the graph on 
$V=B^K$ such that $\{u,v\}\in E$ iff
$\dist(u,v)\leq 2d$.  The maximum degree in $G$ is clearly
at most $\Delta =  \binom{K}{2d}\cdot 2^{2Ld}$ 
since there are $\binom{K}{2d}$ ways
to select the $2d$ indices and $2^{2Ld}$ ways to change
the coordinates at those indices so that there are \emph{at most} $2d$ different coordinates.  
Let $\phi : V\mapsto [\Delta]$
be a proper $\Delta$-coloring of $G$.
Alice sends $\phi(x)$ to Bob, which requires 
$\log\Delta = O(dL + d\log(K/d))$ bits.  Every string in
the radius-$d$ ball around $y$ (w.r.t.~$\dist$) is
colored differently since they are all at distance at most $2d$,
hence if $\dist(x,y) \le d$, Bob can reconstruct $x$ without error.
\end{proof}

\begin{corollary} \label{cor:hash-test}
Suppose at phase $j$, it is guaranteed that the number of false
positives in $I_{j-1}$ is at most $k_{j-1}$.
Then phase $j$ can be implemented with 
$O(k_{j-1}l_j + k_{j-1} \log(k/k_{j-1}))$ bits in $2$ rounds.
\end{corollary}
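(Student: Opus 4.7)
The plan is to instantiate phase $j$ by having Alice transmit her packed vector of test bits to Bob using the Hamming-distance compression of Lemma~\ref{lem:hamming-distance}, and then having Bob report back which coordinates failed. Let $\hat{x},\hat{y}\in B^{|I_{j-1}|}$ with $B=\{0,1\}^{l_j}$ denote the packed results of the $l_j$ inner product tests performed by Alice and Bob on each $i\in I_{j-1}$ using shared randomness. The key observation is that $\hat{x}_i=\hat{y}_i$ whenever $x_i=y_i$, so coordinates where $\hat{x}$ and $\hat{y}$ disagree must correspond to false positives in $I_{j-1}$. By hypothesis there are at most $k_{j-1}$ such coordinates, so $\dist(\hat{x},\hat{y})\le k_{j-1}$.

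In the first round, Alice invokes Lemma~\ref{lem:hamming-distance} with parameters $K={}|I_{j-1}|{}\le k$, $L=l_j$, and $d=k_{j-1}$, sending Bob a single message of $O(k_{j-1}l_j + k_{j-1}\log({}|I_{j-1}|/k_{j-1}))=O(k_{j-1}l_j+k_{j-1}\log(k/k_{j-1}))$ bits. Because $\dist(\hat{x},\hat{y})\le d$, Bob recovers $\hat{x}$ exactly, compares it with his own $\hat{y}$ coordinate-by-coordinate, and computes $I_j=\{i\in I_{j-1}:\hat{x}_i=\hat{y}_i\}$. In the second round, Bob transmits the set $I_{j-1}\setminus I_j$, which has cardinality at most $k_{j-1}$, back to Alice; this can be described in $O(\log\binom{|I_{j-1}|}{\le k_{j-1}})=O(k_{j-1}\log(k/k_{j-1}))$ bits (with the standard understanding that when $k_{j-1}$ is comparable to $k$ one simply sends a bitmap, which is dominated by the $k_{j-1}l_j$ term already paid in round~1). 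After these two rounds both parties know $I_j$, completing the phase.

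There is no real obstacle; the only mild care needed is in the subset-encoding bound in round~2, which should be phrased as $O(k_{j-1}+k_{j-1}\log(k/k_{j-1}))$ and absorbed into the stated bound using $l_j\ge 1$. Summing Alice's round-1 message and Bob's round-2 message yields the claimed communication cost $O(k_{j-1}l_j+k_{j-1}\log(k/k_{j-1}))$ in exactly $2$ rounds, as required.
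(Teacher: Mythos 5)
Your proof is correct and follows the paper's intended route: pack the $l_j$ test bits per coordinate into vectors over $B=\{0,1\}^{l_j}$, observe that coordinates where the packed vectors differ are exactly false positives so the Hamming distance is at most $k_{j-1}$, and invoke Lemma~\ref{lem:hamming-distance} with $K={}|I_{j-1}|{}\le k$, $L=l_j$, $d=k_{j-1}$. The only cosmetic difference is your second round (Bob sends the refuted set $I_{j-1}\setminus I_j$, rather than his own test bits via another application of Lemma~\ref{lem:hamming-distance} as the paper does), which changes nothing asymptotically and you absorb its cost correctly.
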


Finally, a naive implementation of the protocol requires $2r$ rounds
if the generic protocol has $r$ phases.
In fact, the protocol can be compressed into exactly $r$ rounds in the following way.
At the beginning, both parties agree that $I_0 = [k]$.
Alice generates her $l_1|I_0|$ test bits $\hat{x}^{(1)}$ for phase $1$ and communicates them to Bob;
Bob first generates his own test bits $\hat{y}^{(1)}$ for phase $1$ and determines $I_1$, then generates $l_2|I_1|$ test bits $\hat{y}^{(2)}$ for phase $2$ and transmits \emph{both} $\hat{y}^{(1)}$ and $\hat{y}^{(2)}$ to Alice.
Alice computes $I_1$, generates $\hat{x}^{(2)}$, computes $I_2$, generates $\hat{x}^{(3)}$,
and then sends $\hat{x}^{(2)}$ and $\hat{x}^{(3)}$ to Bob, and so on.
There is no asymptotic increase in the communication volume.

\subsubsection{Reducing the Number of False Positives}

Our protocols for $\EqualityTesting$ and $\ExistsEqual$ are 
divided into two parts.  The goal of the first part is to reduce
the number of false positives from at most $k$ to at most $E$;
if $E\ge k$, we can skip this part.
The details of this part are very similar to 
\Saglam{} and Tardos's $\SetDisjointness$ protocol~\cite{SaglamT13}.

\begin{theorem}\label{thm:ub-s1}
Let $(x,y)$ be an instance of $\ExistsEqual$ with $|x|{} ={} |y|{} = k$.
In $\log^*(k/E)$
rounds, we can 
reduce this to a new instance $(x',y')$ of $\ExistsEqual$ 
where $|x'|{} ={} |y'|{} \le E$, using $O(k)$ communication. 
The failure probability of this protocol is at most $2^{-(E+1)}$.

For $\EqualityTesting$, we can reduce the initial instance to a new 
instance $(x', y')$ such that the Hamming distance $\dist(x',y') \le E$, 
with the same round complexity, communication volume, and error probability.
\end{theorem}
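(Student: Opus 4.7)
The plan is to instantiate the generic multi-phase inner-product-test protocol from Section~\ref{sect:ub-prelim}, in the spirit of \Saglam{} and Tardos's $\log^* k$-round $\SetDisjointness$ protocol~\cite{SaglamT13}, but halting once the surviving false-positive count has dropped to $E$ rather than to a constant. We will maintain the surviving set $I_j \subseteq [k]$ together with a deterministic upper bound $k_j$ on the number of false positives in $I_j$, where $k_0 = k$, $k_r \le E$, and $r = \log^{*}(k/E)$. The key design choice is a sequence $(k_j)$ that decreases doubly-exponentially once $k_j$ has left the large-$k$ regime, so that the iterated logarithm drops by one per phase.

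In phase $j$, both parties perform $l_j = \log(k_{j-1}/k_j) + O(1)$ independent inner-product tests on each coordinate of $I_{j-1}$, making the expected number of surviving false positives at most $k_j/4$; a Chernoff bound over the $k_{j-1}$ independent Bernoulli trials then caps the probability of exceeding $k_j$ by $\exp(-\Omega(k_j))$. The test bits are exchanged via Corollary~\ref{cor:hash-test}, which transmits them in $O(k_{j-1} l_j + k_{j-1}\log(k/k_{j-1})) = O(k_{j-1}\log(k/k_j))$ bits, since the Hamming distance of the encoded vectors $\hat{x},\hat{y}$ is exactly the false-positive count in $I_{j-1}$ (equalities contribute $0$). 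Summing over phases yields a telescoping bound of $O(k)$ for a suitable $(k_j)$. A union bound over all $r$ phases gives total failure probability $\sum_{j} 2^{-\Omega(k_j)} = 2^{-\Omega(E)}$, which can be tuned to $2^{-(E+1)}$ by appropriate choice of constants in $l_j$.

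For $\ExistsEqual$, the protocol outputs the reduced instance $(x_{I_r}, y_{I_r})$ when $|I_r| \le E$, and otherwise declares ``equality exists''---which is correct with probability $1 - 2^{-(E+1)}$, since in the no-equality case $|I_r|$ coincides with the surviving false-positive count and is at most $E$ with that probability. For $\EqualityTesting$, the protocol additionally reports every coordinate of $[k]\setminus I_r$ as ``not equal'' (always correct, since each such coordinate failed an inner-product test in some phase), and returns $(x_{I_r}, y_{I_r})$ as the new instance, whose Hamming distance equals the surviving false-positive count and is therefore at most $E$.

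The main obstacle is the joint calibration of $(k_j)$ and $(l_j)$. A purely geometric decay $k_j = k_{j-1}/2$ yields $\Theta(\log(k/E))$ phases, not $\log^{*}(k/E)$; conversely, an aggressively doubly-exponential decay from the first phase would force $k_{j-1}\log(k/k_j)$ to exceed the budget while $k_{j-1}$ is still close to $k$. The correct sequence must start gently (so early phases spend only $O(1)$ bits per surviving coordinate) and then accelerate into a doubly-exponential regime once $k_{j-1}$ is small enough that the $k_{j-1}$ factor in the per-phase cost accommodates a much larger $\log(k/k_j)$. Verifying that a single sequence simultaneously satisfies $r = \log^{*}(k/E)$, total communication $O(k)$, and failure probability $\le 2^{-(E+1)}$ is the technical crux; the hint from Remark~\ref{rem:logstar}---that extra communication $O(Ck)$ buys $\log^{*}(C)$ fewer rounds---is consistent with this delicate balance.
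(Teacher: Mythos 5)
Your proposal is essentially the paper's own proof: the same generic multi-phase inner-product framework with a tower-function schedule for the false-positive caps, per-phase Chernoff bounds of $\exp(-\Omega(k_j))$ with every $k_j$ floored at $E$, the Feder-et-al.\ exchange of Corollary~\ref{cor:hash-test} to keep the cost proportional to the false-positive count, and the rule of answering \emph{yes} (for $\ExistsEqual$) when the cap is exceeded. The calibration you flag as the crux is resolved exactly as you anticipate: the paper takes $k_j = \max\{k/(2^{j-1}\exp^{(j)}(2)),\,E\}$ and $l_j = 3+\exp^{(j-1)}(2)$, where the tower factor gives $\log^*(k/E)$ phases and the additional geometric factor $2^{j-1}$ is what makes $\sum_j k_{j-1}\bigl(l_j+\log(k/k_{j-1})\bigr)$ a convergent series summing to $O(k)$ (a pure tower decay alone would give $O(k\log^*(k/E))$).
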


\begin{proof}
We first give the protocol for $\ExistsEqual$, then apply the necessary changes
to make it work for $\EqualityTesting$.

The protocol for $\ExistsEqual$ uses our generic protocol, and imposes a strict upper bound $k_j$ on $|I_j|$. 
Whenever $|I_j|$ exceeds this upper bound, we halt the entire protocol and answer \emph{yes}
(there exists a coordinate where the input vectors are equal).
We start by setting the parameters $k_j$ and $l_j$ for any $j \in [1,\log^*(k/E)]$ as follows.
\begin{align*}
k_0 & = k,\\
k_j & = \max\left\{\frac{k}{2^{j-1}\exp^{(j)}(2)},E\right\}, \\
l_j & = 3 + \exp^{(j-1)}(2).
\end{align*}
Note that it is reasonable to assume $k_j > E$ before the last phase, since whenever we find $k_j \leq E$, we can simply terminate the protocol prematurely after phase $j$, and our goal would be achieved.

Now suppose the input vectors share no equal coordinates. 
We know that $|I_{j-1}| {}\le k_{j-1}$ at the beginning of phase $j$.
The probability of any particular coordinate in $I_{j-1}$ passing all tests in phase $j$ is exactly $p_j = \exp(-l_j)$. 
Thus, the expected size of $I_j$ is at most
\[
k_{j-1}p_j = \frac{k}{2^{j-2}\exp^{(j-1)}(2)} \cdot \frac{1}{2^3\exp^{(j)}(2)} \le \frac{k}{2^{j+2}\exp^{(j)}(2)} \le \frac{k_j}{8}.
\]

Recall the statement of the usual Chernoff bound.

\begin{fact}[See \cite{Dubhashi09}]
Let $X = \sum_{i=1}^n X_i$, where each $X_i$ is an i.i.d.~Bernoulli random variable.
Letting $\mu = \E[X]$, the following inequality holds for any $\delta > 0$.
\[
\Pr[X \ge (1+\delta)\mu] \le \left(\frac{e^{\delta}}{(1+\delta)^{1+\delta}}\right)^{\mu}.
\]
\end{fact}
In our case $X_i=1$ iff the $i$th coordinate in $I_{j-1}$ survives to $I_j$.
By linearity of expectation, $\mu \le k_j/8$.
Setting $\delta = k_j/\mu - 1 \ge 7$, we have
\[
\Pr[X \ge k_j]
= \Pr[X \ge (1+\delta)\mu] \\
 \le \left(\frac{e^{\delta}}{(1+\delta)^{1+\delta}}\right)^{\frac{k_j}{1+\delta}} 
< \left(\frac{e^7}{8^8}\right)^{k_j/8}
< 2^{-1.7k_j}.
\]
Hence, the probability that there are at least $k_j$ coordinates remaining after phase $j$ is at most $2^{-1.7k_j} \leq 2^{-1.7E}$, and the probability
this happens in any phase is at most $\sum_j 2^{-1.7k_j} \le 2^{-(E+1)}$.
Notice that when $x$ and $y$ share at least one equal coordinate, the error probability of this protocol is $0$ because if it fails to reduce
the number of coordinates to $E$ it (correctly) answers \emph{yes}.
The communication volume of the protocol is asymptotic to
\[
\sum_j l_j|I_{j-1}| 
{}\leq {} \sum_j l_jk_{j-1}
{}= {} \sum_j O(k/2^j) 
{}= {}O(k).
\]

For $\EqualityTesting$, we use the same $k_j$ as an upper bound on the number of false positives in $I_{j}$, instead of the size of $I_j$. Since the number of false positives is at most $k$ at the beginning, we can still use the same argument to show that with the same choice of $k_j$ and $l_j$, after $\log^*(k/E)$ phases, the number of false positives is at most $E$ with error probability $2^{-(E+1)}$.
By Lemma~\ref{lem:hamming-distance}, the number of bits we need to exchange in phase $j$ 
is $O(k_{j-1} l_j + k_{j-1} \log (k / k_{j-1}))$. 
Notice that $\log(k / k_{j - 1}) = j - 2 + \exp^{(j - 2)}(2) = O(\log l_j)$, 
so the total communication volume is still $O(k)$. 
\end{proof}

In all of our protocols, we first apply Theorem~\ref{thm:ub-s1} to reduce the number 
of coordinates (in the case of $\ExistsEqual$) 
or false positives (in the case of $\EqualityTesting$)
to be at most $E$. 
This requires no communication if $E\ge k$ to begin with.
Hence, with $\log^*(k/E)$ extra rounds and $O(k)$ communication, we will assume henceforth that all instances
of $\ExistsEqual$ have $k\leq E$ and all instances
of $\EqualityTesting$ have $\dist(x,y) \leq E$.

\subsection{A Simple {\EqualityTesting} Protocol}\label{sect:ub-et-1}

In light of Theorem~\ref{thm:ub-s1}, 
we can assume that the input vectors to $\EqualityTesting$ 
are guaranteed to differ in at most $k_0 = \min\{k,E\}$ coordinates.

\begin{theorem}\label{thm:ub-et-s2-1}
Fix any $k \ge 1$, $E \ge 1$, and $r \in [1,(\log k_0)/2]$, where $k_0 = \min\{k,E\}$. 
There exists a randomized protocol for {\EqualityTesting} length-$k$ vectors $x,y$
with Hamming distance $\dist(x,y)\leq k_0$ that uses $r$ rounds, 
$O(k + rEk_0^{1/r})$ bits of communication,
and errs with probability $\perr = 2^{-(E+1)}$.
\end{theorem}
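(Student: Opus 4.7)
The plan is to apply the generic $\EqualityTesting$ protocol of Section~\ref{sect:ub-prelim} with $r$ phases, carefully choosing the target sizes $k_j$ (an upper bound on the number of false positives after phase $j$) and test-bit counts $l_j$ so that the error probability is at most $2^{-(E+1)}$ while the total communication stays within $O(k+rEk_0^{1/r})$.

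I would use a geometric schedule $k_j = \lceil k_0^{1-j/r}\rceil$ for $j\in[0,r-1]$, and insist that $k_r=0$; so the first $r-1$ phases each shrink the number of false positives by a factor of $k_0^{1/r}$, and the last phase eliminates all survivors by a union bound. Concretely, set
\[
l_j = \lceil \log(k_{j-1}/k_j)\rceil + \lceil (E+\log r)/k_j\rceil + O(1) \qquad \text{for } j\in[1,r-1],
\]
and $l_r = \lceil E+\log k_{r-1}\rceil + O(1)$. The first summand in $l_j$ handles the geometric decay, making the expected number of survivors a small constant fraction of $k_j$; the second is a Chernoff ``boost'' chosen so that the number $f_j$ of false-positive survivors satisfies $\Pr[f_j > k_j] \le 2^{-(E+\log r + 1)}$, via the Chernoff bound quoted in the proof of Theorem~\ref{thm:ub-s1}. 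For the final phase, a direct union bound over the at most $k_{r-1}$ remaining false positives gives $\Pr[f_r>0]\le k_{r-1}\cdot 2^{-l_r} \le 2^{-(E+2)}$. Summing over the $r$ phases yields total error at most $2^{-(E+1)}$.

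For the communication bound, I invoke Corollary~\ref{cor:hash-test}, which bounds phase $j$'s communication (if designed for threshold $k_{j-1}$) by $O(k_{j-1}(l_j + \log(k/k_{j-1})))$, and decompose the total into four pieces: (i) $\sum_j k_{j-1}\log(k_{j-1}/k_j) = O(k_0\log k_0 / r)$, which is $O(Ek_0^{1/r})$ using $k_0\le E$ together with the elementary inequality $\log k_0\le r\cdot k_0^{1/r}$ (valid whenever $r\le (\log k_0)/2$, since then $k_0^{1/r}\ge 4$ and $\log x\le x$ applied to $x=k_0^{1/r}$ gives the bound); (ii) $\sum_{j<r} k_{j-1}\cdot (E+\log r)/k_j = (r-1)(E+\log r)k_0^{1/r} = O(rEk_0^{1/r})$; (iii) $k_{r-1}\cdot l_r = O(Ek_0^{1/r})$, using $\log k_0\le E$; and (iv) $\sum_j k_{j-1}\log(k/k_{j-1}) = O(k_0\log(k/k_0) + k_0\log k_0/r) = O(k + Ek_0^{1/r})$, where the first summand uses the elementary bound $x\log(k/x)\le k$ and the second is obtained by splitting $\log(k/k_{j-1}) = \log(k/k_0)+\log(k_0/k_{j-1})$ and summing the resulting geometric series (which converges because $k_0^{1/r}\ge 4$). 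Adding the four pieces yields $O(k+rEk_0^{1/r})$.

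Finally, the naive implementation uses $2r$ rounds; this is folded into exactly $r$ rounds by the pipelining scheme described at the end of Section~\ref{sect:ub-prelim}, at no asymptotic cost. The main technical obstacle—and the reason for the extra factor of $r$ relative to the $\Omega(Ek^{1/r})$ lower bound of Theorem~\ref{thm:lb-et}—is the interplay between the Chernoff boost and the geometric decay: forcing per-phase failure probability $2^{-(E+\log r)}$ costs $\Omega((E+\log r)/k_j)$ extra test bits per false-positive coordinate, which, multiplied by $k_{j-1}$ and summed over $r$ phases, contributes $\Theta(rEk_0^{1/r})$ in aggregate. Removing this slack appears to require a genuinely different strategy, which is what motivates the more sophisticated protocols of Theorems~\ref{thm:ub-et-s2-2} and~\ref{thm:ub-et-s3}.
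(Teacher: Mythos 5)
Your proof is correct and follows the same overall strategy as the paper: instantiate the generic protocol with the geometric schedule $k_j\approx k_0^{1-j/r}$, show that each phase overshoots its target with probability $\exp(-\Theta(E))$, and bound communication via Corollary~\ref{cor:hash-test}. The paper's execution is a bit more streamlined: it fixes the uniform choice $l_j=4Ek_0^{j/r-1}=4E/k_j$ (so $k_jl_j=4E$ identically), and bounds the per-phase overshoot probability directly by the binomial-coefficient union bound $\binom{k_{j-1}}{k_j}2^{-k_jl_j}\le 2^{-2E}$, using only $\binom{n}{k}\le(en/k)^k$ together with $\log(k_{j-1}/k_j)\le l_j/4$ (which is where $k_0\le E$ enters). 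You instead split $l_j$ into an explicit decay term $\log(k_{j-1}/k_j)$ and a Chernoff-boost term $(E+\log r)/k_j$ and invoke the multiplicative Chernoff bound; this is equally valid and arguably more illuminating about where the $r$ factor comes from, at the cost of more bookkeeping in the communication estimate (your pieces (i)--(iv)). One small accounting nit: as written, the per-phase failure probabilities sum to $(r-1)\cdot 2^{-(E+\log r+1)}+2^{-(E+2)}=\tfrac32\cdot 2^{-(E+1)}$, which slightly overshoots the stated $2^{-(E+1)}$; this is a constant you can absorb into the unspecified $O(1)$ in your $l_j$, but you should pin the constants down so the final bound lands where claimed.
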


\begin{proof}
We instantiate the generic protocol.  The parameter $l_j$ is the number
of test bits generated per coordinate of $I_{j-1}$ in phase $j$.
The parameter $k_j$ is an upper bound on the number of 
false positives surviving in $I_j$
(with high probability $1-2^{-\Theta(E)}$).
\begin{align*}
k_j & = k_0^{1-j/r}, \\
l_j & = 4Ek_0^{j/r-1}.
\end{align*}

Now fix a phase $j \in [1,r]$ and suppose at the beginning of phase $j$ that the number of false positives in $I_{j-1}$ is at most $k_{j-1}$.  By assumption this holds
for $j=1$.  The probability that at least $k_j$ false positives
survive phase $j$ is upper bounded by
\begin{align*}
\binom{k_{j-1}}{k_j} 2^{-k_jl_j}
& \le \left(\frac{ek_{j-1}}{k_j}\right)^{k_j} 2^{-k_jl_j} & \tag{$\binom{n}{k} \le \left(\frac{en}{k}\right)^k$}\\
& \le 2^{2k_j\log(k_{j-1}/k_j)-k_jl_j} & \tag{$e \le k_0^{1/r} = \frac{k_{j-1}}{k_j}$ due to $r \le \frac{\log k_0}{2}$}\\
& \le 2^{-2E}. &\tag{$\log\frac{k_{j-1}}{k_j} \le \frac{k_{j-1}}{k_j} = k_0^{1/r} \le \frac{l_j}{4}$}
\end{align*}
Thus, by a union bound, the number of false positives surviving phase $j$
is \emph{strictly less than} $k_j$, for all $j\in [1,r]$,
with probability at least $1-2^{-(E+1)}$.
In particular, there are no false positives at the end since $k_r = 1$.

Meanwhile, by Lemma~\ref{lem:hamming-distance}, the total communication volume is $O(k + rEk_0^{1/r})$ since
\begin{align*}
\sum_{j=1}^r k_{j-1}l_j
& = 4rEk_0^{1/r}, \\
\intertext{and}
\sum_{j=1}^r k_{j-1}\log\frac{k}{k_{j-1}}
& = k_0\sum_{j=0}^{r-1} \frac{1}{k_0^{j/r}}\left(\log\frac{k}{k_0} + \log k_0^{j/r}\right) \\
& \le 2k_0\log\frac{k}{k_0} + k_0\sum_{j=0}^{r-1}\frac{\log k_0^{j/r}}{k_0^{j/r}} & \tag{$k_0^{1/r} \ge 2^2$ due to $r \le \frac{\log k_0}{2}$}\\
& = O(k). & \tag{$k_0^{1/r} \ge 2^2$ and $k_0 \le k$}
\end{align*}
\end{proof}

\begin{proof}[Proof of Theorem~\ref{thm:ub-et-1}]
Applying Theorem~\ref{thm:ub-s1} and Theorem~\ref{thm:ub-et-s2-1} in sequence, 
we obtain a $(\log^*(k/E)+r)$-round randomized protocol for {\EqualityTesting} 
on vectors of length $k$ that errs with probability 
$\perr = 2^{-E}$ and uses $O(k+rE\min\{k,E\}^{1/r})$ bits of communication.
When $E\ge k$ the protocol is obtained directly from 
Theorem~\ref{thm:ub-et-s2-1} and uses $O(rEk^{1/r})$ communication.
When $E < k$ the communication implied by Theorems~\ref{thm:ub-s1}
and~\ref{thm:ub-et-s2-1} is $O(k + rE^{1+1/r}) = O(k + rEk^{1/r})$.\footnote{It appears as if $rE^{1+1/r}$ is an improvement over $rEk^{1/r}$ when $E<k$,
but this is basically an illusion.
In light of Remark~\ref{rem:logstar}, 
we can always dedicate $\log^*(k/E)-2$ rounds to the first part
and $r+2$ rounds to the second part while increasing the communication by $O(k)$.
When $E\ge k^{1-1/r}$, $rE^{1+1/r} = \Omega((r+2)Ek^{1/(r+2)})$,
meaning there is no clear benefit to use the $rE^{1+1/r}$ expression.}
\end{proof}

\subsection{An Optimal {\ExistsEqual} Protocol}\label{sect:ub-ee}

\subsubsection{Overview of the Protocol}\label{sect:desc-ub-ee}

In this section, we show that we can obtain a $(\log^{*}(k/E) + r)$-round, 
$O(k + Ek^{1/r})$-bit protocol for $\ExistsEqual$.
This matches the lower bound of Theorem~\ref{thm:lb-ee}, asymptotically, when $E\ge k$. 
Theorem~\ref{thm:ub-s1} covers dimension reduction in $\log^*(k/E)$ rounds,
so we assume without loss of generality that $E\ge k$ and we have exactly $r$ rounds.

Suppose the inputs $x$ and $y$ share no equal coordinates.
Imagine writing down all the possible results of the inner product tests in a matrix 
$A$ of dimension $(E + \log k) \times k$, 
where $A_{q,i}$ is ``$=$'' if $x_i, y_i$ pass the 
$q$th inner product test, and ``$\neq$'' otherwise. 
By a union bound, with probability $1 - 2^{-E}$, each column contains at least one ``$\neq$''. 
Now consider the area above the first ``$\neq$'' in each column.  
The probability that this area is at least $E'$ is, by a union bound,
at most
\begin{equation}\label{eqn:error-budget}
\binom{E' + k-1}{k-1}2^{-E'} < \exp(k\log(e(E'+k)/k) - E').
\end{equation}
For $E' = E + O(k\log(E/k)) = O(E)$, this probability is $\ll 2^{-E}$.
In our analysis it suffices to consider a situation where an \emph{adversary}
can decide the contents of $A$, subject to the constraint that its \emph{error budget}
(the area above the curve defined by the first ``$\neq$'' in each column) never exceeds $E'=O(E)$.
The notion of an error budget is also essential for analyzing the $\EqualityTesting$ protocols of Section~\ref{sect:ub-et-2} Section~\ref{sect:ub-et-3}.

In the $j$th phase, $j \geq 1$, our protocol exposes the fragment of $A$ consisting of the 
next $l_j$ rows of columns in $I_{j-1}$.  The set $I_j$ consists of those columns without any ``$\neq$'' exposed so far.
The \emph{communication budget} for phase $j$ is equal to $l_j|I_{j-1}|$.
In the worst case, the first exposed value in each column of $I_{j-1}\setminus I_j$ is ``$\neq$'',
so the adversary spends at least $l_j|I_j|$ of its \emph{error budget} in phase $j$.

If we witness at least one ``$\neq$'' in every column, we can correctly declare there does not exist an 
equal coordinate and answer \emph{no}. Otherwise, if the adversary has not exceeded his error budget 
but there is some column without any ``$\neq$'', we answer \emph{yes}. 
If the adversary ever exhausts his error budget, we terminate the protocol and answer \emph{yes}.
Recall that the notion of an error budget tacitly assumed that $x$ and $y$ differ in every coordinate.  It is important to note that if they do \emph{not} 
differ in every coordinate, the protocol answers correctly 
with probability 1, regardless of whether the 
protocol halts prematurely or not.  Thus, there is nothing to prove in this case
and it is fine to measure the error budget expended \emph{as if} Alice's and Bob's 
inputs differ in all coordinates.
The probability that the error budget is exhausted when $x$ and $y$ differ in all coordinates (causing the algorithm 
to incorrectly answer \emph{yes})
is $\ll 2^{-E}$, according to Eqn.~(\ref{eqn:error-budget}).

\subsubsection{Analysis}

In this section we give a formal proof to the following Theorem:

\begin{theorem}\label{thm:ub-ee-s2}
Fix any $k \ge 1$, $E \ge k$, and $r \in [1,(\log k)/2]$. There exists an $r$-round randomized protocol for $\ExistsEqual$ on vectors of length $k$ that errs with probability $\perr=2^{-(E+1)}$, using $O(Ek^{1/r})$ bits of communication.
\end{theorem}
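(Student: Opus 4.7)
The plan is to instantiate the generic protocol from Section~\ref{sect:ub-prelim} with geometrically growing per-phase test budgets $l_j = c_1 E k^{j/r-1}$, where the constant $c_1$ is chosen so that $S_r := \sum_{u=1}^r l_u \geq E + \log k + 2$. The hypothesis $r \leq (\log k)/2$ forces $k^{1/r} \geq 4$, so the partial sums form a roughly geometric sequence with $S_j = \Theta(l_j)$ and successive ratio $k^{1/r}$. I fix an error budget $E' = c_2 E$ for a constant $c_2$ large enough that Eqn.~(\ref{eqn:error-budget}) yields $\Pr[\mathrm{area}(A) > E'] \leq 2^{-(E+2)}$; here the hypothesis $E \geq k$ is exactly what is needed to absorb the $k\log(E/k)$ term. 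To obtain a worst-case communication bound, the protocol additionally halts and answers ``yes'' as soon as the cumulative communication would exceed a threshold $T = (c_1+c_2) E k^{1/r}$.

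Correctness splits on the input. When Alice's and Bob's inputs share at least one equal coordinate, that coordinate passes every inner product test deterministically and keeps $|I_r| \ge 1$, so the ``yes'' answer is always correct---even if the safeguard triggers. When the inputs differ on every coordinate, the algorithm errs only if (a) the safeguard triggers, or (b) the protocol runs to completion but $I_r \ne \emptyset$. Event (b) has probability at most $k\cdot 2^{-S_r} \le 2^{-(E+2)}$ by a union bound over coordinates. For event (a), I will argue below that the safeguard can only trigger when $\mathrm{area}(A) > E'$, which has probability at most $2^{-(E+2)}$ by the choice of $c_2$. A union bound then gives $\perr \le 2^{-(E+1)}$.

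The main technical step, where the naive analysis would introduce a spurious factor of $r$, is the communication bound. Conditional on $\mathrm{area}(A) \le E'$, every column in $I_u$ contributes at least $S_u$ to the area (one unit per ``$=$'' seen across phases $1,\ldots,u$), so $|I_u| \le E'/S_u$. The phase-wise estimate $\sum_{j\ge 2} l_j|I_{j-1}| \le \sum_j l_j E'/S_{j-1} = \Theta(rEk^{1/r})$ loses a factor of $r$. To remove it, I apply summation by parts: letting $u^*(i) = \max\{u : i \in I_u\}$, the identity
\[
\sum_{u=1}^r l_u |I_u| \;=\; \sum_i S_{u^*(i)} \;\le\; \mathrm{area}(A) \;\le\; E'
\]
converts the phase-wise estimates into a single global inequality. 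Because the geometric schedule enforces $l_j = k^{1/r}\, l_{j-1}$, this yields $\sum_{j=2}^r l_j |I_{j-1}| = k^{1/r}\sum_{u=1}^{r-1} l_u |I_u| \le k^{1/r} E' = O(Ek^{1/r})$. Together with $l_1|I_0| = l_1 k = O(Ek^{1/r})$, the total communication is $O(Ek^{1/r})$ whenever $\mathrm{area}(A) \le E'$, which establishes both the claimed worst-case bound and---by contrapositive---the fact used in the correctness argument that the safeguard can only trigger when $\mathrm{area}(A) > E'$.
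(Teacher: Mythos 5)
Your proposal is correct and takes essentially the same route as the paper. The paper's running quantity $E_j = \sum_{j'\le j} l_{j'}|I_{j'}|$ is exactly the same object you package as $\sum_i S_{\min(u^*(i),j)}$, the paper halts when $E_j\ge E'$ while you halt when cumulative communication would exceed $T$, and both halting rules are equivalent up to the contrapositive you spell out; the decisive step $\sum_{j\ge 2}l_j|I_{j-1}| = k^{1/r}\sum_{u\le r-1} l_u|I_u| \le k^{1/r}E'$ is identical in both arguments, with your ``summation by parts'' identity just making explicit what the paper's definition of $E_j$ leaves implicit.
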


\begin{proof}
The number of tests per coordinate in phase $j$ is $l_j$:
\begin{align*}
l_j & = 2Ek^{j/r-1}.
\end{align*}
Define $E_j = \sum_{j' = 1}^{j} l_{j'}|I_{j'}|$ to be the portion of the error budget spent in phases 1 through $j$.
We can express the asymptotic 
communication cost of the protocol in terms of the error budget as follows.
\begin{align*}
\sum_{j=1}^r l_j |I_{j-1}|
& \le l_1|I_0|{} + k^{1/r}\sum_{j=2}^{r} l_{j-1} |I_{j-1}| & \mbox{$l_j = k^{1/r} l_{j-1}$.}\\
& \le 2Ek^{1/r} + E_{r-1}k^{1/r} & \mbox{Defn. of $E_{r-1}$.}\\
\intertext{Recall that the protocol terminates immediately after phase $j$ if $E_j \ge E'$, which indicates $E_{r-1} < E'$.  Hence, the total cost is bounded by}
& \le (2E+E')k^{1/r} \:=\: O(Ek^{1/r}).
\end{align*}

The protocol can only err if $x$ and $y$ differ in every coordinate.
In this case, there are two possible sources of error.
The first possibility is that the protocol answers \emph{yes} because $|I_r| {}\ge 1$.
By a union bound, this happens with probability at most
\[
k 2^{-\sum_{j=1}^r l_j} \leq k2^{-l_r} = k2^{-2E}.
\]

The second possibility is that the protocol terminates 
prematurely and answers \emph{yes}
if $E_j \ge E'$ for some $j \in [1,r]$.
The probability of this event occuring is also
$\ll 2^{-E}$; see Eqn.~(\ref{eqn:error-budget}).
This concludes the proof.
\end{proof}

\begin{proof}[Proof of Theorem~\ref{thm:ub-ee}]
Theorem~\ref{thm:ub-ee} follows directly by combining Theorem~\ref{thm:ub-s1} and Theorem~\ref{thm:ub-ee-s2}.
\end{proof}

\subsection{A $\log k$-Round Communication Optimal {\EqualityTesting} Protocol}\label{sect:ub-et-2}

Suppose we want a communication optimal $\EqualityTesting$ 
protocol using $O(k+E)$ bits.  When $E\ge k$ we 
need $r=\Omega(\log k)$ rounds, by Theorem~\ref{thm:lb-et}.
In this section, we give a protocol for $\EqualityTesting$ that uses $O(r)$ rounds (rather than $r$) 
and $O(Ek^{1/r})$ bits of communication, assuming $E\ge k$.
Observe that when $r=\Theta(\log k)$, 
there is no (asymptotic) difference between $r$ rounds
and $O(r)$ rounds as this only influences the leading
constant in the communication volume.

\subsubsection{Overview of the Protocol}\label{sect:desc-ub-et-2}

The protocol uses the concept of an \emph{error budget} 
introduced in Section~\ref{sect:ub-ee}. 
To shave the factor $r$ off the communication volume, 
we cannot afford to use $Ek^{j/r - 1}$ test bits 
for each coordinate that participates in phase $j$. 
Consequently, we \emph{cannot guarantee} with high probability (say $1 - 2^{-\Theta(E)}$) 
that the number of false positives in $I_j$ is less than $k^{1 - j/r}$.

Our protocol needs to be able to respond to the rare event that the number of false positives in $I_j$ is larger than $k_j$.
Notice that this type of error cannot be detected in the 
first $j$ phases, and is not easily detectable in the following phases.  The danger in the number of false positives in $I_j$ exceeding $k_j$
is that when the test bits for phase $j+1$ 
are exchanged using Lemma~\ref{lem:hamming-distance},
the protocol may silently fail, 
with all test bits potentially corrupted.

To address these challenges, 
Alice and Bob each keep a \emph{history} of all the test bits they have generated so far. They also keep a history of the test bits they have received from the other party, 
\emph{which may have been corrupted}. 
Define $T_{A}$ and $T_{B}$ to be the true history of the test bits generated by Alice and Bob, 
respectively.  Define $T_{B}^{(A)}$ to be what Alice believes Bob's history to be,
and define $T_{A}^{(B)}$ analogously.  Observe that if every invocation of Lemma~\ref{lem:hamming-distance} succeeds, then $T_A = T_A^{(B)}$ and $T_B = T_B^{(A)}$.

To detect inconsistencies, after Alice and Bob generate and exchange their test bits for phase $j$,
they accumulate their views of the history into 
strings $T^{(A)} = T_A \circ T_B^{(A)}$
and $T^{(B)} = T_A^{(B)} \circ T_B$, respectively,
where $\circ$ is the concatenation operator,
and verify that $T^{(A)}=T^{(B)}$ with a certain number of inner product tests.  
This is called a \emph{history check}.
If the history check passes, 
they can proceed to phase $j+1$.
If the history check fails then the results of phase $j$
are junk, and we can infer that one of two types of 
low probability events occurred in phase $j-1$.
The first possibility is that the test bits at phase
$j-1$
were exchanged successfully (and consequently,
the history check succeeded), 
but $I_{j-1}$ contains more than $k_{j-1}$ false positives.  The second possibility is that
Alice's and Bob's histories were already inconsistent
at phase $j-1$, but the phase-$(j-1)$ 
history check failed to detect this.
Notice that Alice and Bob cannot detect
which of these types of errors occurred. 
In either case, we must undo the effects of 
phases $j$ and $j - 1$ and restart the protocol
at the beginning of phase $j-1$.
It may be that the history check then fails at the
re-execution of phase $j-1$, in which case we would continue to rewind to the beginning of phase 
$j-2$, and so on.
Being able to rewind multiple phases is important 
because we do not know which phase suffered the \emph{first} error. 

Both parties maintain an empirical 
\emph{error meter} $E''$ that measures the sum of logarithms of probabilities of low probability (error) events that have been detected.  If the error meter ever exceeds the error budget $E' = \Theta(E)$ we terminate the protocol, which we show occurs with probability $\ll 2^{-E}$. 
Thus, the process above (proceeding iteratively with phases, undoing and redoing them when errors are detected) must end by either successfully completing phase $r$ or exceeding the error budget.

If Alice and Bob successfully finish phase $r$, 
we are still not done. 
This is because an error can happen in the later phases 
but we do not have sufficiently high ($1-2^{-E}$) confidence that they all succeeded.
To build this confidence, 
Alice and Bob do inner product tests on the whole
history, gradually increasing their number until $\Theta(E)$ tests have been done.  If one of these history checks fails, we increase the error meter $E''$ appropriately and rewind the protocol to a suitable phase $j$ in the
first stage of the protocol.

Let us make every step of this protocol more quantitatively precise.
\begin{itemize}
    \item The protocol has two stages, the \emph{Refutation Stage} 
    (in which potential equalities are refuted)
    and the \emph{Verification Stage}, each consisting of a series of \emph{phases}.  Although the Refutation Stage logically precedes
    the Verification Stage, because phases can be undone, an execution
    of the protocol may oscillate between Refutation and Verification
    multiple times.
    \item The Refutation Stage is similar to the protocol in Section~\ref{sect:ub-et-1} except Alice and Bob will verify whether
    the messages conveyed by Lemma~\ref{lem:hamming-distance} are successfully
    received with further inner product tests.
    The \emph{budget} of phase $j$ is
    \[
    B_j = \frac{Ek_0^{1/r}}{\min\{j^2, r\}}.
    \]
    Observe that the sum of budgets, $\sum_{j'=1}^r B_{j'}$, is $O(Ek_0^{1/r})$.
    Thus, in phase $j$, we perform $l_j = B_j/k_{j-1}$ independent inner product tests
    on each coordinate in $I_{j-1}$, and exchange test bits with a Hamming distance
    of $k_{j-1}$.  I.e., we are working under the assumption (perhaps false)
    that there are $k_{j-1}$ false positives still in $I_{j-1}$.
    As usual, $I_0$ is initially $[k]$ 
    and
    \begin{align*}
        k_0 & \leq E\\
        k_j & \leq k_0^{1-j/r}
    \end{align*}
    All histories $T_{A}, T_B, T_{B}^{(A)}, T_{A}^{(B)}$ are initially empty, 
    and the error meter $E''$ is initially zero.
    \item Phase $j$ has two steps, the \emph{test step} and the \emph{history check step}. In the test step, Alice and Bob conduct inner product tests
    as in Section~\ref{sect:ub-et-1}, i.e., they generate $l_j$ test bits for 
    each coordinate in $I_{j - 1}$ and exchange them using Lemma~\ref{lem:hamming-distance}, assuming their Hamming distance is at most $k_{j-1}$.  Alice appends the test bits she generates onto the history $T_{A}$, and appends the test bits she receives from Bob 
    onto $T_{B}^{(A)}$. Bob does likewise.  In the history check step,
    they use $B_j$ independent inner product tests to check whether $T^{(A)}=T^{(B)}$,
    where $T^{(A)} = T_{A} \circ T_B^{(A)}$, and
    $T^{(B)} = T_A^{(B)} \circ T_B$. 
    The history check \emph{fails} if they detect inequality 
    and \emph{passes} otherwise.  Since $B_j$ is, in general, less than $E$,
    we are still skeptical of history checks that pass.
    \item If the history check for phase $j$ passes, Alice and Bob proceed to  phase $j+1$, or proceed to the Verification Stage if $j=r$. 
    Otherwise, an error has been detected: either the number of false
    positives in $I_{j-1}$ is at least $k_{j-1}$, or the history
    check at phase $j-1$ \emph{mistakenly} passed.  The latter occurs with probability $\exp(-B_{j-1})$ and we show the former occurs with probability 
    $\exp(-3k_0^{-1/r}B_{j-1}/4)$.  Not knowing which occurred, we
    increment the error meter $E''$ by $k_0^{-1/r}B_{j-1}/2$ due to a union bound.  If $E''$ exceeds 
    the \emph{error budget} $E' = cE$ then we halt, where $c\geq 2$ is 
    a suitable constant.  
    Otherwise we retract the effects of phases $j$ and $j-1$
    and continue the protocol at the beginning of phase $j-1$, 
    with ``fresh'' random bits so as not to recreate previous errors.
    \item Observe that after phase $r$ of the Refutation Stage,
    each coordinate in $I_r$ has only passed about $B_r/k_{r-1} = E/r$ inner product tests, which is not high enough.  
    Before the Verification Stage begins, Alice and Bob
    each generate $E'$ test bits for each coordinate in $I_r$
    and append them to $T^{(A)}$ and $T^{(B)}$.  
    (This can be viewed as a degenerate instantiation of Lemma~\ref{lem:hamming-distance} with $d=0$,
    which requires no communication.)
    If there are no false positives in $I_r$, these test bits must be identical.
    \item In the Verification Stage the phases are indexed in reverse order: $r,r-1,\ldots,1$.  In each successive phase $j$, Alice and Bob test
    the equality $T^{(A)} = T^{(B)}$ with $B_j$ independent inner product tests.  This process
    stops if it passes a total of $E'$ tests, in which case they report
    that $x$ and $y$ are \emph{equal} on $I_r$ and \emph{not equal} on $[k]\backslash I_r$,
    or some Verification phase $j$ detects that 
    $T^{(A)} \neq T^{(B)}$.  In this case,
    we know Verification phases $r,r-1,\ldots,j+1$ passed \emph{in error}, 
    and that there must
    also have been an error in Refutation phase $r$.  Therefore, 
    Alice and Bob increment $E''$ by $k_0^{-1/r}B_r/2 + \sum_{j' = j+1}^r B_{j'}$
    and halt if $E'' \ge E'$.  If not, they rewind the execution of the protocol
    to phase $j$ of the Refutation Stage and continue.
\end{itemize}

Algorithm~\ref{alg:et-s2-2-p1} recapitulates this description in the form
of pseudocode, from the perspective of Alice.  
Here $T_A[j,i]$ refers to the sequence of Alice's test bits in $T_A$ 
for the $i$th coordinate produced in the most recent execution of phase $j$, and $T_{A}[j_1 \cdots j_2, \cdot]$ refers to the test bits generated from phase $j_1$ to phase $j_2$. Phase $r+1$ refers to the $E'\times {} |I_r|$ test bits generated 
between the Refutation and Verification stages.
$T^{(A)}[j,i]$ refers to the concatenation of $T_{A}[j, i]$ and 
$T_{B}^{(A)}[j, i]$. 

\begin{algorithm}[!b]
\caption{An {\EqualityTesting} protocol for Theorem~\ref{thm:ub-et-s2-2} (from the perspective of Alice).}
\label{alg:et-s2-2-p1}
\begin{algorithmic}[1]
\Procedure{EqualityTesting}{}\Comment{main procedure}
    \State $I_0 \gets [k]$  
    \State $k_0 \gets \min\{k,E\}$ \Comment{initial bound on Hamming distance}
    \State $E' \gets cE$            \Comment{error budget}
    \State $E'' \gets 0$            \Comment{error meter}
    \For{$j \gets 1,\ldots,r$}
        \State $\displaystyle B_j \gets \frac{Ek_0^{1/r}}{\min\{j^2,r\}}$     \Comment{phase $j$ communication budget}
        \State $k_j \gets k_0^{1-j/r}$                          \Comment{\emph{ideal} upper bound on Hamming distance}
        \State $l_j \gets B_j/k_{j-1}$                          \Comment{tests per coordinate}
    \EndFor
    \State \Call{Refutation}{$1$}
    \State \Call{Verification}{$r$}
    \State Output \emph{equal} on coordinates $I_r$ and \emph{not equal} on $[k]\backslash I_r$
\EndProcedure
\algstore{ub-et-2}
\end{algorithmic}
\end{algorithm}

\addtocounter{algorithm}{-1}

\begin{algorithm}[!p]
\caption{An {\EqualityTesting} protocol for Theorem~\ref{thm:ub-et-s2-2} (from the perspective of Alice).(cont.)}
\label{alg:et-s2-2-p2}
\begin{algorithmic}[1]
\algrestore{ub-et-2}
\Procedure{InnerProductTest}{$w$,$b$}
    \State perform $b$ independent inner product tests on $w$
    and return the test bits
\EndProcedure
\Statex
\Procedure{Refutation}{$j$}\Comment{phase $j$ of the Refutation Stage}
    \State $T_A[j,\cdot] \gets \perp$\Comment{Clear test bits for phase $j$}
    \ForAll{$i \in I_{j-1}$}
        \State $T_A[j,i] \gets \Call{InnerProductTest}{x_i,l_j}$
    \EndFor
    \State send $T_A[j,\cdot]$ to Bob and receive $T_B^{(A)}[j,\cdot]$ from Bob via Lemma~\ref{lem:hamming-distance}
    \State $T^{(A)}[j,\cdot] \gets T_A[j,\cdot] \circ T_B^{(A)}[j,\cdot]$
    \State $\hat{T}^{(A)} \gets \Call{InnerProductTest}{T^{(A)}[1 \cdots j,\cdot],B_j}$ \label{line:ref-hash}
    \State send $\hat{T}^{(A)}$ to Bob and receive $\hat{T}^{(B)}$ from Bob directly
    \If{$\hat{T}^{(A)} = \hat{T}^{(B)}$}\Comment{passed history check}\label{line:ref-if}
        \State $I_j \gets \{i \in I_{j-1} \mid T_A[j,i] = T_B^{(A)}[j,i]\}$\Comment{all coords.~not yet refuted}
        \If{$j < r$}
            \State \Call{Refutation}{$j+1$}
        \Else
            \State $T^{(A)}[r+1,\cdot] \gets \perp$
            \ForAll{$i \in I_r$}
                \State $T^{(A)}[r+1,i] \gets \Call{InnerProductTest}{x_i,E'}$
            \EndFor
        \EndIf
    \Else\label{line:ref-else}
        \State $E'' \gets E'' + k_0^{-1/r}B_{j-1}/2$,
        and terminate if $E'' \ge E'$ \Comment{update error meter}\label{line:ref-err-budget}
        \State \Call{Refutation}{$j-1$}
    \EndIf
\EndProcedure
\Statex
\Procedure{Verification}{$j$}\Comment{phase $j$ of the Verification Stage}
    \State $\hat{T}^{(A)} \gets \Call{InnerProductTest}{T^{(A)}[\cdot,\cdot],B_j}$
    \State send $\hat{T}^{(A)}$ to Bob and receive $\hat{T}^{(B)}$ from Bob directly
    \If{$\hat{T}^{(A)} = \hat{T}^{(B)}$}\label{line:ver-if}
        \If{$\sum_{j'=j}^r B_{j'} < E'$}\label{line:ver-budget}\Comment{insufficiently confident to halt}
            \State \Call{Verification}{$j-1$}
        \EndIf
    \Else\label{line:ver-else}\Comment{error detected}
        \State $E'' \gets E'' + k_0^{-1/r}B_r/2 + \sum_{j'=j+1}^r B_{j'}$,
        and terminate if $E'' \ge E'$ \Comment{update error meter}\label{line:ver-err-budget}
        \State \Call{Refutation}{$j$}\Comment{rewind protocol to phase $j$}
        \State \Call{Verification}{$r$}
    \EndIf
\EndProcedure
\end{algorithmic}
\end{algorithm}

\subsubsection{Analysis}

To prove Theorem~\ref{thm:ub-et-2}, it suffices to prove the following Theorem~\ref{thm:ub-et-s2-2}.

\begin{theorem}\label{thm:ub-et-s2-2}
Fix any $k \ge 1$, $E \ge 1$, and $r \in [1,(\log k_0)/6]$, where $k_0 = \min\{k,E\}$. There exists 
a randomized protocol for {\EqualityTesting} length-$k$ vectors $x,y$ with Hamming distance $\dist(x,y) \leq k_0$
that uses $O(r)$ rounds, $O(k + Ek_0^{1/r})$ bits of communication, and errs with probability $\perr=2^{-(E+1)}$.
\end{theorem}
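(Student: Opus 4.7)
My plan is to analyze Algorithm~\ref{alg:et-s2-2-p1} by decoupling three bounds---per-execution cost, number of phase-executions, and error probability---and reconciling them through the error meter $E''$. For communication per phase, a single Refutation phase $j$ uses $l_j|I_{j-1}| + O(k_{j-1}\log(k/k_{j-1})) + B_j = O(B_j + k_{j-1}\log(k/k_{j-1}))$ bits (the Lemma~\ref{lem:hamming-distance} exchange plus the $B_j$-bit history check), and each Verification phase $j$ uses $O(B_j)$ bits. Since $\sum_{j=1}^r B_j = O(Ek_0^{1/r})$ (the series $\sum 1/\min\{j^2,r\}$ is $O(1)$) and $\sum_j k_{j-1}\log(k/k_{j-1}) = O(k)$ by the same geometric sum used in Theorem~\ref{thm:ub-et-s2-1}, a single forward pass costs $O(k + Ek_0^{1/r})$.

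Next I would bound rewinds. Each rewind increments $E''$ by at least $k_0^{-1/r}B_r/2 = E/(2r)$, so the number of rewinds is at most $E'/(E/(2r)) = 2cr = O(r)$, giving $O(r)$ total phase-executions and hence $O(r)$ rounds. A finer amortization handles the communication: a Refutation rewind from phase $j$ increments $E''$ by $\Theta(k_0^{-1/r}B_{j-1})$ while rendering only $O(B_{j-1} + B_j) = O(B_{j-1})$ bits wasted (using $B_j/B_{j-1} = O(1)$), a ratio of $O(k_0^{1/r})$ bits per unit of $E''$; the Verification rewind from phase $j$ is analogous once one observes that its increment pays for the re-executed Refutation phases $j,\ldots,r$ as well. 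Summed over all rewinds, wasted communication is $O(k_0^{1/r}) \cdot O(E) = O(Ek_0^{1/r})$, so the overall communication is $O(k + Ek_0^{1/r})$.

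The main obstacle, and the most delicate step, is the error analysis, because the observations driving rewinds may themselves be corrupted. I would introduce an idealized execution in which (a) every Lemma~\ref{lem:hamming-distance} call has true Hamming distance $\le k_{j-1}$ and (b) every history-check outcome correctly reflects whether $T^{(A)} = T^{(B)}$ (no false-passes). Conditional on the idealization holding through phase $j-1$, I would prove two bounds: a Chernoff bound shows the binomial number of false positives surviving phase $j-1$ exceeds $k_{j-1}$ with probability at most $\exp(-\Theta(k_0^{-1/r}B_{j-1}))$; and the one-sided inner-product test has a false-pass probability of $2^{-B_j}$. A history-check failure at phase $j$ thus certifies some deviation at phase $j-1$ whose probability matches (the exponent of) the chosen increment $k_0^{-1/r}B_{j-1}/2$ up to a constant factor, so a supermartingale argument on $E''$ yields $\Pr[E'' \ge cE] \le 2^{-\Omega(E)} \ll 2^{-(E+2)}$ for $c$ sufficiently large.

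Finally, conditional on normal termination, the Verification stage has performed at least $E'$ independent inner-product tests on the full history $T^{(A)}$ versus $T^{(B)}$, so the probability that any remaining inconsistency---a surviving false positive in $I_r$ or a corrupted test bit that dropped a true equality---goes undetected is at most $2^{-E'} \ll 2^{-(E+2)}$. Combining these two tail bounds gives total error $\le 2^{-(E+1)}$. Theorem~\ref{thm:ub-et-2} then follows by prepending Theorem~\ref{thm:ub-s1} to obtain an instance with $\dist(x,y) \le k_0 = \min\{k,E\}$ using $\log^*(k/E)$ additional rounds and $O(k)$ additional communication.
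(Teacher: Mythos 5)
Your plan follows the paper's own proof of Theorem~\ref{thm:ub-et-s2-2} quite closely: the same per-phase cost decomposition, the same amortization of rewinds against the error meter $E''$, and the same error analysis in which a detected failure certifies a prior deviation whose probability exponent is at least the increment charged to $E''$ (these are Lemmas~\ref{lem:test-err-budget} and~\ref{lem:check-err-budget}). Your exponential-moment/supermartingale bound on $E''$ is a legitimate substitute for the paper's union bound over the $2^{O(r)}$ possible execution patterns; both give $\Pr[\text{premature termination}] \le 2^{O(r)-E'}$, which is absorbed since $r \le (\log k_0)/6$ and $E'=cE$ with $c$ large. Two small quantitative slips are harmless: the undetected-error term after normal termination should be about $(k_0+1)2^{-E'}$ rather than $2^{-E'}$ (union over the up to $k_0$ false positives in $I_r$, plus the event that all $\ge E'$ Verification tests falsely pass), and the step from ``$O(r)$ rewinds'' to ``$O(r)$ phase-executions'' needs the observation that a Verification rewind at phase $j$ re-executes $\Theta(r-j+1)$ phases but also charges $E''$ an amount $k_0^{-1/r}B_r/2+\sum_{j'>j}B_{j'} \ge (r-j+1)E/(2r)$ proportional to that count---an accounting you invoke for communication but also need for the round bound.

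The one genuine gap is in the rewind communication accounting. You charge only $O(B_{j-1}+B_j)$ wasted bits per Refutation rewind (and analogously for Verification rewinds), which omits the Lemma~\ref{lem:hamming-distance} overhead $\Theta\bigl(k_{j'-1}\log(k/k_{j'-1})\bigr)$ incurred each time phase $j'$ is re-executed; your $O(k)$ bound on these terms covers only a single forward pass. This overhead is \emph{not} $O(B_{j'-1})$ in general: when $k \gg E$, say $k=2^{E}$ and $r=3$, a rewind from phase $2$ re-executes phase $1$, whose exchange overhead is $k_0\log(k/k_0)\approx E^2$, while $B_1 = Ek_0^{1/r}=E^{4/3}$, so the claimed per-rewind waste bound fails in a regime the theorem allows. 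The conclusion still holds, but it needs the separate argument of Lemma~\ref{lem:overhead}: each failure that triggers a re-execution of phase $j$ consumes at least $k_0^{-1/r}B_{j-1}/2 = E/\bigl(2\min\{(j-1)^2,r\}\bigr)$ of the budget $E'=cE$, so phase $j$ is repeated only $O(\min\{j^2,r\})$ times, and then $\sum_{j} O(j^2)\,k_{j-1}\log(k/k_{j-1}) = O(k)$ because $k_{j-1}=k_0^{1-(j-1)/r}$ decays geometrically with ratio $k_0^{-1/r}\le 2^{-6}$, which dominates both the $j^2$ factor and the logarithm (the same computation as in Theorem~\ref{thm:ub-et-s2-1}). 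With that repair, your argument matches the paper's.
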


The protocol of Lemma~\ref{lem:hamming-distance} \emph{fails} if Bob does not generate the correct $x' = x$, 
which indicates that the precondition is not met, i.e., $\dist(x,y) > d$.
Refutation phase $j$ \emph{fails} if the condition in line~\ref{line:ref-if} is not satisfied and the else branch 
at line~\ref{line:ref-else} is executed in order to resume the protocol from phase $j-1$.
Similarly, we say Verification phase $j$ \emph{fails} if the condition in line~\ref{line:ver-if} is not satisfied, 
which also indicates the else branch at line~\ref{line:ver-else} is executed and the protocol is resumed from Refutation 
phase $j$.

We begin the proof by showing that the extra communication caused by redoing some of the Refutation/Verification phases is properly covered by the total error budget.
The following two lemmas actually prove that the error budget spent so far is correctly lower bounded in line~\ref{line:ref-err-budget} and line~\ref{line:ver-err-budget}, and then Lemma~\ref{lem:overhead} upper bounds the total number of extra phases by $O(r)$ and the overall extra communication by $O(k+Ek_0^{1/r})$.

\begin{lemma}\label{lem:test-err-budget}
Fix any $j \in [2,r]$. If phase $j$ of the Refutation Stage fails, then the outcome of the most recent execution of 
phase $j-1$ happened with probability at most $\exp(-k_0^{-1/r}B_{j-1}/2)$.
\end{lemma}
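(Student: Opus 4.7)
The plan is to write the event ``phase $j$ of Refutation fails'' as the union of two low-probability events tied to the most recent execution of phase $j-1$, each of probability (up to constant factors in the exponent) at most $\exp(-k_0^{-1/r} B_{j-1}/2)$. Since the sketches $\hat T^{(A)}$ and $\hat T^{(B)}$ are deterministic images of $T^{(A)}$ and $T^{(B)}$ under the shared random string, phase $j$'s history check can reject only when $T^{(A)} \ne T^{(B)}$ at the end of phase $j$'s test step. I case-split on whether the histories already disagreed at the end of phase $j-1$'s test step.

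In Case (A), $T^{(A)} \ne T^{(B)}$ already after phase $j-1$'s test step. For the protocol to have advanced to phase $j$, the phase-$(j-1)$ history check---$B_{j-1}$ independent inner-product tests on strings known to differ---must all have mistakenly returned equality. This outcome has probability exactly $2^{-B_{j-1}}$.

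In Case (B), $T^{(A)} = T^{(B)}$ after phase $j-1$ but the inequality first emerges during phase $j$'s test step. The only possible cause is the Lemma~\ref{lem:hamming-distance} exchange at phase $j$ silently failing, which by the precondition of Lemma~\ref{lem:hamming-distance} requires the Hamming distance of the packed phase-$j$ test vectors---equivalently, the number of false positives remaining in $I_{j-1}$---to exceed $k_{j-1}$. Because the histories agreed at the end of phase $j-1$, the phase-$(j-1)$ exchange was itself faithful, so $I_{j-2}$ contained at most $k_{j-2}$ false positives, each surviving the $l_{j-1}$ independent phase-$(j-1)$ tests with probability $2^{-l_{j-1}}$. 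A standard binomial tail bound then yields
\[
\Pr[\#\text{false positives in }I_{j-1} \ge k_{j-1}] \;\le\; \binom{k_{j-2}}{k_{j-1}} 2^{-k_{j-1} l_{j-1}} \;\le\; (e k_0^{1/r})^{k_{j-1}}\, 2^{-k_{j-1} l_{j-1}},
\]
using $k_{j-2}/k_{j-1} = k_0^{1/r}$. Combining this with the parameter identity $k_{j-1} l_{j-1} = B_{j-1}/k_0^{1/r}$ and the bound $l_{j-1} \ge 4 \log(e k_0^{1/r})$ reduces Case (B) to $\exp(-3 k_0^{-1/r} B_{j-1}/4)$, and a routine union bound with Case (A) (which is absorbed since $B_{j-1} \ge k_0^{-1/r} B_{j-1}$) yields the claim.

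The main obstacle is the inequality $l_{j-1} \ge 4 \log(e k_0^{1/r})$, which is tightest at $j = 2$: there $l_1 = E k_0^{1/r-1} \ge k_0^{1/r}$ (using $k_0 \le E$), and the standing hypothesis $r \le (\log k_0)/6$ of Theorem~\ref{thm:ub-et-s2-2} guarantees $k_0^{1/r} \ge 64$. This is exactly the slack needed for the Chernoff exponent in Case (B) to absorb the $\binom{k_{j-2}}{k_{j-1}}$ prefactor's logarithmic overhead without erasing the leading $B_{j-1}/k_0^{1/r}$ term; otherwise the bound would collapse and the error-meter accounting in line~\ref{line:ref-err-budget} of Algorithm~\ref{alg:et-s2-2-p1} would be invalid.
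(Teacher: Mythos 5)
Your overall decomposition — a mistakenly passed phase-$(j-1)$ history check versus more than $k_{j-1}$ false positives surviving into $I_{j-1}$, followed by a union bound — is exactly the paper's, and your Case (A) is fine. The gap is in Case (B), at the step ``because the histories agreed at the end of phase $j-1$, the phase-$(j-1)$ exchange was itself faithful, so $I_{j-2}$ contained at most $k_{j-2}$ false positives.'' The second implication is a non sequitur: $k_{j-2}$ is only the \emph{ideal} bound on the number of surviving false positives (see the comment in Algorithm~\ref{alg:et-s2-2-p1}), and it can be violated silently. A false positive that happens to pass all of its inner product tests produces test bits identical to those of a genuinely equal coordinate, so it contributes nothing to the Hamming distance of the packed test-bit vectors and leaves no trace in the histories; agreeing histories (equivalently, faithful invocations of Lemma~\ref{lem:hamming-distance}) are therefore perfectly consistent with $I_{j-2}$ containing up to $k_0$ false positives. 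Faithfulness of an exchange only certifies that the Hamming distance of the exchanged vectors was small, not that few false positives remain — indeed, the entire reason the protocol has history checks and rewinding is that these ideal bounds can fail without being detected, so you cannot assume the bound at index $j-2$ for free while analyzing the failure you are charging to phase $j-1$.

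Once the unjustified $\binom{k_{j-2}}{k_{j-1}}$ is replaced by the correct $\binom{k_0}{k_{j-1}}$ — which is what the paper does — your condition $l_{j-1}\ge 4\log(ek_0^{1/r})$ no longer suffices: you need roughly $2\log(k_0/k_{j-1})=2(j-1)\log k_0^{1/r}\le l_{j-1}/4$, whose left-hand side grows linearly in $j$, so the claim that the constraint is ``tightest at $j=2$'' is no longer immediate. This is where the paper's extra work lives: it uses $8x^3\le 8^x$, the bound $\log k_0^{1/r}\le k_0^{1/r}/8$ (from $r\le(\log k_0)/6$), $E\ge k_0$, and the definitions $B_{j-1}=Ek_0^{1/r}/\min\{(j-1)^2,r\}$ and $l_{j-1}=B_{j-1}/k_{j-2}$ to verify the inequality for all $j\in[2,r]$, which yields $\binom{k_0}{k_{j-1}}2^{-k_{j-1}l_{j-1}}\le 2^{-3k_{j-1}l_{j-1}/4}=2^{-3k_0^{-1/r}B_{j-1}/4}$; after that, the union bound with the $2^{-B_{j-1}}$ term goes through as you state. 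So Case (B) needs to be redone along these lines; the remainder of your argument matches the paper.
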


\begin{proof}
Recall that there are two types of errors at phase $j-1$.
If the $(j-1)$th history check erroneously passed, this occurred with probability $\exp(-B_{j-1})$.
The probability that more than $k_{j-1}$ false positives survive in $I_{j-1}$
is less than 
\begin{align*}
\binom{k_0}{k_{j-1}}2^{-k_{j-1}l_{j-1}}
& \le \left(\frac{ek_0}{k_{j-1}}\right)^{k_{j-1}}2^{-k_{j-1}l_{j-1}} & \tag{$\binom{n}{k} \le \left(\frac{en}{k}\right)^k$.}\\
& \le 2^{2k_{j-1}\log(k_0/k_{j-1})-k_{j-1}l_{j-1}} & \tag{$e \le k_0^{1/r} \le \frac{k_0}{k_{j-1}}$ due to $r \le \frac{\log k_0}{6}$}\\
& \le 2^{-3l_{j-1}k_{j-1}/4},
\end{align*}
where the last step follows from the inequality
\begin{align*}
2\log\frac{k_0}{k_{j-1}}
& = 2(j-1)\log k_0^{1/r} \\
& \le \frac{8^{j-1}\log k_0^{1/r}}{4(j-1)^2} & \tag{Because $8x^3 \le 8^x$ for $x \in \mathbb{N}$}\\
& \le \frac{k_0^{(j-1)/r}}{4(j-1)^2} & \tag{$\log k_0^{1/r} \le \frac{k_0^{1/r}}{8}$ due to $k_0^{1/r} \ge 2^6$}\\
& \le \frac{B_{j-1}}{4k_{j-2}} & \tag{Defn. of $B_{j-1}$}\\
& = \frac{l_{j-1}}{4}. & \tag{Defn. of $l_{j-1}$}
\end{align*}

Combining the above two cases, by a union bound, the outcome of the most recent execution of phase $j-1$ of the 
Refutation Stage happens with probability at most $\exp(-B_{j-1}) + \exp(-3l_{j-1}k_{j-1}/4) 
= \exp(-B_{j-1}) + \exp(-3k_0^{-1/r}B_{j-1}/4) \le \exp(-k_0^{-1/r}B_{j-1}/2)$, as claimed.
\end{proof}

\begin{lemma}\label{lem:check-err-budget}
Fix any $j \in [1,r]$. If phase $j$ of the Verification Stage fails, then the outcomes 
of the most recent execution of phases $r,r-1,\ldots,j+1$ of the Verification Stage 
and phase $r$ of the Refutation Stage happened with overall probability at most 
$\exp(-k_0^{-1/r}B_r/2 - \sum_{j'=j+1}^r B_{j'})$.
\end{lemma}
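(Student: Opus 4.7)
The plan is to analyze what can cause Verification phase $j$ to fail and show that the probability of the relevant event chain matches the claimed bound, following the same template as Lemma~\ref{lem:test-err-budget}.

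First I would observe that if $T^{(A)}=T^{(B)}$ at the start of the Verification Stage, then every Verification history check passes deterministically and no Verification phase can fail. Consequently, a failure at Verification phase $j$ implies $T^{(A)}\neq T^{(B)}$ at the start of the Verification Stage, i.e., immediately after the most recent execution of Refutation phase $r$ (which, recall, also generates the extra $E'$ inner product test bits on $I_r$ with no communication).

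Second, I would bound $\Pr[T^{(A)}\neq T^{(B)}\text{ at the start of Verification}]$ by $\exp(-k_0^{-1/r}B_r/2)$ via a case split. Either (i) $T^{(A)}\neq T^{(B)}$ already held at the end of Refutation phase $r$'s test-bit exchange step, in which case the $B_r$-bit history check at phase $r$ erroneously certified equality, an event of probability at most $\exp(-B_r)$; or (ii) $T^{(A)}=T^{(B)}$ held after that exchange (so the history check correctly passed), but a discrepancy was then introduced when generating the $E'$ fresh test bits on $I_r$. Case (ii) requires $I_r$ to contain at least one coordinate $i$ with $x_i\neq y_i$, since otherwise the new bits must match identically. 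Using the counting estimate from the proof of Lemma~\ref{lem:test-err-budget}, the probability that at least $k_r=1$ false positive survives into $I_r$ is at most $\binom{k_0}{1}2^{-l_r}$; plugging in $l_r=B_r/k_{r-1}=E/r$ and using $k_0^{1/r}\ge 2^6$ (from $r\le(\log k_0)/6$) reduces this to at most $\exp(-3k_0^{-1/r}B_r/4)$. Summing the two cases and absorbing the multiplicative constant into the exponent yields the claimed bound $\exp(-k_0^{-1/r}B_r/2)$.

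Third, I would argue that conditioned on $T^{(A)}\neq T^{(B)}$ at the start of Verification, the $B_{j'}$-bit history check in Verification phase $j'$ is a standard inner product test applied to two unequal strings drawn from fresh shared randomness, and hence passes with probability exactly $2^{-B_{j'}}$ independently across phases. Multiplying over $j'=r,r-1,\ldots,j+1$ gives $\exp(-\sum_{j'=j+1}^{r}B_{j'})$, and combining with the bound from the previous step proves the lemma.

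The main obstacle is sub-case (ii): the survival count of false positives in $I_r$ depends on whether the Hamming-distance precondition of Lemma~\ref{lem:hamming-distance} held in every earlier Refutation phase, so the naive independence argument used in Lemma~\ref{lem:test-err-budget} is not immediately valid. I would sidestep this by observing that if some prior invocation of Lemma~\ref{lem:hamming-distance} was corrupted, then $T^{(A)}\neq T^{(B)}$ already held at the end of that phase's test step and hence at the end of Refutation phase $r$'s test step as well, which is precisely case (i) and already accounted for by the $\exp(-B_r)$ term; otherwise every test-bit exchange behaved faithfully and the binomial tail bound from Lemma~\ref{lem:test-err-budget} applies to the $I_r$ false-positive count without modification.
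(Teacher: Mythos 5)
Your proof is correct and follows essentially the same route as the paper: it attributes the detected inconsistency to an error in Refutation phase $r$ (either its history check erroneously passing, probability $\exp(-B_r)$, or a false positive surviving into $I_r$, bounded by the same binomial estimate as in Lemma~\ref{lem:test-err-budget}, giving $\exp(-k_0^{-1/r}B_r/2)$ overall), and multiplies by $\exp(-\sum_{j'=j+1}^r B_{j'})$ for the Verification checks that failed to detect the inequality. Your explicit treatment of the dependence issue (corrupted earlier exchanges being subsumed into the erroneous-history-check case) is a worthwhile spelling-out of what the paper leaves implicit in its appeal to Lemma~\ref{lem:test-err-budget}.
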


\begin{proof}
Notice that the failure of Verification phase $j$ means all previous Verification phases $r,r-1,\ldots,j+1$ 
failed to detect an inconsistency in the history,
which occurs with probability $\exp(-\sum_{j'=j+1}^r B_{j'})$.
Meanwhile, the inconsistency is caused by an error of some type in Refutation phase $r$,
which, according to Lemma~\ref{lem:test-err-budget}, occurs with probability at most $\exp(-k_0^{-1/r}B_r/2)$.
Therefore, the outcomes of the most recent execution of Verification phases $r,r-1,\ldots,j+1$ 
and Refutation phase $r$ 
happened with overall probability at most $\exp(-k_0^{-1/r}B_r/2 - \sum_{j'=j+1}^r B_{j'})$.
\end{proof}

\begin{lemma}\label{lem:overhead}
Algorithm~\ref{alg:et-s2-2-p1}
executes $O(r)$ extra Refutation/Verification phases
and uses $O(k + Ek_0^{1/r})$ extra bits of communication.
\end{lemma}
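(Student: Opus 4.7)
The main idea is to bound the number of failures using the error meter $E''$, then charge the extra work to these failures via amortized analysis. Throughout, I will write $E'' \leq O(Ek_0^{1/r})$: sum of increments before the terminating failure is $\le E' = cE$, and a single increment is bounded by the Verification increment at $j=1$, which is $k_0^{-1/r}B_r/2 + \sum_{j'=2}^r B_{j'} = O(Ek_0^{1/r})$.

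First, by Lemmas~\ref{lem:test-err-budget} and \ref{lem:check-err-budget}, every Refutation failure (at $j\geq 2$) and every Verification failure raises $E''$ by at least $k_0^{-1/r}B_r/2 = E/(2r)$, using $B_{j-1}\geq B_r = Ek_0^{1/r}/r$. Since the algorithm continues only while $E''<E'$, summing all increments prior to the terminating one gives at most $E'=cE$, so the total number of failures is $O(r)$.

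Next, I model each Refutation wave as a walk on $\{1,\dots,r\}$ with $+1$ on success and $-1$ on failure; a wave starting at phase $s_i$ and ending with a success at $r$ has length $(r-s_i)+2F_i'$, where $F_i'$ is its failure count. Summing over all waves, the total Refutation phase count is $\sum_i (r-s_i+1)+2F$, where $F=O(r)$ is the total Refutation failure count. Only the first wave has $s_i=1$; each later wave is triggered by a Verification failure at some phase $j$, and sets $s_{i+1}=j$. Such a Verification failure contributes at least $\sum_{j'=j+1}^r B_{j'}\geq (r-j)\cdot Ek_0^{1/r}/r$ to $E''$, so summing,
\[
\sum_{i\geq 1}(r-s_i+1) \;\leq\; \frac{r}{Ek_0^{1/r}}\cdot E'' + V \;=\; O(r),
\]
using $E''=O(Ek_0^{1/r})$ and total Verification failures $V=O(r)$. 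A symmetric argument bounds the total Verification phase count by $O(r)$, giving $O(r)$ extra phases.

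For the extra communication, I decompose each Refutation phase $j$'s cost, $O(B_j+k_{j-1}\log(k/k_{j-1}))$ (by Lemma~\ref{lem:hamming-distance}), into a ``test bits'' part $O(B_j)$ and a ``Hamming-encoding'' part $O(k_{j-1}\log(k/k_{j-1}))$; Verification phase $j$ costs only $O(B_j)$. Each re-execution of phase $j$ traces back to a failure whose $E''$ increment is $\Omega(k_0^{-1/r}B_j)$, giving a charging ratio of $O(k_0^{1/r})$; summed, extra test bits are $\leq O(k_0^{1/r})\cdot E''=O(Ek_0^{1/r})$. For Hamming-encoding, a single complete Refutation pass costs $O(k)$ by the telescoping calculation of Theorem~\ref{thm:ub-et-s2-1}, and any restart at phase $j$ costs $\sum_{j'=j}^r O(k_{j'-1}\log(k/k_{j'-1}))$, which is dominated by $O(k_{j-1}\log(k/k_{j-1}))$. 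Charging each such restart to the Verification failure that triggered it, whose $E''$ contribution is $\Omega((r-j)\cdot Ek_0^{1/r}/r)$ for $j<r$ and whose triggered restart has length $1$ for $j=r$, yields total extra encoding cost $O(k)$. Combined, the extra communication is $O(k+Ek_0^{1/r})$.

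\textbf{Main obstacle.} The trickiest aspect is the nested interleaving of Refutation and Verification failures, which can produce chains of re-executions where each re-executed phase may itself fail. The walk-based bookkeeping is essential: within-wave oscillations are absorbed by the global $O(r)$ failure bound, while inter-wave restarts triggered by Verification failures are paid for by the substantial $E''$ contributions $\sum_{j'=j+1}^r B_{j'}$ that such failures incur. The asymmetry between the ``test bits'' ratio $O(k_0^{1/r})$ and the ``encoding'' ratio governing the $O(k)$ bound must be handled separately.
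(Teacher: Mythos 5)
Your proposal tackles the phase count and the ``test bits'' part of the communication in essentially the same way the paper does (charge each failure to the $E''$ increment it incurs, note the charging ratio is $O(k_0^{1/r})$ for test bits, and observe all increments prior to termination sum to $O(E)$). The wave/random-walk bookkeeping is a nice alternative presentation of the phase count, although the paper gets the same $O(r)$ more directly by noting that each Refutation failure at $j$ buys exactly $2$ extra phases while costing $\Omega(E/r)$ and each Verification failure at $j$ buys $2(r-j+1)$ extra phases while costing $\Omega((r-j+1)E/r)$, so the cost-to-budget ratio is uniformly $O(r/E)$.

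Where I see a real gap is in the Hamming-encoding term. You write that the extra encoding cost consists of (a) the baseline pass, costing $O(k)$, plus (b) restarts triggered by Verification failures, each ``dominated by $O(k_{j-1}\log(k/k_{j-1}))$,'' charged to the failure's $E''$ contribution. Two problems. First, a Refutation failure at phase $j$ also re-executes phase $j-1$ and $j$, incurring $O(k_{j-2}\log(k/k_{j-2}) + k_{j-1}\log(k/k_{j-1}))$ extra encoding cost; your accounting never touches this. Second, the charging ratio you propose for Verification restarts, $k_{j-1}\log(k/k_{j-1})$ against $(r-j)Ek_0^{1/r}/r$, is not uniformly bounded by $O(k/(Ek_0^{1/r}))$, so ``summing'' does not directly yield $O(k)$; you would still need the geometric decay of $k_{j-1}$ (and hence a per-$j$ argument), which the proposal never invokes. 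The paper's proof avoids all this with a single observation that you are missing: since each failure that re-executes Refutation phase $j$ costs $\Omega(E/\min\{j^2,r\})$ of the budget, Refutation phase $j$ can be repeated only $O(j^2)$ times. Then the total encoding cost is $\sum_{j=1}^r O(j^2)\,k_{j-1}\log(k/k_{j-1})$, which collapses to $O(k)$ because the $k_0^{-(j-1)/r}$ decay of $k_{j-1}$ (with $k_0^{1/r}\geq 2^6$) kills the polynomial factor. Incorporating that observation would close the gap in your argument.
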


\begin{proof}
We first consider the total number of \emph{extra} phases.
Each failure of Refutation phase $j$ uses at least $k_0^{-1/r}B_{j-1}/2 \ge E/(2r)$ of the error budget 
and causes the re-execution of two phases, namely $j-1$ and $j$.
Similarly, each failure of Verification phase $j$ uses 
$k_0^{-1/r}B_r/2 + \sum_{j'=j+1}^r B_{j'} \ge (r-j+1)E/(2r)$ of the error budget and causes the re-execution of 
$2(r-j+1)$ phases.   Thus, the total number of extra phases is at most $4cr = O(r)$, where the error budget $E' = cE$.

Turning to the overall \emph{extra} communication, notice that phase $j$ of the Refutation Stage has communication volume 
$O(B_j + k_{j-1}\log(k/k_{j-1}))$ and phase $j$ of the Verification Stage has communication volume $O(B_j)$.
For any $j \in [2,r]$, also notice that $B_{j-1}/B_j \le j^2/(j-1)^2 \le 4 \le k_0^{1/r}$.
Thus, the communication caused by each failure is at most $O(k_0^{1/r})$ times the error budget 
spent by that failure, if we temporarily ignore the $k_{j-1}\log(k/k_{j-1})$ term.

In order to upper bound the communication contributed by the $k_{j-1}\log(k/k_{j-1})$ term, 
observe that Refutation phase $j$ can only be repeated $O(j^2)$ times before the error budget is exhausted.
Thus, the overall extra communication is upper bounded by $O(k + Ek_0^{1/r})$ since
\begin{align*}
\lefteqn{O(k_0^{1/r}) \cdot E' + \sum_{j=1}^r O(j^2)\cdot k_{j-1}\log\frac{k}{k_{j-1}}} \\
& = O(k_0^{1/r}) \cdot E' + k_0\sum_{j=1}^r \frac{O(j^2)}{k_0^{(j-1)/r}} \left(\log\frac{k}{k_0} + \log k_0^{(j-1)/r}\right) \\
& = O(k_0^{1/r}) \cdot E' + k_0 \log\frac{k}{k_0} \sum_{j=1}^r \frac{O(j^2)}{k_0^{(j-1)/r}} + k_0 \sum_{j=1}^r \frac{O(j^2)\cdot \log k_0^{(j-1)/r}}{k_0^{(j-1)/r}} \\
& = O(k + Ek_0^{1/r}). & \tag{$k_0^{1/r} \ge 2^6$ and $k_0 \le k$}
\end{align*}
\end{proof}

Now we are ready to prove Theorem~\ref{thm:ub-et-s2-2}.

\begin{proof}[Proof of Theorem~\ref{thm:ub-et-s2-2}]
If there are no errors, 
Algorithm~\ref{alg:et-s2-2-p1} has at most $2r$ phases and uses 
$O(\sum_{j=1}^r (B_j + k_{j-1}\log(k/k_{j-1}))) = O(k + Ek_0^{1/r})$ 
communication, where each phase can be implemented in $O(1)$ rounds.
Together with Lemma~\ref{lem:overhead}, we have shown that it is an $O(r)$-round 
randomized {\EqualityTesting} protocol using $O(k + Ek_0^{1/r})$ bits of communication.
Thus, it suffices to calculate the error probability of the protocol.

Consider a possible execution of the protocol, i.e., 
the sequence of the Refutation/Verification phases that are performed.
It can be represented by a \emph{unique} $0$-$1$ string of length at most 
$4cr+2r$ (by the proof of Lemma~\ref{lem:overhead}) such that each ``$1$'' corresponds to a failed phase.
In particular, each execution of the protocol that terminates prematurely because $E'' \ge E'$
is represented as a $0$-$1$ string, which occurs with probability at most $2^{-E'}$, by Lemmas~\ref{lem:test-err-budget} and \ref{lem:check-err-budget}.
Hence the overall probability of terminating prematurely is $2^{4cr+2r}\cdot 2^{-E'}$.

An error can also be caused by at least one false positive surviving all $E'$ independent inner product tests
generated after Refutation phase $r$.  The probability of this happening is at most $k_02^{-E'}$.
The last possible source of error is that all Verification phases fail to detect the 
inequality $T^{(A)}\neq T^{(B)}$.  According to line~\ref{line:ver-budget}, the probability of this happening 
is at most $2^{-E'}$.  Hence, the overall probability of error is upper bounded by
\[
2^{4cr+2r}\cdot 2^{-E'} + k_02^{-E'} + 2^{-E'} = \poly(k_0)2^{-E'},
\]
which is at most $2^{-E}$ for, say, $E' = 2E$. This concludes the proof.
\end{proof}

\begin{proof}[Proof of Theorem~\ref{thm:ub-et-2}]
Theorem~\ref{thm:ub-et-2} subsequently follows by applying 
Theorem~\ref{thm:ub-s1} and Theorem~\ref{thm:ub-et-s2-2} in sequence.
\end{proof}

\subsection{A More Efficient $\EqualityTesting$ Protocol} \label{sect:ub-et-3}

Theorem~\ref{thm:ub-ee-s2} demonstrates that the $\Omega(Ek^{1/r})$ lower bound can be
attained for $\ExistsEqual$.  Let us highlight a key property of the protocol that
arises naturally in $\ExistsEqual$ but is difficult to efficiently recreate in $\EqualityTesting$.
In the first round Alice generates about $l_1 = E/k^{1-1/r}$ test bits per coordinate
and sends them to Bob.  Since there is no possibility of 
reporting \emph{yes} ($\exists i.x_i=y_i$) in error, Bob can operate
under the assumption that $\forall i.x_i\neq y_i$.  Therefore, if he finds that 
$|I_1|=\beta_1 k^{1-1/r}$, he can infer that
the adversary has expended a $\beta_1$ fraction of his error budget
and \emph{adaptitvely} choose the length of his message to be
$\beta_1 Ek^{1/r}$, i.e., we are effectively charging $k^{1/r}$ bits of communication to each of the $\beta_1 E$ units of error just spent by the adversary. 
In Theorem~\ref{thm:ub-ee-s2} this adaptivity
happens transparently: the length of the $j$th message depends directly on the fraction $\beta_{j-1}$ of the error budget expended by the adversary in round $j-1$,
even though $\beta_{j-1}$ is not ever 
named as a parameter of the algorithm.

A key difference between $\ExistsEqual$ and $\EqualityTesting$ is that in the latter,
the adversary can effectively \emph{hide} how much of its error budget it has expended.
Consider the state of Bob after receiving the first message from Alice.   If 
he finds that $|I_1|=k/2$, there is no way to tell how many false positives are contained
in $I_1$ and how many are true positives. In the \emph{worst case} the number
of false positives could be as high as $k^{1-1/r}$.  We cannot optimistically
assume the false positive number is lower,\footnote{Invoking Lemma~\ref{lem:hamming-distance} 
with a Hamming distance $d$ that is too small can result in an undetected failure of the protocol.}
and continually using the pessimistic bound leads to $O(rEk^{1/r})$ communication.
It seems that any optimal algorithm must \emph{detect} and \emph{adapt to} the fraction 
of the error budget spent 
by the adversary.

\begin{theorem} \label{thm:ub-et-s3}
Fix any $k \ge 1$, $E = \Omega(1)$, and $r \in [1,(\log k_0)/2]$, where $k_0 \leq \min\{k,E\}$. 
There exists a randomized protocol for {\EqualityTesting} length-$k$ vectors $x,y$
with Hamming distance $\dist(x,y)\leq k_0$ that uses $r$ rounds, 
$O(k + Ek_0^{1/r}\log r + Er \log r)$ bits of communication,
and errs with probability $\perr = 2^{-(E+1)}$.
\end{theorem}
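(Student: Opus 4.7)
The plan is to adapt the framework of Theorem~\ref{thm:ub-et-s2-2} (the $O(r)$-round protocol) so that it runs in exactly $r$ rounds, paying an $O(\log r)$ multiplicative overhead in communication. The central challenge, highlighted in the discussion above, is that in \EqualityTesting{} the adversary can effectively ``hide'' how much of its error budget it has expended: Bob sees $|I_{j-1}|$ but cannot directly infer $f_{j-1}$, the number of \emph{false} positives in $I_{j-1}$, which is the quantity that governs how many bits should be spent in phase $j$. In the $O(r)$-round protocol we resolved this by detecting inconsistencies and rewinding; without rewinds we must instead anticipate the uncertainty in advance.

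As a setup, I will apply Theorem~\ref{thm:ub-s1} to reduce to the case $\dist(x,y)\le k_0\le\min\{k,E\}$ at cost $O(k)$ bits and $\log^*(k/E)$ rounds, then run $r$ phases with test-bits-per-coordinate $l_j=\Theta(Ek_0^{j/r-1})$ (as in the \ExistsEqual{} protocol of Theorem~\ref{thm:ub-ee-s2}) and ideal false-positive bound $k_j=k_0^{1-j/r}$. I will carry over the error budget framework from Theorem~\ref{thm:ub-ee-s2}, so that each unit of error budget expended by the adversary is charged $O(k_0^{1/r})$ units of communication, leading to the desired $Ek_0^{1/r}$ dependence before the $\log r$ overhead.

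The key new ingredient is a speculative multi-level encoding inside each phase. Rather than committing to a single Hamming-distance bound when invoking Lemma~\ref{lem:hamming-distance}, Alice encodes her test bits at $O(\log r)$ geometrically-spaced bounds $d_\ell=2^\ell k_j$ for $\ell=0,1,\ldots,O(\log r)$. Bob then decodes at the smallest $\ell$ for which a short hash check (analogous to the history check of Theorem~\ref{thm:ub-et-s2-2}) confirms consistency, and uses that decoded transcript to compute $I_j$. Since each successive level at most doubles the communication, the sum of all levels in a phase lies within $O(\log r)$ of the cost at the ``correct'' level $\ell^\star$ (the one matching the true $f_{j-1}$); summing over phases via the error-budget argument yields the $O(Ek_0^{1/r}\log r)$ term. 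The $O(Er\log r)$ term will account for the verification/hash-check bits, which must be scaled up by a $\log r$ factor relative to Theorem~\ref{thm:ub-et-s2-2} in order to survive a union bound over the $\log r$ candidate levels and still achieve undetected-error probability $\ll 2^{-E}$.

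I expect the main obstacle to be controlling what happens when the speculative decoding silently picks the \emph{wrong} level — one whose Hamming-distance precondition is violated, yet whose hash check accidentally accepts. Without the safety net of rewinding, I must simultaneously (i) union-bound this event over all $r$ phases and all $O(\log r)$ levels and still keep it below $2^{-E}$, forcing the $\log r$ scaling of hash-check lengths; and (ii) amortize the communication against the error budget, charging any phase $j$ in which Alice is driven to a large level $\ell$ directly to the $\approx 2^\ell k_{j-1} l_{j-1}$ units of budget the adversary must have burned in phase $j{-}1$, so that the geometric-in-$\ell$ and geometric-in-$j$ cost sums both collapse to $O(Ek_0^{1/r}\log r)$. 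Combining these ingredients with the $O(k)$ pre-processing from Theorem~\ref{thm:ub-s1} yields the claimed bound.
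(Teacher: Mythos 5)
There is a genuine gap. You correctly identify the obstacle (the adversary can hide how much budget it has spent) and you reach for the right kind of tool ($O(\log r)$ parallel invocations of Lemma~\ref{lem:hamming-distance} per round, each guarded by a $\Theta(E)$-bit hash, giving the $\log r$ and $Er\log r$ overheads), but the concrete scheme does not deliver the claimed communication. First, your levels $d_\ell = 2^{\ell}k_j$ have geometrically \emph{increasing} cost $\approx 2^{\ell}k_j l_j$, so the sum over levels in a phase is within a constant factor of the \emph{top} level, not ``within $O(\log r)$ of the correct level $\ell^\star$''; if $\ell^\star=0$ the overhead is $\poly(r)$, not $\log r$. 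Second, and more fundamentally, you keep the pessimistic schedule $k_j=k_0^{1-j/r}$, $l_j=\Theta(Ek_0^{j/r-1})$ in every phase. If the base Hamming bound for exchanging phase-$j$ test bits is the pessimistic $k_{j-1}$, then the level-$0$ cost alone is $\Theta(k_{j-1}l_j)=\Theta(Ek_0^{1/r})$ in \emph{every} phase, i.e.\ $\Theta(rEk_0^{1/r})$ total --- exactly the bound of Theorem~\ref{thm:ub-et-s2-1} that we are trying to beat --- and this cost cannot be charged to budget the adversary actually spent, since it is incurred even when the adversary spends nothing. If instead the base is the smaller $k_j$ and $\ell$ only ranges up to $O(\log r)$, correctness fails: the true number of false positives in $I_{j-1}$ can legitimately be as large as $\approx k_{j-1}=k_0^{1/r}k_j\gg\poly(r)\,k_j$, so \emph{every} level's precondition is violated, all hash checks fail, and with no rewind mechanism the protocol is stuck. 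Your charging argument (``charge a large level $\ell$ to $\approx 2^{\ell}k_{j-1}l_{j-1}$ units of budget burned in phase $j-1$'') presupposes the mechanism whose absence is the whole problem.

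What the paper does differently, and what is missing from your proposal, is an \emph{adaptive downscaling of future allocations}. In round $j$ the sender makes $\log r$ parallel invocations of Lemma~\ref{lem:hamming-distance} with Hamming distance \emph{halving} and test-bits-per-coordinate \emph{doubling} across levels, so every level costs the same $k_{j-1}l_j$ and the phase costs $k_{j-1}l_j\log r$. The receiver, using the per-invocation hashes, finds the largest prefix of successful levels $i^*$; this simultaneously (i) certifies that each surviving coordinate absorbed $l_j 2^{i^*-1}$ tests, so the new false-positive bound can be set to $E'/(l_j2^{i^*-1})$, and (ii) yields an estimate $\beta=2^{-i^*}\le \beta'+1/r$ of the fraction of the error budget spent, where $\beta'$ is the true fraction. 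The next message's Hamming-distance allocation (hence its cost) is then set \emph{proportional to $\beta$} times the pessimistic schedule, and since $\sum_j\beta_j \le \sum_j\beta_j' + r\cdot(1/r)\le 2$, the total cost of all rounds beyond the first telescopes to $O(Ek_0^{1/r}\log r)$, with $O(Er\log r)$ for the hashes and $O(k)$ for the $d\log(K/d)$ terms. Without this feedback loop --- detect recent spending, shrink the next allocation accordingly, and tolerate the additive $1/r$ estimation slack --- the per-phase cost stays pinned at the worst case and the factor $r$ does not go away.
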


The remainder of this section constitutes a proof of Theorem~\ref{thm:ub-et-s3}.

Define $E'=7E$ to be the \emph{error budget} of the adversary, i.e., it is allowed to make up to $E'$ inner product tests pass on unequal coordinates.

\paragraph{Round 1.} Initially $I_0 = [k]$ is guaranteed to contain at most $k_0$ unequal coordinates.  
Alice generates $l_1 = E'k_0^{1/r-1}$ test bits for
each coordinate in $I_0$, and transmits them to Bob using Lemma~\ref{lem:hamming-distance} 
with a Hamming distance of $d=k_0$.
(We show later that the $k_0\log(k/k_0)$ terms in this protocol
contribute negligibly to the overall communication; 
thus, for the time being we measure the 
cost as $k_0l_1 = E'k_0^{1/r}$.)
Bob sets $I_1$ to be the subset of $I_0$ that pass all 
inner product tests.  

\paragraph{Round 2.} Due to the adversary's error budget, the
number of false positives in $I_1$ is at most 
\[
k_1 = E'/l_1 = k_0^{1-1/r}.
\]
Suppose the \emph{true} number of false positives in $I_1$ is
\[
k_1^* = \beta_1'k_1 = \beta_1'k_0^{1-1/r},
\]
meaning the adversary just spent a $\beta_1'$ fraction of his total error budget. 
Bob cannot measure $\beta_1'$, but he can send a message to Alice that allows \emph{her} to estimate $\beta_1'$.  
Bob invokes Lemma~\ref{lem:hamming-distance} $\log r$ times.  
For $i\in [1,\log r]$, Bob generates the next $l_2^{(i)}$ test bits
for coordinates in $I_1$ so that Alice can recover them up to
a Hamming distance of $k_1^{(i)}$, where
\begin{align*}
    l_2k_1 &= 2E'k_0^{1/r},\\
    l_2^{(i)} &= l_2\cdot 2^{i-1},\\
    k_1^{(i)} &= k_1 / 2^{i-1}.
\end{align*}
Clearly invocation $i$ will succeed if $k_1^{(i)} \ge k_1^*$ 
and may fail if $k_1^{(i)} < k_1^*$.  In order to detect
which invocations of Lemma~\ref{lem:hamming-distance} succeed, Bob
supplements each with a $\Theta(E)$-bit hash of the test bits generated.  Thus, with probability $1-2^{-\Theta(E)}$, Lemma~\ref{lem:hamming-distance} has no \emph{silent} failures.

\paragraph{Round 3 Onward.} Suppose that Alice detects 
that the invocations of Lemma~\ref{lem:hamming-distance}
with Hamming distances $k_1^{(1)},\ldots,k_1^{(i^*)}$ succeed
but the one with $k_1^{(i^*+1)}$ fails (or that $i^* = \log r$).  
Alice estimates $\beta_1'$ by 
\[
\beta_1 = 2^{-i^*}.
\]
Observe that if $i^* < \log r$ (the $(i^*+1)$th invocation of 
Lemma~\ref{lem:hamming-distance} fails)
then 
\[
k_1^* \geq k_1^{(i^*+1)} = 2^{-i^*}\cdot k_0^{1-1/r}
\]
and consequently, $\beta_1 \leq \beta_1'$.
On the other hand, if $i^* = \log r$ then $\beta_1 = 1/r$ whereas
$\beta_1'$ may be close to zero.  Either way, we have
\[
\beta_1 \leq \beta_1' + 1/r.
\]
Since every false positive in $I_2$ has successfully passed $l_2^{(i^*)}$
inner product tests, we can conclude that the maximum number of false positives
remaining in $I_2$ is
\[
k_2 = \frac{E'}{l_2^{(i^*)}} 
    = \frac{E'}{2E'k_0^{1/r}\cdot 2^{i^*-1} / k_1}
    = \frac{k_1}{k_0^{1/r} 2^{i^*}} = \beta_1 k_0^{1-2/r}.
\]
As before, $k_2^* = \beta_2'k_2$ is the true number of false positives in $I_2$, 
where $\beta_2'\in[0,1]$ is currently unknown.
In the third round Alice selects $l_3$ (see below), and invokes 
the test bit exchange protocol (Lemma~\ref{lem:hamming-distance})
with $l_3^{(i)} = l_32^{i-1}$ test bits per coordinate and Hamming distance
$k_2^{(i)} = k_2/2^{i-1}$, in parallel for all $i\in [1,\log r]$.  The overall communication volume
is $k_2l_3\log r$, and we select $l_3$ such that this is linear in the (estimated)
error budget spent by the adversary in round one, i.e.,
\[
k_2l_3\log r = \beta_1E'k_0^{1/r}\log r
\]
and therefore 
\[
l_3 = \beta_12E'k_0^{1/r}/k_2 = 2E'/k_0^{1-3/r}
\]
is independent of $\beta_1$. 

All the bounds above were specialized to round three, but apply to round $j$
by reindexing appropriately.  In particular, the receiver of the $(j-1)$th message
estimates $\beta_{j-2}' = k_{j-2}^*/k_{j-2}$ by
\begin{align*}
\beta_{j-2} &= 2^{-i^*} \leq \beta_{j-2}' + 1/r\\
\intertext{and sets}
k_{j-1} &= \beta_{j-2} k_0^{1-(j-1)/r},\\
l_j &= 2E'/k_0^{1-j/r},
\end{align*}
then invokes Lemma~\ref{lem:hamming-distance} $\log r$ times in parallel with parameters
$l_j^{(i)} = l_j2^{i-1}$ and $k_{j-1}^{(i)} = k_{j-1}/2^{i-1}$ to send the $j$th message.
It remains to bound the total communication and the probability of error.

\paragraph{Communication Volume.}
With the extra $\Theta(E)$-bit hash, the cost of each invocation of 
Lemma~\ref{cor:hash-test} with parameters $d,L$ is
$O(E + dL + d\log(k/d))$.  There are at most $\log r$ invocations per round
and $r$ rounds, so the total contributed by the first term is $O(Er\log r)$.  
The total contributed by the second term is:

\begin{align*}
    &\phantom{=} k_0l_1 + \sum_{j = 2}^{r} \sum_{i = 1}^{\log r} k_{j-1}^{(i)} l_j^{(i)} \\
    &= E'k_0^{1/r} + \log r\cdot \sum_{j = 2}^{r} k_{j-1} l_j \\
    &= E'k_0^{1/r} + 2E'k_0^{1/r}\log r\cdot\left(1 + \sum_{j = 3}^{r} \beta_{j-2}\right) \\
    &\leq E'k_0^{1/r} + 2E'k_0^{1/r}\log r\cdot\left(1 + \sum_{j = 3}^{r} (\beta_{j-2}'+1/r)\right) \\
    &\leq E'k_0^{1/r} + 6E'k_0^{1/r}\log r  & (\sum_{j\ge 3}\beta_{j-2}' < 1)\\ 
    &= E'k_0^{1/r}(6\log r+1).
\end{align*}

Next we bound the third term. For any $j \geq 2$, we have
\begin{align*}
    &\phantom{=}\sum_{i = 1}^{\log r} k^{(i)}_{j-1} \log (k/k^{(i)}_{j-1})\\
    &= \sum_{i=1}^{\log r}\frac{k_{j-1}}{2^{i-1}} \log \frac{2^{i-1}k}{k_{j-1}} \\
    &= k_{j-1} \log \frac{k}{k_{j-1}} \sum_{i=1}^{\log r} \frac{1}{2^{i-1}} + k_{j-1} \sum_{i=1}^{\log r} \frac{i-1}{2^{i-1}} \\
    &= O(k_{j-1} \log(k/k_{j-1})).
\end{align*}

Therefore, it suffices to only consider the first invocation of Lemma~\ref{lem:hamming-distance} from each round. 
Now we bound the total across all rounds. For the first two rounds we have $k_0 \log k/k_0 \leq k$ and $k_1 \log k/k_1 \leq k$, 
so we start counting from round $j = 3$.

\begin{align*}
    &\phantom{=} \sum_{j=3}^{r} k_{j-1} \log \frac{k}{k_{j-1}} \\
    &= \sum_{j=2}^{r-1} \beta_{j-1} k_0^{1-j/r} \log \frac{k}{\beta_{j-1} k_0^{1-j/r}} \\
    &\leq \sum_{j=2}^{r-1} \beta_{j-1} k \frac{\log \frac{k^{j/r}}{\beta_{j-1}}}{k^{j/r}} \\
    &\leq k\sum_{j=2}^{r-1} \frac{\log k^{j/r}}{k^{j/r}} + k \sum_{j=2}^{r-1} \frac{\beta_{j-1} \log \frac{1}{\beta_{j-1}}}{k^{j/r}} \tag{$\beta_j \leq 1$.}\\
    &\leq k\sum_{j=2}^{r-1} \frac{\log k^{j/r}}{k^{j/r}} + \frac{k}{e} \sum_{j=2}^{r-1} \frac{1}{k^{j/r}}  \\
    &= O(k). \tag{$k^{1/r} \geq 2$.}
\end{align*}
In conclusion, the total communication cost is
$O(k + Ek_0^{1/r}\log r + Er\log r)$.

\paragraph{Error Probability.}
We now show that protocol errs with probability less than $2^{-(E+1)}$.
If we use a $2E$-bit hash of the test bits in each invocation of Lemma~\ref{lem:hamming-distance}
the probability that any failed invocation goes unnoticed is at most
\[
r\log r\cdot 2^{-2E}.
\]
The algorithm works correctly so long as $k_j^* \leq k_j$ for every $j$,
which holds whenever the adversary does not exceed his \emph{error budget} $E'$. 
The probability that the error budget is exceeded is, by a union bound,
at most 
\[
\binom{E' + k_0 - 1}{k_0 - 1} 2^{-E'} \leq \exp(k_0 \log \left(e(E' + k_0)/k_0\right)-E'),
\]
which is less than $2^{-3E}$ when $E'=7E\geq 7k_0$.  Finally, every unequal coordinate
is ultimately subject to $l_r = 2E'$ inner product tests, and the probability that any 
goes undetected is at most $k_02^{-2E'}$.  The total error probability is therefore at most
\[
r\log r\cdot 2^{-2E} + 2^{-3E} + k_02^{-2E'} \ll 2^{-(E+1)}.
\]
This concludes the proof of Theorem~\ref{thm:ub-et-s3}.

\section{Distributed Triangle Enumeration}\label{sect:triangle}

One way to solve local triangle enumeration in the $\CONGEST$ model is to execute,
in parallel, a $\SetIntersection$ protocol across
every edge of the graph, where the set associated with a
vertex is a list of its neighbors.  Since there are at most
$\Delta n/2$ edges, we need the $\SetIntersection$ error probability to be $2^{-E}$, $E = \Theta(\log n)$, 
in order to guarantee a global success probability of $1-1/\poly(n)$.  Our lower bound says any algorithm taking
this approach must take 
$\Omega((\Delta + E\Delta^{1/r})/\log n + r)$ rounds since
each round of $\CONGEST$ allows for one $O(\log n)$-bit message.
The hardest situation seems to be when $\Delta = E = \Theta(\log n)$, in which case the optimum choice is to set $r=\log \Delta$, making the triangle enumeration algorithm run in $O(\log \Delta)=O(\log\log n)$ time.  In Theorem~\ref{thm:triangleDelta} we show that it is possible to handle this situation exponentially faster, in $O(\log\log \Delta)=O(\log\log\log n)$ time, 
and in general, to solve local triangle enumeration~\cite{IzumiG17} 
in optimal $O(\Delta/\log n)$ time
so long as $\Delta > \log n\log\log\log n$.

\begin{theorem}\label{thm:triangleDelta}
Local triangle enumeration can be solved in a $\CONGEST$
network $G=(V,E)$ with maximum degree $\Delta$
in $O(\Delta/\log n + \log\log\Delta)$ rounds
with probability $1-1/\poly(n)$.
This is optimal for all 
$\Delta = \Omega(\log n\log\log\log n)$.
\end{theorem}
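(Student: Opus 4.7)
The plan is to reduce local triangle enumeration to simultaneously running a two-party \SetIntersection\ protocol on every edge of $G$, and then to combine our efficient protocols with a verification-and-retry framework in order to shave the additive $\CONGEST$ tail from $O(\log\Delta)$ down to $O(\log\log\Delta)$.

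First, I would observe that for each edge $\{u,v\}\in E(G)$ the endpoints hold sets $N(u),N(v)\subseteq[n]$ of size at most $\Delta$ whose elements are $O(\log n)$-bit identifiers. If the pair jointly computes $N(u)\cap N(v)$, then every triangle $\{u,v,w\}$ is locally reported. In the $\CONGEST$ model each edge carries $O(\log n)$ bits per round in each direction, so a two-party protocol using $C$ bits in $r'$ rounds is simulated along a single edge in $O(C/\log n + r')$ $\CONGEST$ rounds. To drive the global failure probability below $1/\poly(n)$, I would set each per-edge failure probability to $2^{-E}$ with $E=\Theta(\log n)$ and union-bound over the $O(n\Delta)$ edges. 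Combining Theorem~\ref{thm:Equivalence-Int-Eq} with Theorem~\ref{thm:ub-et-2} yields a \SetIntersection\ protocol of $O(\Delta+E\Delta^{1/r})$ bits in $O(r)$ rounds, hence $O(\Delta/\log n+\Delta^{1/r}+r)$ $\CONGEST$ rounds per edge. The best direct choice $r=\log\Delta$ gives only $O(\Delta/\log n+\log\Delta)$, which is exponentially weaker in the tail than the target bound.

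To trim the additive tail to $O(\log\log\Delta)$, I would not insist on solving every per-edge \SetIntersection\ with $1-1/\poly(n)$ confidence. Instead, run the protocol with $r=O(\log\log\Delta)$ rounds and moderate per-edge error $1/\poly(\Delta)$, which fits in $O(\Delta/\log n+\log\log\Delta)$ $\CONGEST$ rounds by a suitable parameter choice. Immediately afterward, each endpoint verifies its candidate intersection by exchanging $\Theta(\log n)$-bit inner-product fingerprints of the complete neighbor-identifier list, in the spirit of the history-check step of Algorithm~\ref{alg:et-s2-2-p1}. Edges whose fingerprints disagree are flagged \emph{failed} and redone at the next level with geometrically stronger parameters, exploiting also that each triangle $\{u,v,w\}$ has three chances to be reported, one per incident edge. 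A geometric-series argument across $O(\log\log\Delta)$ such levels simultaneously bounds the total $\CONGEST$ time and drives the per-triangle failure probability below $1/\poly(n)$.

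The main obstacle is the verification-and-retry bookkeeping in the distributed setting, which is a distributed analogue of the error-meter and error-budget argument of Section~\ref{sect:ub-et-2}. One has to show (i) that the fingerprints detect silent \SetIntersection\ failures with probability $1-1/\poly(n)$ per edge, so that unflagged edges are truly correct; (ii) that the subgraph of still-pending edges at each level is sparse enough for all retries to proceed in parallel without overloading any single vertex's outgoing bandwidth; and (iii) that the cumulative $\CONGEST$ cost over all levels stays $O(\Delta/\log n+\log\log\Delta)$. The optimality half of the theorem is then immediate: the $\Omega(\Delta/\log n)$ lower bound of Izumi and Le Gall~\cite{IzumiG17} dominates the additive $\log\log\Delta$ term whenever $\Delta=\Omega(\log n\log\log\log n)$.
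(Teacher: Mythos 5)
Your overall architecture is the same as the paper's: run a per-edge $\SetIntersection$ protocol (via Theorem~\ref{thm:Equivalence-Int-Eq} and the $\EqualityTesting$ protocols) with a deliberately weak per-edge error guarantee, add an $O(\log n)$-bit hash so that failures of Lemma~\ref{lem:hamming-distance} are detected rather than silent, and recurse on the subgraph of failed edges, using the fact that a triangle is missed only if it lies entirely inside that subgraph. But the step you list as obstacle (ii) is not bookkeeping --- it is the entire content of the proof, and the parameters you actually commit to do not support it. The paper's first phase does \emph{not} use per-edge error $1/\poly(\Delta)$: it pads to $k=\max\{\Delta,\log n\}$ and runs with $r=\log^* n$ and $E\approx k^{1-1/r}/r$, so each edge fails with probability $2^{-(\log n)^{1-o(1)}}$ while the communication stays $O(k)$; a union bound over vertices and subsets of incident edges then shows the failed subgraph has maximum degree $(\log n)^{2/\log^* n}$ w.h.p. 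With your choice $E=\Theta(\log\Delta)$, in the hardest regime $\Delta=\Theta(\log n)$ the same union bound only gives failed degree about $\log n/\log\log n$, which is barely smaller than $\Delta$; from there your ``geometrically stronger parameters'' are unspecified, and the claimed geometric series over $O(\log\log\Delta)$ levels does not follow. The engine the paper uses for the later phases is a specific squaring-down schedule: if the failed subgraph has degree $(\log n)^{\gamma}$ with $\gamma<1/2$, run Theorem~\ref{thm:ub-et-s2-1} with $r=2$ and $E=C(\log n)^{1-\gamma/2}$, so the communication is $O(Ek^{1/2})=O(\log n)$ (hence $O(1)$ rounds) and the new failed degree is $(\log n)^{\gamma/2}$ w.h.p.; iterating halves the exponent each $O(1)$ rounds, giving the $\log\log\Delta$ term. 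Nothing in your proposal identifies this schedule or proves any quantitative degree-collapse lemma, so the round bound is not established.

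Two smaller corrections. First, in $\CONGEST$ each edge is its own channel, so parallel retries never ``overload any single vertex's outgoing bandwidth''; the reason the failed subgraph's degree matters is different: all undiscovered triangles lie inside it, so the next level's $\SetIntersection$ instances are taken over neighborhoods \emph{within} the failed subgraph, and a smaller degree means smaller instances, which is what lets you afford a much larger error exponent $E$ within $O(1)$ rounds. Second, tracking a ``per-triangle failure probability'' via the three incident edges is unnecessary and is not how the analysis closes; what is needed (and what the paper proves) is that w.h.p.\ the recursion terminates with an empty failed subgraph after $O(\log\log\Delta)$ phases. The optimality claim via Izumi--Le~Gall is fine.
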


\begin{proof}
The algorithm consists of 
$\min\{\log\log\Delta, \log\log\log n\}$ 
phases.  The goal of the first phase is to transform the original 
triangle enumeration problem into one with maximum degree $\Delta_1 < (\log n)^{o(1)}$, in $O(\log^* n)$ rounds of communication.  The goal of every subsequent phase is to reduce the maximum degree from $\Delta' \leq \sqrt{\log n}$ to $\sqrt{\Delta'}$, in $O(1)$ rounds of communication.  Thus, the total number of rounds is $O(\log\log\Delta)$ rounds if the first round is skipped, and
$O(\log^* n + \log\log(\Delta_1))=O(\log\log\log n)$ otherwise.

\paragraph{Phase One.} Suppose $\Delta \ge \sqrt{\log n}$.
Each vertex $u$ is identified with the set 
$A_u = \{\ID(v) \mid \{v,u\}\in E\}$ having size $\Delta$.  
For each $\{u,v\}\in E$ we reduce $\SetIntersection$
to $\EqualityTesting$ by applying Theorem~\ref{thm:Equivalence-Int-Eq}, then
run the two-party $\EqualityTesting$ 
protocol of Theorem~\ref{thm:ub-et-1},
with $k = \max\{\Delta,\log n\}, r = \log^* n$, and $E=r^{-1} k^{1-1/r}$.  
(I.e., if $\Delta<\log n$ we imagine padding each set to size $\log n$ with dummy elements.)
One undesirable property of this protocol is that it can fail
``silently'' if the preconditions of Lemma~\ref{lem:hamming-distance}
are not met.  When the Hamming distance between two strings
exceeds the threshold $d$, Bob generates a garbage string 
$x'\neq x$ but fails to detect this.  To rectify this problem, 
we change the Lemma~\ref{lem:hamming-distance} protocol
slightly: Alice sends the color $\phi(x)$ of her string, as well
as an $O(\log n)$-bit hash $h(x)$.  Bob reconstructs $x'$ as usual
and terminates the protocol if $h(x) \neq h(x')$.  Clearly the probability
of an undetected failure (i.e., $x\neq x'$ but $h(x)=h(x')$) 
is $1/\poly(n)$.  Define $G_1 = (V,E_1)$ such that $\{u,v\}\in E_1$
iff the $\SetIntersection$ protocol over $\{u,v\}$ detected a
failure.  In other words, with high probability, 
all triangles in $G$ have been discovered, 
except for those contained \emph{entirely} inside $G_1$.
The probability that any particular edge appears in $E_1$
is $2^{-E} = 2^{-k^{1-1/\log^* n}/\log^* n}$ and independent
of all other edges.  In particular, if $\Delta \gg (\log n)^{1+1/\log^* n}$
then no errors occur, with probability $1-1/\poly(n)$.  
Define $\Delta_1$ to be the maximum degree in $G_1$.  Thus,
\begin{align*}
    \Pr\left[\Delta_1 \geq (\log n)^{2\epsilon}\right] 
    &\leq n\cdot {\Delta \choose (\log n)^{2\epsilon}} \cdot \left(2^{-E}\right)^{(\log n)^{2\epsilon}} 
                                                            & \epsilon = 1/r = 1/\log^* n\\
    &\leq n\cdot \exp\left(O((\log n)^{2\epsilon}_{\ } \log\log n) \, - \epsilon (\log n)^{1-\epsilon}\cdot (\log n)^{2\epsilon}\right)\\
    &\leq 1/\poly(n).
\end{align*}

\paragraph{Phases Two and Above.} Suppose that at some round,
we have detected all triangles except for those contained
in some subgraph $G'=(V,E')$ having maximum degree 
$\Delta' < \sqrt{\log n}$.
Express $\Delta'$ as $(\log n)^{\gamma}$, where $\gamma<1/2$. 
We execute the $\EqualityTesting$  
protocol of Theorem~\ref{thm:ub-et-s2-1}
with $k=\Delta'$, $r=2$, and $E = C(\log n)^{1-\gamma/2}$ for a sufficiently large constant $C$.  Note that $1-\gamma/2 > \gamma$, 
so $E>k$, as required by Theorem~\ref{thm:ub-et-s2-1}.
The protocol takes
$O(Ek^{1/2}/\log n + r) = O(1)$ rounds since the 
communication volume is $O(Ek^{1/2}) = O(\log n)$ and $r=2$.
Let $G''$ be the subgraph of $G'$ consisting of 
edges whose protocols detected a failure and $\Delta''$ be the maximum degree in $G''$.  Once again,
\begin{align*}
    \Pr\left[\Delta'' \geq (\log n)^{\gamma/2}\right] 
    &\leq n\cdot {\Delta' \choose (\log n)^{\gamma/2}} \cdot \left(2^{-E}\right)^{(\log n)^{\gamma/2}} \\
    &\leq n\cdot \exp\left(O((\log n)^{\gamma/2}\log\log n)  \,- C(\log n)^{1-\gamma/2}\cdot (\log n)^{\gamma/2}\right)\\
    &\leq 1/\poly(n).
\end{align*}
Thus, once $\Delta \leq \sqrt{\log n}$,
$\log\log \Delta \leq \log\log\log n - 1$ 
of these 2-round phases suffice to 
find all remaining triangles in $G$.
\end{proof}

Theorem~\ref{thm:triangleDelta} depends critically on the duality between edges and $\SetIntersection$ instances,
and between edge endpoints and elements of sets.  In particular, when an execution of a $\SetIntersection$ over $\{u,v\}$ is successful, this effectively \emph{removes} 
$\{u,v\}$ from the graph, thereby removing many occurrences of $\ID(u)$ and $\ID(v)$ from adjacent sets.

Consider a slightly more general situation where we have a graph of \emph{arboricity} $\lambda$ (but unbounded $\Delta$), witnessed by a given acyclic orientation having out-degree at most $\lambda$.  Redefine the set $A_u$ to be the set of out-neighbors of $u$.
\[
A_u = \{\ID(v) \mid \{u,v\}\in E \mbox{ with orientation } u\rightarrow v\}.
\]
By definition $|A_u|{} \leq \lambda$.
Because the orientation is acyclic, every triangle on $\{x,y,z\}$ is (up to renaming) oriented as $x\rightarrow y,\; x\rightarrow z,\; y\rightarrow z$.  Thus, it will \emph{only} be detectable by the $\SetIntersection$ instance associated with $\{x,y\}$.

\begin{theorem}\label{thm:triangle-orientation}
Let $G=(V,E)$ be a $\CONGEST$ network equipped with an acyclic orientation with outdegree at most $\lambda$.
We can solve local triangle enumeration on $G$ in 
$O(\lambda/\log n + \log\lambda)$ time.
\end{theorem}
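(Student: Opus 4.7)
The plan is to run, in parallel across every edge of $G$, the $\SetIntersection$ protocol obtained by combining Theorem~\ref{thm:Equivalence-Int-Eq} with Theorem~\ref{thm:ub-et-1}. For each $\{u,v\}\in E$, the endpoints attempt to compute $A_u\cap A_v$ on sets of size at most $\lambda$. Every triangle $\{x,y,z\}$ in $G$ has its three edges oriented, up to renaming, as $x\to y$, $x\to z$, $y\to z$, so $z\in A_x\cap A_y$ and the triangle is reported by the endpoints of $\{x,y\}$ whenever that particular $\SetIntersection$ instance succeeds.

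We would instantiate the protocol with $k=\lambda$, $r=\lceil\log\lambda\rceil$, and $E=c\log n$ for a sufficiently large constant $c$. Since $k^{1/r}=O(1)$, the per-edge communication is $O(\lambda + rEk^{1/r}) = O(\lambda + \log\lambda\cdot\log n)$, spread over $\log^*(k/E)+r+1 = O(\log\lambda)$ protocol rounds, with per-edge failure probability $2^{-E}=1/\poly(n)$. Because distinct edges do not contend for bandwidth in the $\CONGEST$ model, all $|E|\le \lambda n$ instances proceed simultaneously; a union bound over edges (absorbed by tuning $c$) gives global success probability $1-1/\poly(n)$.

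To convert to $\CONGEST$ rounds, each protocol round in which the current sender transmits $L_j$ bits on a given edge costs $\lceil L_j/\log n\rceil$ $\CONGEST$ rounds on that edge. Summing over all $O(\log\lambda)$ protocol rounds yields $O((\lambda+\log\lambda\cdot\log n)/\log n + \log\lambda) = O(\lambda/\log n + \log\lambda)$ $\CONGEST$ rounds, matching the claimed bound.

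The principal subtlety (more than a true obstacle) is that invocations of Lemma~\ref{lem:hamming-distance} inside the $\EqualityTesting$ protocol can fail \emph{silently} when their Hamming-distance precondition is violated, producing a garbled string that the protocol never notices. We would handle this exactly as in Theorem~\ref{thm:triangleDelta}: accompany each such invocation with a $\Theta(\log n)$-bit hash, so any silent failure is caught with probability $1-1/\poly(n)$; this adds only lower-order communication. No phase-style degree reduction is required here, since the target round complexity $O(\log\lambda)$ (rather than the harder $O(\log\log\lambda)$ of Theorem~\ref{thm:triangleDelta}) is already achieved by a single invocation of the underlying protocol.
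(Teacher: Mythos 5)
Your proposal is correct and structurally the same as the paper's proof: run $\SetIntersection$ over every edge on the out-neighborhoods of size $\le\lambda$, relying on the acyclicity to guarantee each triangle is caught by exactly one instance, and convert the two-party protocol to $\CONGEST$ rounds. The one substantive difference is the choice of underlying \EqualityTesting{} protocol. The paper applies Theorem~\ref{thm:Equivalence-Int-Eq} followed by Theorem~\ref{thm:ub-et-2}, giving $O(\lambda + E\lambda^{1/r}) = O(\lambda + \log n)$ communication in $O(r)$ rounds; you instead use Theorem~\ref{thm:ub-et-1}, incurring the extra factor of $r$ in communication, $O(\lambda + rE\lambda^{1/r}) = O(\lambda + \log\lambda\cdot\log n)$. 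The weaker communication bound is harmless here precisely because $\log\lambda\cdot\log n / \log n = \log\lambda$ is already paid in the round term, so your final $O(\lambda/\log n + \log\lambda)$ matches. The paper's choice is the cleaner one (the $O(r)$-round protocol is what the authors designed for exactly this regime of $r=\Theta(\log k)$), but yours works for this particular corollary.

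The one thing you flag as a ``principal subtlety'' --- silent failures of Lemma~\ref{lem:hamming-distance} --- is not actually an issue in this theorem and your proposed hash-check repair is unnecessary. The error probability guarantee $2^{-(E+1)}$ stated in Theorem~\ref{thm:ub-et-s2-1} (and hence inherited by Theorem~\ref{thm:ub-et-1}) already accounts for the event that the Hamming-distance precondition is violated in some phase and the protocol subsequently outputs garbage: the union bound in that proof bounds the probability that \emph{any} phase sees $\ge k_j$ surviving false positives, which is exactly the bad event that would corrupt a later invocation of Lemma~\ref{lem:hamming-distance}. Since you set $E=\Theta(\log n)$ from the start, the per-edge failure probability is $1/\poly(n)$ regardless of whether the failure is ``detected'' or not, and the union bound over edges handles the rest. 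The extra hash was needed in Theorem~\ref{thm:triangleDelta} only because there $E$ is deliberately set \emph{below} $\Theta(\log n)$ in early phases, so failures are expected and must be \emph{detected} in order to be retried in later phases; no such retry mechanism exists or is needed in your proof, as you yourself observe. So the hash is harmless but extraneous here.
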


\begin{proof}
We apply Theorem~\ref{thm:Equivalence-Int-Eq}
to reduce each $\SetIntersection$ instance to
an $\EqualityTesting$ instance, then apply 
Theorem~\ref{thm:ub-et-2} with 
$E = \Theta(\log n)$ and $r=\log \lambda$
to solve each with $O(\lambda + E\lambda^{1/r}) = O(\lambda + E)$ communication in $O((\lambda + E)/\log n + r) = O(\lambda/\log n + \log\lambda)$ time.
Note that the dependence on $\lambda$ here
is exponentially worse
than the dependence on $\Delta$ in Theorem~\ref{thm:triangleDelta}.
\end{proof}

It may be that $G$ is known to have arboricity $\lambda$, but an acyclic orientation is unavailable.  The well known ``peeling algorithm'' (see~\cite{ChibaN85} or \cite{BarenboimE10}) computes a $C\lambda$ orientation in $O(\log_C n)$ time for $C$ sufficiently large, say $C\ge 3$.
Using this algorithm as a preprocessing step, we can solve local triangle enumeration optimally when $\lambda = \Omega(\log^2 n)$.

\begin{theorem}\label{thm:triangle-arboricity}
Let $G=(V,E)$ be a $\CONGEST$ network having arboricity $\lambda$ (with no upper bound on $\Delta$).  
Local triangle enumeration can be solved in 
optimal $O(\lambda/\log n)$ time 
when $\lambda = \Omega(\log^2 n)$,
and sublogarithmic time 
$O(\log n / \log(\log^2 n/\lambda))$ otherwise.
\end{theorem}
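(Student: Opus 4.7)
The plan is to reduce to Theorem~\ref{thm:triangle-orientation} by first computing an acyclic orientation of low outdegree. The peeling algorithm~\cite{ChibaN85,BarenboimE10}, implemented in $\CONGEST$, produces an acyclic orientation of $G$ with out-degree at most $C\lambda$ in $O(\log_C n)$ rounds, for any constant $C \ge 3$. Feeding that orientation into Theorem~\ref{thm:triangle-orientation} enumerates all triangles in an additional $O(C\lambda/\log n + \log(C\lambda))$ rounds, for a total cost of
\[
T(C) \;=\; O\!\left(\frac{\log n}{\log C} + \frac{C\lambda}{\log n} + \log(C\lambda)\right).
\]

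The remaining task is to tune $C$. When $\lambda = \Omega(\log^2 n)$, fix $C = 3$: preprocessing then costs $O(\log n) = O(\lambda/\log n)$, and using $\log\lambda \le \log n \le \lambda/\log n$, the triangle-enumeration step also costs $O(\lambda/\log n)$, so $T = O(\lambda/\log n)$ as claimed. When $\lambda = o(\log^2 n)$, choose $C$ so that $C \log C \approx (\log^2 n)/\lambda$ (rounded up to $3$ near the boundary where $(\log^2 n)/\lambda$ drops below $3\log 3$). With this choice, $C\lambda/\log n = \log n/\log C$ and $\log C = \Theta(\log((\log^2 n)/\lambda))$, so the first two terms of $T(C)$ each become $O(\log n/\log((\log^2 n)/\lambda))$, as desired.

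The only thing left to verify is that the stray $\log(C\lambda)$ term does not dominate in the second regime. Since $C\lambda \le \log^2 n$, this term is at most $O(\log\log n)$; meanwhile $\log n/\log((\log^2 n)/\lambda) \ge \log n/(2\log\log n)$ in the regime $\lambda = o(\log^2 n)$, which is $\omega(\log\log n)$ for large $n$, so the residual is absorbed. The main conceptual step is thus just the parameter optimization; the only mild obstacle is confirming that the $\CONGEST$ implementation of the peeling algorithm achieves $O(\log_C n)$ round complexity for arbitrary constants $C \ge 3$, which is standard: in each of $\log_C n$ phases, every vertex of current-degree at most $C\lambda$ can commit to an orientation of its incident edges toward higher-degree neighbors and be removed, since at least a $(1-1/C)$-fraction of surviving vertices have degree at most $C\lambda$ in any subgraph of arboricity $\lambda$.
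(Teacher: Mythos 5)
Your proof is correct and follows essentially the same route as the paper: run the peeling algorithm to get a $C\lambda$ (the paper's $\gamma\lambda$) orientation in $O(\log_C n)$ rounds, invoke Theorem~\ref{thm:triangle-orientation}, and choose $C$ to balance $\log_C n$ against $C\lambda/\log n$, i.e., set $C\log C = \Theta(\log^2 n/\lambda)$. You also explicitly check that the residual $\log(C\lambda)$ term is dominated, which the paper leaves implicit but is worth spelling out as you did.
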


\begin{proof}
The algorithm computes a $\gamma\cdot \lambda$ orientation in $O(\log_\gamma n)$ time and then applies Theorem~\ref{thm:triangle-orientation} to solve
local triangle enumeration in 
$O(\gamma\lambda/\log n + \log(\gamma\lambda))$ time.
The only question is how to set $\gamma$.  
If $\lambda = \Omega(\log^2 n)$ we set $\gamma=3$,
making the total time $O(\lambda/\log n)$, which is optimal~\cite{IzumiG17}.   Otherwise we choose $\gamma$ to balance the $\log_\gamma n$ and $\gamma\lambda/\log n$ terms, so that
\[
\gamma\log\gamma = \log^2 n/\lambda
\]
Thus, the total running time is slightly sublogarithmic $O(\log n / \log(\log^2 n/\lambda))$.  Specifically, it is $O(\log n/\log\log n)$ whenever $\lambda < \log^{2-\epsilon} n$.
\end{proof}

\section{Conclusions and Open Problems}\label{sect:conclusion}

We have established a new three-way tradeoff between rounds, communication, and error probability for many fundamental problems in communication complexity such as $\SetDisjointness$ and $\EqualityTesting$.  Our lower bound is largely incomparable to the round-communication lower bounds 
of~\cite{SaglamT13,BrodyCKWY16}, and stylistically very different from both~\cite{SaglamT13} and \cite{BrodyCKWY16}.  We believe that our method \emph{can} be 
extended to recover \Saglam{} and Tardos's~\cite{SaglamT13} tradeoff (in the constant error probability regime), but with a more ``direct'' proof
that avoids some technical difficulties arising from their round-elimination technique.
It is still open whether $\EqualityTesting$ can be solved in $r$ rounds with precisely $O(Ek^{1/r})$ 
communication and error probability $2^{-E}<2^{-k}$.  Our algorithms match this lower bound
only when $r=O(1)$ or $r=\Omega(\log k)$, or for any $r$ when solving the easier $\ExistsEqual$ problem.

We developed some $\CONGEST$ algorithms for triangle enumeration that employ two-party $\SetIntersection$
protocols.  It is known that this strategy is suboptimal when $\Delta \gg n^{1/3}$~\cite{ChangPZ19,ChangS19}.  However, for the \emph{local} triangle enumeration problem,\footnote{Every triangle must be reported by one of its three constituent vertices.}
our $O(\Delta/\log n + \log\log\Delta)$ algorithm is optimal~\cite{IzumiG17} 
for every $\Delta = \Omega(\log n\log\log\log n)$.  
Whether there are faster algorithms for triangle \emph{detection}\footnote{At least one vertex must announce there is a triangle; there is no obligation to list them all.} 
is an intriguing open problem.  It is known that 1-round $\LOCAL$ algorithms
must send messages of $\Omega(\Delta\log n)$ bits 
deterministically~\cite{AbboudCKL17} or $\Omega(\Delta)$ bits randomized~\cite{FischerGKO18}.
Even for \underline{$2$-round} triangle detection algorithms,
there are no nontrivial communication lower bounds known.

\bibliographystyle{alpha}
\bibliography{ref}

\appendix

\section{Reductions and Near Equivalences}\label{sect:reductions}

Brody et al.~\cite{BrodyCKWY16} proved that $\SetIntersection$ on sets of size $k$ is reducible to $\EqualityTesting$ on vectors of length $O(k)$, at the cost of one round and $O(k)$ bits of communication.  
However, the reduction is \emph{randomized} and fails with 
probability at least $\exp(-\tilde{O}(\sqrt{k}))$.  This is the probability that when $k$ balls are thrown uniformly at random into
$k$ bins, some bin contains $\omega(\sqrt{k})$ balls.

Recall the statement of Theorem~\ref{thm:Equivalence-Int-Eq}:
\begin{align*}
\EQ(k,r,\perr) &\leq \SetI(k,r,\perr), 
& \SetI(k,r+1,\perr) &\leq \EQ(k,r,\perr) + \zeta,\\
\ExistsEQ(k,r,\perr) &\leq \SetD(k,r,\perr), 
&\SetD(k,r+1,\perr) &\leq \ExistsEQ(k,r,\perr) +\zeta,
\end{align*}
where $\zeta = O(k + \log\log\perr^{-1})$.
In other words, under any error regime $\perr$, 
the communication complexity of $\SetIntersection$ and $\EqualityTesting$ 
are the same, up to one round and $O(k + \log\log\perr^{-1})$ bits of communication,
and that the same relationship holds
between $\SetDisjointness$ and $\ExistsEqual$.  
The proof is inspired by the probabilistic reduction of 
Brody et al.~\cite{BrodyCKWY16}, but uses succinct encodings of 
perfect hash functions rather than random hash functions.

\begin{proof}[Proof of Theorem~\ref{thm:Equivalence-Int-Eq}]
The leftmost inequalities have been observed before~\cite{SaglamT13,BrodyCKWY16}.
Given inputs $x,y$ to $\ExistsEqual$ or $\EqualityTesting$, 
Alice and Bob generate sets $A = \{(1,x_1),\ldots,(k,x_k)\}$ and 
$B = \{(1,y_1),\ldots,(k,y_k)\}$ 
before the first round of communication
and then proceed to solve $\SetIntersection$ or $\SetDisjointness$ 
on $(A,B)$.
Knowing $A\cap B$ or whether $A\cap B = \emptyset$ clearly 
allows them to determine the correct output of $\EqualityTesting$ or $\ExistsEqual$ on $(x,y)$.

The reverse direction is slightly more complicated.
Let $(A,B)$ be the instance of $\SetIntersection$ or $\SetDisjointness$
over a universe $U$ with size at most $|U| {} =O(k^2/\perr)$.
Alice examines her set $A$, and picks a 
\emph{perfect} hash function 
$h : U \mapsto [k]$ for $A$,
i.e., $h$ is injective on $A$. 
(This can be done in $O(k)$ time, in expectation,
using only \emph{private} randomness.
In principle Alice could do this 
step deterministically, given sufficient time.)
Most importantly, $h$ can be described using 
$O(k + \log\log|U|) = O(k + \log\log\perr^{-1})$ 
bits~\cite{SchmidtS90}, using a variant
of the Fredman-\Komlos-\Szemeredi~\cite{FredmanKS84} 
2-level perfect hashing 
scheme.\footnote{We sketch how the encoding of $h$ works, for completeness.  First, pick
a function $h' : U \mapsto [O(k^2)]$ that is 
collision-free on $A$.  Fredman et al.~\cite{FredmanKS84}
proved that a function of the form 
$h'(x) = (ax \mod p) \mod O(k^2)$ works with constant probability,
where $p = \Omega(k^2\log|U|)$ is prime and $a\in [0,p)$ is random.
Pick another function $h_{*} : [O(k^2)]\mapsto [k]$ 
that has at most twice the expected
number of collisions on $A$, namely $2\cdot {k\choose 2}/k < k$, 
and partition $A$ into $k$ buckets $A_j = A\cap h_*^{-1}(j)$. 
The sizes $|A_0|, |A_1|, \ldots, |A_{k-1}|$
can be encoded with $2k$ bits.  
We now pick $O(\log k)$ pairwise independent 
hash functions $h_1, h_2, \ldots, h_{O(\log k)} : [O(k^2)]\mapsto [O(k^2)]$.  For each bucket $A_j$,
we define $h_{(j)}$ to be the function with the minimum 
$i$ for which $h_{(j)}(x) = h_i(x) \mod {} |A_j|^2$ is 
injective on $A_j$.  In order to encode which function
$h_{(j)}$ is (given that $h_1,\ldots,h_{O(\log k)}$ 
are fixed and that $|A_j|$ is known),
we simply need to write $i$ in unary, i.e., 
using the bit-string $0^{i-1}1$.  
This takes less than 2 bits per $j$ in expectation since
each $h_i$ is collision-free on $A_j$ with probability at least 1/2.  Combining $h',h_*, |A_0|,\ldots,|A_{k-1}|$ 
and $h_{(0)},\ldots,h_{(k-1)}$ into a single injective function from 
$U \mapsto [O(k)]$ is straightforward,
and done exactly as in~\cite{FredmanKS84}.  
By marking which elements in this range are actually 
used ($O(k)$ more bits), we can generate the perfect 
$h : U\mapsto [k]$ whose range has size
precisely $k$.
Encoding $h'$ takes $O(\log k+\log\log|U|)$ bits and 
encoding $h_*$ takes $O(\log k)$ bits. 
The distribution $|A_0|,\ldots,|A_{k-1}|$ can be encoded with $2k$ bits.
The functions $h_1,\ldots,h_{O(\log k)}$ can be encoded in $O(\log^2 k)$ bits,
and the functions $h_{(0)},\ldots,h_{(k-1)}$ with less than $2k$ bits in expectation.
}
Alice sends the $O(k+\log\log\perr^{-1})$-bit description of 
$h$ to Bob.
Bob calculates $B_j = B\cap h^{-1}(j)$
and responds to Alice with the distribution $|B_0|,|B_1|,\ldots,|B_{k-1}|$, 
which takes at most $2k$ bits.
They can now generate an instance of 
Equality Testing
where the $k$ equality tests are the 
pairs $A_0 \times B_0, A_1\times B_1, \ldots, A_{k-1}\times B_{k-1}$.
By construction, $A_j = A\cap h^{-1}(j)$ is a 1-element set.
There is clearly 
a 1-1 correspondence between equal pairs and elements in $A\cap B$.  
We have Bob speak first in the $\EqualityTesting$/$\ExistsEqual$ protocol; 
thus, the overhead for this reduction is just 1 
round of communication and $O(k+\log\log\perr^{-1})$ bits.
\end{proof}

\end{document}